
\documentclass[letterpaper]{article}

\usepackage{microtype}
\usepackage{graphicx}
\usepackage{subfigure}
\usepackage{booktabs} 
\usepackage{enumitem}

\usepackage{hyperref}
\usepackage{bm}


\usepackage[accepted]{icml2023}


\usepackage{amsmath}
\usepackage{amssymb}
\usepackage{mathtools}
\usepackage{amsthm}


\usepackage[capitalize,noabbrev]{cleveref}

\theoremstyle{plain}
\newtheorem{theorem}{Theorem}[section]

\newtheorem{lemma}[theorem]{Lemma}
\newtheorem{corollary}[theorem]{Corollary}
\theoremstyle{definition}
\newtheorem{definition}[theorem]{Definition}

\theoremstyle{remark}
\newtheorem{remark}[theorem]{Remark}

\newcommand{\eqn}[1]{(\ref{eqn:#1})}

\newcommand{\thm}[1]{\hyperref[thm:#1]{Theorem~\ref*{thm:#1}}}
\newcommand{\cor}[1]{\hyperref[cor:#1]{Corollary~\ref*{cor:#1}}}
\newcommand{\define}[1]{\hyperref[def:#1]{Definition~\ref*{def:#1}}}
\newcommand{\defn}[1]{\hyperref[defn:#1]{Definition~\ref*{defn:#1}}}
\newcommand{\lem}[1]{\hyperref[lem:#1]{Lemma~\ref*{lem:#1}}}
\newcommand{\prop}[1]{\hyperref[prop:#1]{Proposition~\ref*{prop:#1}}}
\newcommand{\prob}[1]{\hyperref[prob:#1]{Problem~\ref*{prob:#1}}}
\newcommand{\assum}[1]{\hyperref[assum:#1]{Assumption~\ref*{assum:#1}}}
\newcommand{\fig}[1]{\hyperref[fig:#1]{Figure~\ref*{fig:#1}}}
\newcommand{\tab}[1]{\hyperref[tab:#1]{Table~\ref*{tab:#1}}}
\newcommand{\alg}[1]{\hyperref[alg:#1]{Algorithm~\ref*{alg:#1}}}
\renewcommand{\sec}[1]{\hyperref[sec:#1]{Section~\ref*{sec:#1}}}
\newcommand{\de}[1]{\hyperref[def:#1]{Definition~\ref*{def:#1}}}
\newcommand{\append}[1]{\hyperref[append:#1]{Appendix~\ref*{append:#1}}}
\newcommand{\fac}[1]{\hyperref[fac:#1]{Fact~\ref*{fac:#1}}}
\newcommand{\lin}[1]{\hyperref[lin:#1]{Line~\ref*{lin:#1}}}

\newcommand{\E}{\mathbb{E}}

\usepackage[textsize=tiny]{todonotes}

\icmltitlerunning{Near-Optimal Quantum Coreset Construction Algorithms for Clustering}


\DeclareMathOperator{\poly}{poly}
\DeclareMathOperator{\polylog}{polylog}
\DeclareMathOperator{\cost}{cost}

\DeclareMathOperator{\opt}{OPT}
\DeclareMathOperator{\Approx}{Approx}

\DeclareMathOperator{\dist}{dist}
\DeclareMathOperator{\round}{round}

\DeclareMathOperator{\tin}{tiny}
\DeclareMathOperator{\Var}{Var}
\DeclareMathOperator{\close}{close}
\DeclareMathOperator{\far}{far}
\DeclareMathOperator{\Angry}{Angry}
\DeclareMathOperator{\Happy}{Happy}

\usepackage{braket}


\let\epsilon\varepsilon

\begin{document}

\twocolumn[
\icmltitle{Near-Optimal Quantum Coreset Construction Algorithms for Clustering}



\icmlsetsymbol{equal}{*}

\begin{icmlauthorlist}
\icmlauthor{Yecheng Xue}{equal,cfcs}
\icmlauthor{Xiaoyu Chen}{equal,eecs}
\icmlauthor{Tongyang Li}{cfcs}
\icmlauthor{Shaofeng H.-C. Jiang}{cfcs}
\end{icmlauthorlist}

\icmlaffiliation{cfcs}{Center on Frontiers of Computing Studies, Peking University, Beijing, China}
\icmlaffiliation{eecs}{School of Electronics Engineering and Computer Science, Peking University, Beijing, China}

\icmlcorrespondingauthor{Shaofeng H.-C. Jiang}{shaofeng.jiang@pku.edu.cn}
\icmlcorrespondingauthor{Tongyang Li}{tongyangli@pku.edu.cn}

\icmlkeywords{Machine Learning, ICML}

\vskip 0.3in
]



\printAffiliationsAndNotice{\icmlEqualContribution} 

\begin{abstract}
    $k$-Clustering in $\mathbb{R}^d$ (e.g., $k$-median and $k$-means) is a fundamental machine learning problem. While near-linear time approximation algorithms were known in the classical setting for a dataset with cardinality $n$, it remains open to find sublinear-time quantum algorithms.
    We give quantum algorithms that find coresets for $k$-clustering in $\mathbb{R}^d$ with $\tilde{O}(\sqrt{nk}d^{3/2})$ query complexity.
    Our coreset reduces the input size from $n$ to $\mathrm{poly}(k\epsilon^{-1}d)$, so that existing $\alpha$-approximation algorithms for clustering can run on top of it and yield $(1 + \epsilon)\alpha$-approximation.
    This eventually yields a quadratic speedup for various $k$-clustering approximation algorithms.
    We complement our algorithm with a nearly matching lower bound,
    that any quantum algorithm must make $\Omega(\sqrt{nk})$ queries in order to achieve even $O(1)$-approximation for $k$-clustering.
\end{abstract}

\section{Introduction}

Clustering is a fundamental machine learning task that has been extensively studied in areas including computer science and operations research.
A typical clustering problem is $k$-median clustering in $\mathbb{R}^d$.
In $k$-median clustering, we are given a set of data points $D \subseteq \mathbb{R}^d$ and an integer parameter $k$, and the goal is to find a set $C \subset \mathbb{R}^d$ of $k$ points, called the center set, such that the following objective is minimized:
\begin{equation}
    \label{eqn:cost}
    \cost(D, C) := \sum_{x \in D}{\dist(x, C)},
\end{equation}
here $\dist(x, y) := \|x - y\|_2$, $\dist(x, C) := \min\limits_{c \in C}{\dist(x, c)}$.

In the classical setting, 
$k$-median clustering is shown to be NP-hard~\cite{DBLP:journals/siamcomp/MegiddoS84} even in the planar case (i.e., Euclidean $\mathbb{R}^2$),
and polynomial time approximation algorithms were the focus of study.
Furthermore, even allowing $O(1)$-approximation,
a fundamental barrier is still that the algorithm must make $\Omega(nk)$ accesses to the distances between the $n$ input points~\cite{MettuP04}.

In this paper, we study quantum algorithm and complexity for $k$-median clustering (and more generally, $(k, z)$-clustering as in \de{kzc}).
This is motivated by quantum algorithms for various related data analysis problems, such as classification~\cite{kapoor2016quantum,li2019sublinear,li2021sublinear}, nearest neighbor search~\cite{wiebe2015quantum}, support vector machine~\cite{rebentrost2014QSVM}, etc. Many of these quantum algorithms are originated from the Grover algorithm~\cite{grover1996fast}, which can find an item in a data set of cardinality $n$ in time $O(\sqrt{n})$, a quadratic quantum speedup compared to the classical counterpart.
Hence, a natural question is whether quantum algorithms can break the aforementioned classical $\Omega(nk)$ lower bound for $k$-median clustering, with a practical goal of achieving complexity $O(\sqrt{nk})$, while still achieving $O(1)$ or even $(1 + \epsilon)$-approximation.

\paragraph{Coresets}
To this end, we consider constructing coresets~\citep{Har-PeledM04,feldman2011unified} by quantum algorithms.  
The coreset is a powerful technique for dealing with clustering problems.
Roughly speaking, an $\epsilon$-coreset is a tiny proxy of the potentially huge data set,
such that the clustering cost on any center set is preserved within $\epsilon$ relative error.
For $k$-median, an $\epsilon$-coreset of size $\poly(k\epsilon^{-1})$ has been known to exist (see e.g., \citealp{DBLP:conf/focs/SohlerW18,HuangV20,BravermanJKW21,cohen2021new,Cohen-AddadLSS22,SSLCS22}),
which is independent of both the dimension $d$ and size $n$ of the data set.
Once such a coreset is constructed, one can approximate $k$-means efficiently by plugging in existing approximation algorithms (so that the input size is reduced to the size of the coreset which is only $\poly(k \epsilon^{-1})$).
In addition, coresets can also be applied to clustering algorithms in sublinear settings
such as streaming~\citep{Har-PeledM04}, distributed computing~\citep{BalcanEL13}, and dynamic algorithms~\citep{HenzingerK20}.

\paragraph{Contributions}
We propose the first quantum algorithm for constructing coresets for $k$-median
that runs in $\tilde{O}\left(\sqrt{nk}\right)$ time,\footnote{In this paper, we use $\tilde{O}$ to omit poly-logarithmic terms in $O$.} which breaks the fundamental linear barrier of classical algorithms.

\begin{theorem}[Informal version of \thm{coreset}]
\label{thm:intro_ub}
There exists an quantum algorithm that given $\varepsilon > 0$ and an $n$-point data set $D \subset \mathbb{R}^d$, returns an $\varepsilon$-coreset of size $\tilde{O}\left( kd\polylog(n)/\varepsilon^{2} \right)$ for $k$-median over $D$ with success probability at least $2/3$, with query complexity $\tilde{O}\left(\sqrt{nk}d^{3/2}/\varepsilon \right)$ and additional $\poly\left(kd\log n/\varepsilon\right)$ processing time.
\end{theorem}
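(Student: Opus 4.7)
My plan is to build a quantum realization of the classical Feldman--Langberg sensitivity-sampling framework. That framework proceeds by (i) computing a bicriteria $O(1)$-approximation $A$ to $k$-median, (ii) using $A$ to derive an upper bound $s(x)$ on each point's sensitivity, typically $s(x)\propto \dist(x,A)/\cost(D,A)+1/|A(x)|$, and (iii) drawing $N=\tilde{O}(kd\polylog n/\epsilon^{2})$ independent samples with probability proportional to $s(x)$, re-weighted by $S/(Np(x))$, where $S=\sum_x s(x)$. Correctness as an $\epsilon$-coreset then follows from the standard weighted sample-complexity bound. To hit the stated query budget $\tilde{O}(\sqrt{nk}\,d^{3/2}/\epsilon)$, I would quantize each of the three steps so that no step individually exceeds this budget.

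For step (i), I would use a quantum analog of $D$-sampling ($k$-median$^{++}$): over $O(\log n)$ rounds, augment $A$ by a batch of $O(k)$ points drawn with probability proportional to $\dist(x,A)$. Evaluating $\dist(x,A)$ at a single $x$ costs $\tilde{O}(\sqrt{|A|}\,d^{1/2})$ via Grover minimization over the current centers, with the $d^{1/2}$ factor absorbing coherent Euclidean arithmetic; drawing one sample from the $\dist(\cdot,A)$-weighted distribution needs $\tilde{O}(\sqrt{n})$ extra queries via a Grover-type sampler. For step (ii), $\cost(D,A)$ is a sum of $n$ bounded terms, so quantum mean/sum estimation approximates it to $(1\pm\epsilon)$ in $\tilde{O}(\sqrt{n}/\epsilon)$ calls, each of cost $\tilde{O}(\sqrt{k}\,d)$; the cluster sizes $|A_i|$ are obtained analogously by quantum approximate counting. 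Combining these yields an overestimate $\hat s(x)$ that matches $s(x)$ within a constant factor while $\hat S=O(S)$.

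For step (iii), the per-point weight $\hat s(x)$ can be evaluated on the fly from the stored $A$ together with the pre-computed $\cost(D,A)$ and $\{|A_i|\}$, so a standard Grover/amplitude-amplification sampler draws one point in $\tilde{O}(\sqrt{n})$ queries; the $N$ samples can be amortized (or batched via a single coherent preparation of the sensitivity distribution) to stay within the query budget. Classical post-processing --- assigning weights $\hat S/(N\hat s(x))$ and deduplicating --- is done in $\poly(kd\log n/\epsilon)$ time. Correctness of the output set as an $\epsilon$-coreset then reduces to invoking the Feldman--Langberg / Cohen-Addad et al.\ sample-complexity theorem for weighted sensitivity sampling.

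The main obstacle I anticipate is careful error propagation through the three layers of approximation: sensitivity sampling only needs the pointwise over-estimate property $\hat s(x)\ge \Omega(s(x))$ together with $\hat S=O(S)$, but $\hat s(x)$ is built from \emph{approximate} distances and an \emph{approximate} total cost, so I must argue that after both kinds of error are absorbed the required inequalities still hold. A related subtlety is that Grover-based subroutines are bounded-error rather than exact, which I would handle by boosting each subroutine to failure probability $1/\poly(kd\log n/\epsilon)$ and union-bounding over the $O(\log n)$ approximation rounds and the $N$ samples; the resulting overheads are absorbed by the $\tilde{O}(\cdot)$ notation.
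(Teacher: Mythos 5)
Your high-level plan (bicriteria $O(1)$-approximation $\to$ sensitivity upper bounds $\to$ importance-sample and reweight) is the right shape, and it mirrors the paper's structure, although the paper implements step (iii) via the ring/group decomposition of Cohen-Addad--Saulpic--Schwiegelshohn rather than classical Feldman--Langberg sensitivity sampling. That is a legitimate alternate route. However, two of the quantum subroutines you invoke do not fit inside the stated query budget, and both are precisely the places where the paper introduces its main technical machinery.

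First, you evaluate $\dist(x,A)$ by Grover minimum-finding over the $|A|=\tilde O(k)$ current centers, at cost $\tilde O(\sqrt{|A|}\cdot d)$ per point (the $d^{1/2}$ you wrote is too optimistic --- reading a center's coordinates from QRAM already costs $\Theta(d)$). Every stage of the algorithm --- the $D$-sampling rounds, the cost estimation, and the $N$-sample preparation --- calls this nearest-center oracle inside an amplitude-amplification loop over $n$ items. Using the best known quantum batched-sampling bound, preparing $N=\tilde O(kd/\epsilon^2)$ samples from a weight oracle costs $\tilde O(\sqrt{nN})=\tilde O(\sqrt{nkd}/\epsilon)$ oracle calls; each call costing $\tilde O(\sqrt{k}\,d)$ gives $\tilde O(\sqrt{n}\,k\,d^{3/2}/\epsilon)$, off from the claimed $\tilde O(\sqrt{nk}\,d^{3/2}/\epsilon)$ by a $\sqrt k$ factor. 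The paper avoids this by replacing Grover minimization with a QRAM-backed locality-sensitive-hashing structure (\lem{qANN}) whose query cost is $\tilde O(d)$, independent of $|A|$ up to polylogs; this is not an optimization, it is load-bearing for the final bound.

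Second, you say the cluster sizes $|A_i|$ are obtained "analogously by quantum approximate counting," but running Brassard--H{\o}yer--Mosca--Tapp counting separately for each of the $m=\tilde O(k)$ clusters costs $\tilde O(\sqrt{n/n_i}/\epsilon)$ per cluster, and $\sum_i \sqrt{n/n_i}$ can be $\Omega(m\sqrt n)$ when many clusters are small. That is $\tilde O(k\sqrt n/\epsilon)$ in the worst case, again a $\sqrt k$ factor over budget. The paper's way around this is its multidimensional quantum counting/summation subroutine (\thm{mqsum}, \thm{mqcounting}): a hierarchical refinement of multidimensional amplitude estimation that estimates all $m$ cluster sizes and all $m$ cluster costs simultaneously to relative error $\epsilon$ in $\tilde O(\sqrt{nm/\epsilon})$ queries. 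This subroutine --- one of the paper's main contributions --- is exactly the thing your "analogously" hand-waves past, and without it (or something comparable) the proposal does not meet the claimed complexity. Your error-propagation and failure-probability-boosting remarks are fine and match the paper's treatment; the gap is in the two complexity claims above.
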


Our coreset construction also yields coresets of a similar size for the related $k$-median clustering problem (and more generally, ($k$, $z$)-clustering, see \define{kzc}), using the same order of query and processing time (see \thm{coreset}).
The size bound of our coreset stated in \thm{intro_ub} may not be optimal,
but since it already has size $\poly(k\epsilon^{-1}d)$,
one can trivially apply the state-of-the-art classical coreset construction algorithm on top of our coreset to obtain improved bounds. For instance, we can obtain coresets for $k$-means of size $O\left(k\epsilon^{-4}\right)$ using~\citet{Cohen-AddadLSS22},
and an alternative size bound of $O\left(k^{1.5}\epsilon^{-2}\right)$ using~\citet{SSLCS22}.
These require the same order of query and processing time as in \thm{intro_ub}.

In addition, by a similar argument, our coreset readily implies efficient approximation algorithms for clustering problems.
In particular, one constructs an $\epsilon$-coreset $S$ as in \thm{intro_ub} and
applies an existing $\alpha$-approximate algorithm with input $S$, then it yields an $O((1 + \epsilon) \alpha)$-approximation to the original problem.
The query complexity of the entire process
remains the same as in \thm{intro_ub},
and it only incurs additional $T(\poly(k \epsilon^{-1}\log n))$ processing time, provided that the approximation algorithm runs in $T(n)$ time for an $n$-point dataset.
This particularly implies a quantum PTAS (for fixed $d$) for $k$-median and $k$-means using $\tilde{O}\left(\sqrt{nk}d^{3/2}/\varepsilon\right)$ queries,
and $\poly(k) \cdot f(d, \epsilon) $ processing time for some function $f$ of $d$ and $\epsilon$, by applying~\citet{DBLP:journals/jacm/Cohen-AddadFS21}.

Our quantum algorithms (and lower bounds) for clustering are summarized in \tab{main}.

\begin{table*}[!htp]
  \begin{center}
  \caption{Classical and quantum complexity bounds for clustering in $\mathbb{R}^d$ (omitting dependence in $d$). Note that an $O(1)$-coreset for $k$-median (resp. $k$-means) implies an $O(1)$-approximate solution for $k$-median (resp. $k$-means).}\label{tab:main}
  \begin{tabular}{lllll}
    \toprule
    Reference & Type & Problem & Time Complexity \\
    \midrule
    \citet{BF17} & Classical upper bound  & $\epsilon$-coreset for $k$-median &  $\tilde{O}(n+\poly(k))$ \\
    \thm{coreset} & Quantum algorithm & $\epsilon$-coreset for $k$-median & $\tilde{O}(\sqrt{nk}/\varepsilon)$ \\
    \thm{coreset} & Quantum algorithm & $\epsilon$-coreset for $k$-means & $\tilde{O}(\sqrt{nk}/\varepsilon)$  \\
    \thm{lower-clustering-main} & Quantum lower bound & $O(1)$-approximate $k$-median  & $\Omega(\sqrt{nk})$ \\
    \thm{lower-clustering-main} & Quantum lower bound & $O(1)$-approximate $k$-means  & $\Omega(\sqrt{nk})$ \\
    \bottomrule
  \end{tabular}
  \end{center}
\vspace{-2mm}
\end{table*}

\paragraph{Techniques}
The general idea of our algorithm is to quantize and combine two existing algorithms, the approximate algorithm by~\cite{thorup2001quick} and a recent seminal coreset construction algorithm by~\citet{cohen2021new}. 

In a high level, we start with computing a bicriteria solution which uses slightly more than $k$ points to achieve $O(1)$-approximation to $\opt$. This step is based on~\cite{thorup2001quick}. Then given this solution, we partition the dataset $D$ into groups, perform a sampling procedure in each group, and re-weight the sampled points to form the coreset, following the idea in~\cite{cohen2021new}.

For the bicriteria approximation, we provide a quantum implementation for the algorithm by~\citet{thorup2001quick} in \sec{bicriteria} to obtain a solution that has $O(k \poly\log n)$ points with cost being $O(1)$ multiple of $\opt$. A key step in this algorithm is to query for the nearest neighbor of each data point $x \in D$ in a given point set with size $O(k)$. This step is straightforward in the classical setting with cost $O(nk)$ by calculating the exact nearest neighbor and store the results. However, to achieve the $\tilde{O}\left(\sqrt{nk}\right)$ complexity in the quantum setting, we need improvements on nearest neighbor search. We make use of a well-known approximate nearest neighbor search technique called \emph{locality sensitive hashing} (LSH), and the version that we use gives $2(1+\varepsilon)$-approximate nearest neighbor using $N^{\poly(1/\varepsilon)}$ preprocessing time and $\tilde{O}(d)\cdot\poly(1/\varepsilon)$ query time for $N$ points in $\mathbb{R}^d$~\cite{indyk1998approximate}.
In \sec{coreset_details}, we also use our quantum implementation of LSH (\lem{qANN}) to construct a unitary that encodes the clusters induced by the approximate solution, i.e., maps each $x \in D$ to a corresponding center.

After we obtain a bicriteria approximation $A$ in the previous step,
we adapt the algorithm in~\cite{cohen2021new} to build the coreset, where the dataset $D$ is partitioned into $\tilde{O}(z^2/\varepsilon^2)$ groups with respect to $A$.
In this partition procedure, a key step is to calculate $\cost(C_i,A) = \sum_{x\in C_i}\dist^z(x,A)$ for each cluster $C_i$ induced by $A$.
While this seems simple in the classical setting where one directly computes the cost of each data point and summing up over each cluster in $O(nk)$ time,
this task is nontrivial in quantum since we aim for sublinear complexity.
To design a sublinear quantum algorithm that approximately computes the cost of all clusters simultaneously, 
we propose a new subroutine called multidimensional quantum approximate summation (\thm{mqsum}).
Specifically, given an oracle $O:\ket{i}\ket{0}\ket{0}\rightarrow \ket{i}\ket{\tau(i)}\ket{f(i)}$, where $\tau\colon[n]\rightarrow[m]$ is a partition and $f\colon[n]\rightarrow \mathbb{R}_{\geq 0}$ is a bounded function, \thm{mqsum} shows that using in total $\tilde{O}\left(\sqrt{nm/\varepsilon}\right)$ queries to $O_{\tau}$ one can obtain $\varepsilon$-estimation of $\sum_{\tau(i) = j}f(i)$ for each $j \in [m]$. This algorithm may be of independent interest.

To complement our algorithm results, we also prove quantum lower bounds for approximate $k$-means clustering. We consider three settings in which an $\varepsilon$-coreset, an $\varepsilon$-optimal set of centers, or an $\varepsilon$-estimate of the optimal clustering cost are outputted, and we prove $\Omega\left(\sqrt{nk}\varepsilon^{-1/2}\right),\Omega\left(\sqrt{nk}\varepsilon^{-1/6}\right)$ and $\Omega\left(\sqrt{nk}+\sqrt{n}\varepsilon^{-1/2}\right)$ lower bounds, respectively. These quantum lower bounds confirm that our quantum algorithms for clustering problems are \emph{near-optimal in $n$ and $k$}, up to a logarithmic factor.
In general, we start with proving the quantum lower bounds when $k=1$ by reducing from the approximate quantum counting problem~\cite{NayakW99}. We then obtain the general bounds with $\sqrt k$ factors by applying composition theorems (\thm{composition_theorem}) in a refined manner. We speculate that the gap between different settings might be intrinsic, which is also discussed in a recent paper \cite{CharikarW22}. 

\paragraph{Related work}
In general, quantum algorithms for machine learning are of general interest~\cite{biamonte2017quantum,schuld2018supervised,dunjko2018machine}. 
We compare our results to existing literature in quantum machine learning as follows.

\citet{aimeur2007quantum} conducted an early study on quantum algorithms for clustering, including divisive clustering, $k$-median clustering, and neighborhood graph construction. Their $k$-median algorithm has complexity $O\left(n^{3/2}/\sqrt{k}\right)$, which is at least $n$ and slower than our quantum algorithm.

\citet{lloyd2013quantum} gave a quantum algorithm for cluster assignment and cluster finding with complexity $\poly(\log nd)$. However, their quantum algorithm requires the input data to be sparse with efficient access to nonzero elements, i.e., each of $x_{1},\ldots,x_{n}$ has $\poly(\log d)$ nonzero elements and we can access these coordinates in $\poly(\log d)$ time. In addition, their algorithm outputs quantum states instead of classical vectors. More caveats are listed in~\citet{aaronson2015read}.

The most relevant result is~\citet{kerenidis2019qmeans}, which gave an quantum algorithm named q-means for $k$-means clustering. Q-means has complexity $\tilde{O}(k^{2}d\eta^{2.5}\epsilon^{-3}+k^{2.5}\eta^{2}\epsilon^{-3})$ per iteration for well-clusterable datasets, where $\eta$ is a scaling factor for input data such that $1\leq\|x_{i}\|_{2}^{2}\leq\eta$ for all $i\in[n]$. For general datasets the complexity is larger and depends on condition number parameters. Q-means can be extended to spectral clustering~\cite{kerenidis2021quantum}. As a comparison, our quantum algorithm does not require the well-clusterable assumption (nor condition numbers related to this) and can be regarded as a direct speedup of common classical algorithms for $k$-means. In addition, our quantum algorithm only has $\poly(\log\eta)$ dependence, and the dependence on $k$ and $1/\epsilon$ is also better.

There are also heuristic quantum machine learning approaches for clustering~\cite{rigetti2017unsupervised,poggiali2022quantum} and other problems in data analysis~\cite{schuld2017implementing,farhi2018classification,KL18,havlivcek2019supervised}. These results do not have theoretical guarantees at the moment, and we look forward to their further developments on heuristic performances and provable guarantees. In addition,~\citet{chia2022sampling} proposed a quantum-inspired classical algorithm for 1-mean clustering with sampling access to input data.

\paragraph{Open questions}
Our work leaves several natural open questions for future investigation:
\begin{itemize}[leftmargin=*]
\item Can we give fast quantum coreset construction algorithms for other related clustering problems with complexity $\tilde{O}\left(\sqrt{nk}\right)$? Potential problems include fair clustering~\cite{Chierichetti0LV17}, capacitated clustering~\cite{cohenaddad2019capacitated,braverman2022power}, $k$-center clustering~\cite{agarwal2002exact}, etc.

\item Can we give fast quantum coreset construction algorithms for clustering problems in more general metric spaces? Note that~\citet{cohen2021new} gives the result for various metric space, such as doubling metrics, graphs, and general discrete metric spaces, while still achieving $\tilde{O}(nk)$ time in the classical setting.
However, to achieve this similar coreset size bound using time $O(\sqrt{nk})$ in the quantum setting seems nontrivial; for instance, one cannot make use of the LSH technique that we use to speed up the approximate nearest neighbor search.

\end{itemize}

\section{Preliminaries}
\subsection{Notations}
We give the notations and definitions used in the following. In this paper, we focus on the \emph{($k$, $z$)-clustering} problem in an Euclidean space:
\begin{definition}[($k$, $z$)-Clustering]\label{def:kzc}
Given a data set $D \subset \mathbb{R}^d$, $z \geq 1$ and integer $k \geq 1$, the ($k$, $z$)-clustering problem is to find a set $C \subset \mathbb{R}^d$ with size $k$, minimizing the cost function
$$
\cost(D,C) := \sum_{x \in D}\left(\dist(x,C)\right)^z.
$$
Here $\dist(x,y) := ||x-y||_2$, $\dist(x,C):=\min\limits_{c \in C}\dist(x,c)$.
\end{definition}
For $z= 1$ it is also known as the \emph{$k$-median} problem, and for $z = 2$ it is also known as the \emph{$k$-means} problem. Any set $C \subset \mathbb{R}^d$ with size $k$ can be seen as a \emph{solution}. Given a solution $C$, a point $c\in C$ is a \emph{center}. We can map each data point $x \in D$ to a certain $c \in C$, and the set $\{x\in D: x \mbox{ is mapped to } c\}$ is a \emph{cluster}. Let $\opt$ be the cost of the optimal solution. We assume the distance is rescaled so that the minimum intra-point distance is $1$, and we assume the diameter of the point set is $\poly(n)$. Hence, $\opt = \poly(n)$.

An important related concept is the \emph{coreset}:
\begin{definition}[Coreset]\label{def:coreset}
Given a data set $D \subset \mathbb{R}^d$,  $z \geq 1$ and integer $k \geq 1$,
a weighted set $S$ with weight function $w \colon S \to \mathbb{R}_+$ is called an $\epsilon$-coreset if 
\[
    \forall C \subset \mathbb{R}^d, |C| \leq k, \ \ 
    \cost(S,C) \in (1 \pm \epsilon) \cdot \cost(D,C)
\]
where $\cost(S,C) = \sum_{s\in S}w(s)\cdot \dist^z(s,C)$. 
\end{definition}
As a special case, we call $S$ unweighted if $w(s) = 1\ \forall s\in S$.

In this paper, we refer to an $\varepsilon$-estimation for a number $p$ by $\tilde{p}$ if $|\tilde{p}-p| \leq \varepsilon p$. We use $[n]$ for $\{1,\ldots,n\}$.

\subsection{Basics of Quantum Computing}
The basic unit of a classical computer is a bit, and in quantum computing it is a \emph{qubit}. Mathematically, a system of $m$ qubits forms an $M$-dimensional Hilbert space for $M = 2^m$. Any \emph{quantum state} $\ket{\phi}$ in this space can be written as
\begin{equation}\label{eqn:quantum_state}
\ket{\phi} = \sum_{i = 0}^{M-1} \alpha_i\ket{i},\ \text{ where }\sum_{i=0}^{M-1}|\alpha_i|^{2}=1.
\end{equation}
Here $\{\ket{0},\ldots,\ket{M-1}$\} forms an orthonormal basis in the Hilbert space called as the \emph{computational basis}, and $\alpha_i \in \mathbb{C}$ is called as the \emph{amplitude} of $\ket{i}$. Intuitively, the quantum state $\ket{i}$ can be regarded as a classical state $i$, and the quantum state $\ket{\phi}$ in \eqn{quantum_state} is a \emph{superposition} of classical states.
The operations in quantum computing are \emph{unitaries} matrices following the principles of linear algebra. Specifically, a unitary acting on an $M$-dimensional Hilbert space can be formulated as follows:
$$
U_f\colon \ket{i}\ket{0}\rightarrow \ket{i}\ket{f(i)}, \ \forall i \in \{0,\ldots,M-1\}.
$$
Note that due to linearity, $U_f$ works not only for the basis vectors $\{\ket{i}\}_{i = 0}^{M-1}$, but also for any quantum state in this Hilbert space. For example, by applying $U_f$ to $\ket{\phi}$ we obtain the following quantum state
$$
\ket{\phi} = \sum_{i = 0}^{M-1} \alpha_i\ket{i} \ \xrightarrow{U_f} \sum_{i = 0}^{M
-1}\alpha_i \ket{f(i)}.
$$
This allows us to perform calculations ``in parallel" and achieve the quantum speedup.

Quantum access to the input data is also unitary and can be encoded as a \emph{quantum oracle}. We state the definitions of the \emph{probability oracle} and the \emph{binary oracle} as follows:

\begin{definition}[Probability Oracle]\label{def:quantum_probability_oracle}
Let $p\colon [M]\rightarrow \mathbb{R}_{\geq 0}$ be a probability distribution. We say $O_p$ is a probability oracle for $p$ if
$$
O_p\colon \ket{0}\rightarrow \sum_{j \in [M]}\sqrt{p(j)}\ket{j}\ket{\phi_j},
$$
where $\ket{\phi_j}$ are arbitrary $\ell_{2}$-normalized vectors.
\end{definition}

\begin{definition}[Binary Oracle]\label{def:quantum_binary_oracle}
Let $D = \{x_1,\ldots,x_n\}$ be a subset of $\mathbb{R}^d$. We say $O_D$ is a binary oracle for $D$ if
$$
O_D\colon \ket{i}\ket{0}\rightarrow\ket{i}\ket{x_i}, \ \forall i \in [n].
$$
\end{definition}
The definition of the binary oracle also fits for any vector $w = (w_1,\ldots,w_N) \in \mathbb{R}^N$. Binary oracle is a common input model in quantum algorithms and we also call the binary oracle of $D$ (or $w$) as the quantum query to $D$ (or $w$). Besides, for two point sets $S\subset D \subset \mathbb{R}^d$, we say $O_S$ is the membership query to $S$ if
$$
O_S\colon \ket{x}\ket{0}\rightarrow\ket{x}\ket{I(x\in S)}, \ \forall x \in D
$$
where $I(x\in S)$ is the indicator for whether $x \in S$.

In a quantum algorithm, we can also write information to a \emph{quantum-readable classical-writable classical memory} (QRAM) and make it encoded as an oracle~\cite{giovannetti2008quantum}. We refer \emph{query complexity} as the number of queries to the input oracle and the QRAM. \emph{Time complexity} is referred as the total processing time, including all the use of queries, quantum gates, and classical operations.

\subsection{Quantum Speedup}
Here, we introduce basic problems which can be sped-up by quantum computing. Those tools are rudimentary to our quantum algorithms as well as others in machine learning.

\paragraph{Quantum Sampling}
In our quantum algorithms, we the following quantum sampling algorithm:
\begin{lemma}[Rephrased from Theorem 1 of~\citealt{hamoudi2022preparing}]\label{lem:qsampling}
 There is a quantum algorithm such that: given two integers $1 \leq m \leq n$, a real $\delta > 0$, and a non-zero vector $w \in \mathbb{R}_{\geq 0}^n$, with a probability at least $1-\delta$, the algorithm outputs a sample set $S$ of size $m$ such that each element $i \in [n]$ is sampled with probability proportional to $w_i$ using $O(\sqrt{nm}\log(1/\delta))$ quantum queries to $w$ in expectation. 
\end{lemma}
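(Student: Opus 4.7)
The plan is to build the $m$-sample routine on top of a quantum rejection-sampling primitive for a single draw, then amortise its cost across the $m$ outputs using a ``find multiple marked items'' template. First I would prepare the uniform superposition $\tfrac{1}{\sqrt{n}}\sum_{i=1}^{n}\ket{i}\ket{0}$, query the binary oracle for $w$ to produce $\tfrac{1}{\sqrt{n}}\sum_i\ket{i}\ket{w_i}$, and then apply a controlled rotation that sends an ancilla from $\ket{0}$ to $\sqrt{w_i/w_{\max}}\,\ket{1}+\sqrt{1-w_i/w_{\max}}\,\ket{0}$, where $w_{\max}:=\max_i w_i$ is learned to within a constant factor by quantum maximum finding in $O(\sqrt{n})$ queries. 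Measuring the ancilla yields $\ket{1}$ with probability $W/(nw_{\max})$ where $W:=\sum_i w_i$, and conditioned on success the outcome $i$ is distributed as $w_i/W$, as required.

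Drawing $m$ samples by naively repeating the above amplified routine costs $O(m\sqrt{n})$ queries, which is too expensive. To sharpen the bound to $O(\sqrt{nm})$ I would replace the threshold $w_{\max}$ in the controlled rotation by $W/m$ (any weights exceeding this threshold can be peeled off and sampled deterministically), which lifts the per-attempt success probability of the rejection step to $\Theta(m/n)$ and reduces the expected cost of one amplified draw to $O(\sqrt{n/m})$, so that $m$ such draws cost $O(\sqrt{nm})$ in total. Equivalently, one can view the procedure as finding $m$ marked items, where the $j$-th fresh draw takes $O(\sqrt{n/(t-j+1)})$ queries for $t=\Theta(W/w_{\max})$, and the telescoping sum $\sum_{j=1}^{m}\sqrt{n/(t-j+1)}=O(\sqrt{nm})$ whenever $m\le t$. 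Finally, $O(\log(1/\delta))$ independent outer repetitions with a majority check on the output size boost the success probability to $1-\delta$, giving the stated $O(\sqrt{nm}\log(1/\delta))$ bound in expectation.

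The main obstacle is the joint correctness argument: because successive draws share the same oracle and the same amplitude-amplification machinery, I must verify that each draw is marginally distributed as $w_i/W$ and that the emitted multiset is an honest sample rather than a biased one induced by the coupling. I would handle this by showing that, conditioned on previously drawn indices (written into QRAM so that the rejection rotation can zero them out of the superposition), the post-measurement state is exactly an unbiased quantum rejection sample from the residual weight vector, so iterating the primitive produces either a genuine sample with replacement or without replacement depending on whether the residual list is reset. A secondary technical point is that $W$ is not known a priori, which I would resolve by a geometric guessing stage: try thresholds $w_{\max}/2^\ell$ for $\ell=0,1,\ldots$ and stop at the first $\ell$ for which amplification succeeds, contributing at most an $O(\log n)$ overhead that is absorbed in the logarithmic factors of the final query bound.
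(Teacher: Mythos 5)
The paper does not prove this lemma; it is imported wholesale from \citet{hamoudi2022preparing} (Theorem~1 there), so there is no in-paper proof to compare against. Your sketch captures the right family of ideas (clipped-amplitude rejection sampling, a threshold around $W/m$, Grover-style peeling of heavy coordinates, exponential search for the unknown normalizer), but as written it does not actually establish the $O(\sqrt{nm})$ bound, and two of your worries are misdirected.

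The concrete gap is the claim that clipping at $T=W/m$ ``lifts the per-attempt success probability to $\Theta(m/n)$'' and hence each amplified draw costs $O(\sqrt{n/m})$. With clipping, the acceptance amplitude squared is $\frac{1}{n}\sum_i\min(w_i/T,1)\le m/n$, with equality only if no coordinate exceeds $T$; if a few heavy items carry almost all of $W$, the light mass $W_L$ can be far below $W$ and a single light draw costs $\Theta\bigl(\sqrt{nW/(mW_L)}\bigr)$, which is much larger than $\sqrt{n/m}$. The argument is rescued only by the expected-cost bookkeeping that you omit: the at most $m$ heavy indices are located once by repeated Grover search in $O(\sqrt{nm})$ queries and then sampled classically for free, and a light sample is needed only with probability $W_L/W$ per draw, so the expected light cost over $m$ draws is $m\cdot\frac{W_L}{W}\cdot O\bigl(\sqrt{nW/(mW_L)}\bigr)=O\bigl(\sqrt{nm\,W_L/W}\bigr)\le O(\sqrt{nm})$. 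Without that amortization the bound does not follow. Separately, the ``equivalently, finding $m$ marked items'' paragraph misapplies a template: for i.i.d.\ weighted sampling the $j$-th draw is a fresh run with per-draw cost $\Theta(\sqrt{nw_{\max}/W})$ that does not shrink in $j$, so the telescoping sum $\sum_j\sqrt{n/(t-j+1)}$ (which is the cost profile for exhausting $t$ \emph{distinct} marked items) is not the right quantity; at best it gives a looser upper bound in the regime $m\le W/w_{\max}$, which is the complement of the regime where peeling is needed, so the two viewpoints are a case split rather than equivalent. Finally, the ``coupling'' concern is moot: with-replacement draws are produced by independent quantum runs with fresh workspaces, and the only state shared across draws is classical (the heavy-item list and the estimate of $W$), so there is no bias to rule out.
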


\paragraph{Quantum Counting and Search}
In quantum computing, counting the number of points satisfying a specific property can be solved with quadratic speedup:
\begin{lemma}[Theorem 15 of \citealt{brassard1998quantum}]\label{lem:counting} 
There is a quantum algorithm such that given a real $\delta>0$ and two sets $S \subset D \subset \mathbb{R}^d$, $|S| = m$, $|D| = n$, it outputs an $\varepsilon$-estimation $\tilde{m}$ for $m$ with probability at least $1-\delta$ using $O(\varepsilon^{-1}\sqrt{n/m}\log(1/\delta))$ queries to $D$ and membership queries to $S$.
\end{lemma}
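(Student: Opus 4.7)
The plan is to invoke quantum amplitude estimation (due to Brassard, H{\o}yer, Mosca, and Tapp) on a carefully prepared ``good vs.\ bad'' unitary built from the data oracle for $D$ and the membership oracle for $S$. The quantity $m/n$ is precisely the probability of the good branch, so estimating it to multiplicative accuracy $\varepsilon$ gives an $\varepsilon$-estimate of $m$.

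Concretely, first I would construct a unitary $A$ that sends $\ket{0}\ket{0}\ket{0}$ to the state
\[
\frac{1}{\sqrt{n}} \sum_{i \in [n]} \ket{i}\ket{x_i}\ket{I(x_i \in S)}.
\]
This uses one query to $O_D$ (to write $\ket{x_i}$ into the second register, given $\ket{i}$ in the first) followed by one query to the membership oracle $O_S$, on top of a Hadamard-transform preparation of the uniform superposition over $[n]$. Reading off the last qubit, the probability of observing $\ket{1}$ is exactly $p := m/n$, so $A\ket{0}^{\otimes 3} = \sqrt{p}\,\ket{\text{good}} + \sqrt{1-p}\,\ket{\text{bad}}$.

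Next I would feed $A$ to quantum amplitude estimation, which, with $t$ Grover-type iterations $(A R_0 A^\dagger R_{\text{good}})$, outputs an estimate $\tilde{\theta}$ of the angle $\theta = \arcsin(\sqrt{p})$ satisfying $|\tilde{\theta} - \theta| \le \pi/t$ with constant probability. Setting $\tilde{p} = \sin^2(\tilde{\theta})$ and choosing $t = \Theta(\varepsilon^{-1}/\sin\theta) = \Theta(\varepsilon^{-1}\sqrt{n/m})$, a first-order expansion
\[
\sin^2(\theta + \eta) = \sin^2\theta + \eta \sin(2\theta) + O(\eta^2)
\]
yields $|\tilde{p} - p| \le O(\varepsilon)\cdot p$, so $\tilde{m} := n\tilde{p}$ is an $\varepsilon$-estimate of $m$. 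The total number of queries to $O_D$ and $O_S$ is $O(t) = O(\varepsilon^{-1}\sqrt{n/m})$.

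Finally, to upgrade the constant success probability to $1 - \delta$, I would run the above procedure $O(\log(1/\delta))$ times independently and take the median of the outputs; a standard Chernoff bound shows the median is $\varepsilon$-accurate with probability at least $1-\delta$, yielding the claimed $O(\varepsilon^{-1}\sqrt{n/m}\log(1/\delta))$ query bound. The main obstacle I anticipate is the error-conversion step: amplitude estimation natively guarantees additive error on $\theta$, and one must verify that, in the relevant regime $\theta \lesssim 1$, this translates into the desired \emph{multiplicative} error on $p = \sin^2\theta$ with the right $\sqrt{n/m}$ scaling — in particular ensuring that the quadratic remainder $O(\eta^2)$ does not dominate, which is where the precise choice $t = \Theta(\varepsilon^{-1}/\sin\theta)$ (rather than $\Theta(\varepsilon^{-1})$) is essential.
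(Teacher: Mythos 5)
This lemma is not proved in the paper; it is cited verbatim as Theorem~15 of Brassard, H{\o}yer, Mosca, and Tapp, so there is no in-paper proof to compare your sketch against. Your reconstruction does follow the actual route of that result: build a state-preparation unitary that marks membership in $S$, run amplitude estimation, convert the additive angle error to multiplicative amplitude error, and median-boost for confidence. The error-conversion analysis you give is correct, and you rightly flag that $t$ must scale like $\varepsilon^{-1}/\sin\theta$ rather than $\varepsilon^{-1}$ so that the quadratic remainder stays subdominant.

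There is, however, one genuine gap that you do not address. Your prescription $t = \Theta\left(\varepsilon^{-1}/\sin\theta\right) = \Theta\left(\varepsilon^{-1}\sqrt{n/m}\right)$ presupposes that $\sin\theta = \sqrt{m/n}$ is known when choosing the number of Grover iterations, but $m$ is exactly the quantity you are trying to estimate. As written, the algorithm is not well-defined. The standard fix, and the actual content of BHMT's Algorithm \textbf{Count}, is an exponential search: run amplitude estimation with $2^\ell$ iterations for $\ell = 1, 2, 3, \ldots$ until the returned estimate is nonzero (which occurs once $2^\ell \gtrsim \sqrt{n/m}$), thereby obtaining a constant-factor estimate of $m$ at expected cost $O\bigl(\sqrt{n/m}\bigr)$, and only then set the final iteration count $M = \Theta\bigl(\varepsilon^{-1}\sqrt{n/m}\bigr)$ using that rough estimate. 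This is also why BHMT state an expected query complexity. Without this adaptive phase, your proof is incomplete: you cannot simply "choose" $t$ to depend on the unknown answer. Note also that the naive alternative of first running a coarse estimate with a fixed $M_0 = \Theta(\sqrt{n})$ iterations does not work uniformly, since when $m \gg \varepsilon^{-2}$ the coarse phase would already exceed the target budget $\varepsilon^{-1}\sqrt{n/m}$.
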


Furthermore, quantum speedup can also be achieved with outputting all such points, known as repeated Grover search:
\begin{lemma}[Claim 2 of~\citealt{apers2020quantum}]\label{lem:getall}
There is a quantum algorithm such that given a real $\delta>0$ and two sets $S \subset D \subset \mathbb{R}^d$, $|S| = m$, $|D| = n$, it finds $S$ with probability at least $1-\delta$ using $\tilde{O}(\sqrt{nm}\log(1/\delta))$ queries to $D$ and membership queries to $S$. 
\end{lemma}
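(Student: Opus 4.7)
The plan is to use a repeated Grover search that iteratively extracts the elements of $S$ one by one. The key idea is to maintain a set $F \subseteq S$ of already-found points, stored in a QRAM so that membership in $F$ can be queried at unit quantum cost, and at each stage apply Grover's algorithm on $D$ with the marking predicate ``$x \in S$ and $x \notin F$''. Since $|S \setminus F|$ is known exactly throughout the process, each Grover invocation can be amplified to arbitrarily high success probability.

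More concretely, I would proceed as follows. Initialize $F \leftarrow \emptyset$, and for $i = 1, 2, \dots, m$ run Grover's algorithm on $D$ using the above marking predicate, implemented by one membership query to $S$ and one QRAM query to $F$. At iteration $i$ there are exactly $m - i + 1$ marked elements, so a single Grover invocation finds a new point using $O\bigl(\sqrt{n/(m - i + 1)}\bigr)$ queries. Repeating each Grover call $O(\log(m/\delta))$ times boosts its success probability to $1 - \delta/m$; a union bound over the $m$ iterations then yields overall success probability at least $1 - \delta$. After the $i$-th successful iteration, insert the new point into $F$ using one classical write to QRAM.

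The total query complexity is then
\begin{equation*}
O(\log(m/\delta)) \cdot \sum_{j=1}^{m} \sqrt{\frac{n}{j}}
\;=\; O\bigl(\sqrt{nm}\,\log(m/\delta)\bigr),
\end{equation*}
using the standard estimate $\sum_{j=1}^m 1/\sqrt{j} = O(\sqrt{m})$. After absorbing the $\log m$ factor into the $\tilde{O}$ notation, this matches the claimed $\tilde{O}(\sqrt{nm}\log(1/\delta))$ bound.

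The main technical subtlety is ensuring that the ``$x \notin F$'' test in the marking oracle can be implemented without blowing up the query budget; this is exactly what the quantum-readable classical-writable memory model (QRAM) described in the preliminaries affords. A secondary concern is that vanilla Grover's algorithm typically assumes one knows only an upper bound on the number of marked elements, so a naive implementation may overshoot or miss. Since $|S \setminus F| = m - i + 1$ is known exactly at each stage, one can instead apply the exact-amplitude-amplification variant of Brassard--H\o yer--Mosca--Tapp, which deterministically produces a marked element per call and removes the $\log(m/\delta)$ factor on individual calls (leaving only the outer amplification to handle the failure probability coming from QRAM and bookkeeping).
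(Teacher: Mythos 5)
Your proposal is correct. The paper does not prove this lemma itself but cites it from Apers and de Wolf (2020, Claim 2); your reconstruction via repeated Grover search with the already-found elements excluded through a QRAM-stored set $F$, boosting each invocation to success probability $1-\delta/m$ and summing $\sum_{j=1}^m \sqrt{n/j} = O(\sqrt{nm})$, is the standard argument underlying that cited claim and yields the stated $\tilde{O}(\sqrt{nm}\log(1/\delta))$ bound.
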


\paragraph{Quantum Sum Estimation}
Beyond counting and search, quadratic quantum speedup can also be achieved for estimating the sum of a set of numbers:
\begin{lemma}[Rephrased from Lemma 6 of~\citealt{li2019sublinear}]\label{lem:qsumestimation}
Consider $D = \{x_1,\ldots,x_n\} \subset \mathbb{R}^n_{\geq 0}$ and denote  $x = \sum_{i = 1}^N x_i$ as the sum of all the elements in $D$. There is a quantum algorithm such that given $\delta > 0$, it outputs $\tilde{x}$ as an $\varepsilon$-estimation for $x$ with probability at least $1 -\delta$, using $O(\sqrt{n}\log(1/\delta)/\varepsilon)$ queries to $D$. 
\end{lemma}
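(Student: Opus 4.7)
The plan is to reduce this to quantum amplitude estimation (Brassard--H\o yer--Mosca--Tapp). Assume first that the entries have been rescaled so that each $x_i \in [0,1]$; the scaling factor is known classically and divides out at the end. I would build a unitary $U$ that, using $O(1)$ queries to the binary oracle for $D$, prepares
\[
U\ket{0} \;=\; \frac{1}{\sqrt{n}}\sum_{i=1}^{n}\ket{i}\bigl(\sqrt{x_i}\,\ket{1}+\sqrt{1-x_i}\,\ket{0}\bigr).
\]
The standard recipe is: apply Hadamards to put the first register in uniform superposition, query the oracle into an ancilla to obtain $\ket{i}\ket{x_i}$, then apply a controlled rotation $\ket{x_i}\ket{0}\mapsto\ket{x_i}(\sqrt{x_i}\ket{1}+\sqrt{1-x_i}\ket{0})$, and finally uncompute $\ket{x_i}$. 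Measuring the flag qubit now yields $\ket{1}$ with probability exactly $p = x/n$.

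Next I would run quantum amplitude estimation on $U$ with the ``good'' subspace marked by flag $\ket{1}$. With $T$ Grover iterations, amplitude estimation returns $\tilde{p}$ satisfying $|\tilde p - p|\le O(\sqrt{p}/T + 1/T^2)$. Taking $T = \Theta(\sqrt{n/x}/\varepsilon)$ converts this into relative error $\varepsilon$ on $p$, and multiplying by $n$ gives an $\varepsilon$-estimate of $x$. In the regime where $x = \Omega(1)$ relative to the chosen normalization, this is $O(\sqrt{n}/\varepsilon)$ calls to $U$, hence $O(\sqrt{n}/\varepsilon)$ queries to $D$. Amplitude estimation succeeds with a fixed constant probability; to reach $1-\delta$ I would take the median of $O(\log(1/\delta))$ independent runs, which is the source of the $\log(1/\delta)$ factor in the final count.

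The main obstacle I expect is handling the normalization uniformly. The controlled rotation requires an a priori bound $M$ on $\max_i x_i$, and then the measured probability is $p = x/(nM)$; a naive application would contribute an extra $\sqrt{M\cdot n/x}$ factor rather than $\sqrt{n}$. The way around this is either to assume a bound of the form $M = O(x/n)$ is known (as is the case when this lemma is invoked in later sections, where $D$ sits in a bounded region and the sum is $\Omega(n)$), or to run a doubling procedure over guesses for $x$ and stop at the smallest guess consistent with the amplitude estimate; each level of the doubling uses $O(\sqrt{n}/\varepsilon)$ queries and only $O(\log n)$ levels are needed, which stays inside the $\tilde O(\sqrt{n}/\varepsilon)$ budget. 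Together with the median-boost, this yields the claimed query complexity $O(\sqrt{n}\log(1/\delta)/\varepsilon)$.
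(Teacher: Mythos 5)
The paper does not prove this lemma: it is imported verbatim from Lemma~6 of \citet{li2019sublinear}, so there is no in-paper proof to compare against. That said, your amplitude-estimation route is exactly the standard way to prove such a statement, so the overall structure is sound: prepare a flag state whose success amplitude encodes $p = x/(nM)$, run BHMT amplitude estimation with $T = \Theta\bigl(1/(\varepsilon\sqrt p)\bigr)$ iterations to get additive error $O(\sqrt p/T + 1/T^2) \le \varepsilon p$, and median-boost to $1-\delta$ at the cost of a $\log(1/\delta)$ factor. All of that is correct.

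Where the argument wobbles is in the normalization discussion at the end. You state that a naive application costs $O\bigl(\sqrt{Mn/x}/\varepsilon\bigr)$ queries where $M$ is an upper bound on $\max_i x_i$, and you propose to handle this either by assuming $M = O(x/n)$ (a much stronger hypothesis than anything in the lemma) or by a doubling sweep over guesses for $x$, which tacks on a $\log n$ factor not present in the stated $O(\sqrt n\,\log(1/\delta)/\varepsilon)$ bound. Neither patch is needed. The elementary fact you are missing is that for nonnegative $x_i$ one always has
\[
M^{*} := \max_{i} x_i \;\le\; \sum_{i} x_i \;=\; x,
\]
so if the rotation is calibrated against the \emph{exact} maximum $M^{*}$, then $p = x/(nM^{*}) \ge 1/n$ and the iteration count is $T = O\bigl(\sqrt{nM^{*}/x}/\varepsilon\bigr) \le O(\sqrt n/\varepsilon)$ unconditionally, with no extra assumption and no doubling. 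The exact maximum is not ``an a priori bound'' that has to be handed to you: it can be computed with D\"urr--H\o yer quantum maximum finding in $O(\sqrt n\,\log(1/\delta))$ queries, which is already dominated by the $O(\sqrt n\,\log(1/\delta)/\varepsilon)$ budget. With that replacement, your proof closes cleanly and matches the claimed query complexity.
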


\section{Coreset Construction}

This section presents a quantum algorithm for coreset construction in $\tilde{O}(\sqrt{nk})$ time. This algorithm combines and quantizes two existing classical algorithms, the bicriteria approximation algorithm of~\citet{thorup2001quick} and the coreset construction algorithm (based on an approximate solution) of~\citet{cohen2021new}. This paper focuses on the ($k$, $z$)-clustering problem (\de{kzc}) over a size-$n$ data set $D = \{x_1,\ldots,x_n\} \subset \mathbb{R}^d$, and assumes the access to oracle $O_D\colon \ket{i}\ket{0}\rightarrow\ket{i}\ket{x_i}$ $\forall i \in [n]$. The main result for coreset construction is as follows.

\begin{theorem}\label{thm:coreset}
There exists a quantum algorithm such that given
data set $D \subset \mathbb{R}^d$, positive real $\epsilon < 1/2^{O(z)}$, $z \geq 1$, and integer $k \geq 1$,
it returns an $\epsilon$-coreset for ($k$, $z$)-clustering over $D$ of size $\tilde{O}\left(2^{O(z)}kd\polylog(n)\max{(\varepsilon^{-2},\varepsilon^{-z})}\right)$ with success probability at least $2/3$ using:
\begin{itemize}[leftmargin=*]
\item $\tilde{O}\left(2^{O(z)}\sqrt{nkd}\max{(\varepsilon^{-1},\varepsilon^{-z/2})}\right)$ queries to $O_D$,
\item $\tilde{O}\left(2^{O(z)}\sqrt{nk}d^{3/2}\max{(\varepsilon^{-1},\varepsilon^{-z/2})}\right)$ queries to QRAM,
\item $\poly\left(kd\log n/\varepsilon^z\right)$ additional processing time.
\end{itemize}
\end{theorem}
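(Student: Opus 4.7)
The plan is to quantize and combine the two classical algorithms sketched in the \textbf{Techniques} paragraph: the bicriteria approximation of~\citet{thorup2001quick}, and the group-sampling coreset construction of~\citet{cohen2021new}. The first stage invokes the quantum bicriteria algorithm developed in \sec{bicriteria} to produce a set $A \subset \mathbb{R}^d$ with $|A| = O(k \polylog n)$ and $\cost(D, A) \leq 2^{O(z)} \cdot \opt$, together with a coherent cluster-assignment unitary $U_A : \ket{x}\ket{0} \mapsto \ket{x}\ket{\sigma(x)}$ sending each $x \in D$ to the index of its approximate nearest center in $A$. The unitary $U_A$ is built from the quantum LSH subroutine \lem{qANN}, which bypasses the $\Omega(n|A|)$ exact nearest-neighbor barrier and accounts for both the extra $\sqrt{d}$ factor in the $O_D$-query count and the $d^{3/2}$ factor in the QRAM-query count.

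Given $A$ and $U_A$, the second stage adapts the group-sampling framework of~\citet{cohen2021new}. First I would invoke the multidimensional quantum approximate summation \thm{mqsum} with partition $\tau = \sigma$ and value function $f(x) = \dist^z(x, A)$ to produce $(1 \pm O(\epsilon))$-estimates $\widetilde{\cost}(C_i, A)$ for every induced cluster $C_i$ simultaneously, at a cost of $\tilde{O}(\sqrt{n |A|/\epsilon})$ queries. Next, bucket the $|A|$ clusters into $G = \tilde{O}(z^2/\epsilon^2)$ groups of comparable estimated cost, exactly as in the classical construction. Inside each group, use \lem{qsampling} to draw $m$ samples with probability proportional to $\dist^z(x, A)$, with $m$ chosen so that $G \cdot m = \tilde{O}(2^{O(z)} k d \polylog(n) \max(\epsilon^{-2}, \epsilon^{-z}))$ matches the target coreset size, and reweight each sampled $x \in C_i$ by $\widetilde{\cost}(C_i, A) / (m \cdot \dist^z(x, A))$ so that within-group weighted sums remain unbiased estimators. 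The coreset guarantee is then inherited verbatim from~\citet{cohen2021new}, whose hypotheses are (i) $A$ is an $O(1)$-approximate bicriteria solution, (ii) the cluster cost estimates have $(1 \pm O(\epsilon))$ relative error, and (iii) the per-group sample size is large enough; all three are established above. A union bound over the $O(1)$ quantum subroutine calls, each run with constant failure probability amplified by repetition, drives the total failure probability below $1/3$.

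The main obstacle is to compose these quantum subroutines coherently, since both \thm{mqsum} and \lem{qsampling} must query the partition label $\sigma(x)$ and the value $\dist^z(x, A)$ in superposition---so $U_A$ must be a bona fide unitary, not merely a point-query oracle. This is precisely what \lem{qANN} delivers once its LSH hash tables are written into QRAM; composing $O_D$, $U_A$, and the distance-computation circuit then yields the $f$-oracle required by \thm{mqsum} and \lem{qsampling}, and is the source of the $d^{3/2}$ QRAM factor. A secondary difficulty is that \thm{mqsum} must produce relative-error estimates simultaneously for $|A|$ clusters whose true costs may differ by orders of magnitude, which is the regime its $\sqrt{nm/\epsilon}$ bound was designed for; tracking how the sum-estimation error, the $(1 \pm \epsilon)^z \approx 1 \pm z\epsilon$ slack in the $z$-th power of distances, and the per-group sampling variance propagate through the classical analysis is what produces the $\max(\epsilon^{-1}, \epsilon^{-z/2})$ and $2^{O(z)}$ factors. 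Once these coherent-oracle issues are handled, the remainder is routine bookkeeping: propagating $(1 \pm O(\epsilon))$ multiplicative errors and summing the per-subroutine query and time budgets to arrive at the claimed bounds.
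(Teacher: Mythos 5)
Your high-level plan is exactly the paper's: produce a bicriteria set $A$ of size $O(k\polylog n)$ via a quantized Thorup algorithm, build the coherent cluster-assignment unitary from the LSH data structure in QRAM (\lem{qANN}), estimate per-cluster costs with \thm{mqsum}, then run the \citet{cohen2021new} group-sampling construction with \lem{qsampling}, and compose the two bounds as \lem{bicriteria} $+$ \lem{coreset_details}. That is how the paper proves \thm{coreset}, and your identification of where the extra $\sqrt{d}$ and $d^{3/2}$ factors come from (the LSH query cost) is on point.

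Where the proposal goes off the rails is in the description of the sensitivity-sampling step, and as written the ``inherited verbatim'' claim would not go through. First, the grouping is two-level, not one-level: each cluster $C_i$ is first split into rings $R_{i,j}$ by distance from the center (powers of two of $\Delta_{C_i}$), and then, for each fixed ring index $j$, the rings across clusters are bucketed into groups $G_{j,b}$ by ring cost relative to $\cost_\tau(R_j,A)$. You describe a single bucketing of clusters by estimated cost, which is not the partition that the sensitivity bound uses. Second, the sampling probability within a well-structured group $G$ is $\Pr[x] = \cost_\tau(C_i,A)/(|C_i|\cost_\tau(G,A))$, i.e.\ \emph{uniform} inside each $C_i\cap G$; the distribution $\propto \dist^z(x,A)$ is used only for the outer groups $G_{O,b}$. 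Third, the reweighting must be the importance-sampling reciprocal $w(x) = 1/(t\Pr[x])$, which gives $|C_i|\cost_\tau(G,A)/(t\cost_\tau(C_i,A))$ for well-structured groups and $\cost_\tau(G,A)/(t\dist^z(x,A))$ for outer groups. Your formula $\widetilde{\cost}(C_i,A)/(m\,\dist^z(x,A))$ is neither: it pairs $\dist^z$-proportional sampling with a cluster-level normalizer, so $\sum_{x\in G} \Pr[x]\, t\, w(x)\, g(x)$ does not reduce to $\sum_{x\in G} g(x)$ and the unbiasedness you invoke fails. Finally, you omit the step in which the centers $A$ themselves are placed in the coreset with weights $w(a_i) = |R_{i,I}| + |C_i\cap(\cup_{j\neq I} G_{j,\min})|$ (computed via \thm{mqcounting}); without this, the inner rings and cheapest groups are not represented at all, and \lem{first_kind} has nothing to compare against. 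These are all fixable by following \alg{coreset_details} and \append{proof_coreset_details} exactly, but they are not cosmetic: the variance and concentration bounds (\lem{well_group}, \lem{outer_rings}) are stated precisely for the two-level partition and the two distinct sampling rules.
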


\begin{remark}
\label{remakr:jl}
    When $d \geq \Omega(\log n / \epsilon^2)$,
    one can apply the Johnson-Lindenstrauss transform~\cite{JL84} as a preprocessing step to obtain the following alternative bounds.
    \begin{itemize}[leftmargin=*]
        \item $\tilde{O}\left(2^{O(z)}k\polylog(n)\max{(\varepsilon^{-4},\varepsilon^{-z-2})}\right)$ coreset size,
        \item $\tilde{O}\left(2^{O(z)}\sqrt{nk}\max{(\varepsilon^{-2},\varepsilon^{-z/2-1})}\right)$ queries to $O_D$,
        \item $\tilde{O}\left(2^{O(z)}\sqrt{nk}d\max{(\varepsilon^{-4},\varepsilon^{-z/2-3})}\right)$ queries to QRAM, 
        \item $\poly\left(k\log n/\varepsilon^z\right) + O\left(d\log n/\varepsilon^2\right)$ additional processing time.
    \end{itemize}
    These bounds have tight asymptotic dependence in $d$.
\end{remark}

The quantum algorithm contains two parts. First, \sec{bicriteria} presents an algorithm to compute a bicriteria approximate solution $A$, which is an approximate solution with size slightly larger than $k$. Then, based on this $A$, \sec{coreset_details} presents an algorithm for coreset construction. Combining the two algorithms directly yields \thm{coreset}.

For clustering problem it is a basic subroutine to find the nearest neighbor of each data point $x \in D$ in a given set $A \subset \mathbb{R}^d$ since the optimization objective is the cost function $\cost(D,A) = \sum_{x \in D}\dist^z(x,A)$ for any $A \subset \mathbb{R}^d$. It always holds that $|A| = k\polylog(n)$. This can be easily implemented in the classical setting, since the nearest neighbor of all the $x \in D$ can be found with $\tilde{O}(nk)$ time and all the information can be stored with $O(n)$ space. However, this approach cannot be easily adapted to yield the $\tilde{O}(nk)$ complexity in the quantum setting. In particular, the step of exactly computing the nearest center can require $\Omega(k)$ time. Hence, this paper uses a mapping that maps each point $x \in D$ to an approximately nearest neighbor in $A$ instead. This paper takes the advantage of an existing classical result, which is based on a widely adopted technique, Locality Sensitive Hashing (LSH).

\begin{lemma}[Approximate Nearest Neighbor Search, Rephrased from Theorem 2.10 and Theorem 3.17 of~\citealt{indyk1998approximate}]\label{lem:cANN}
There exists an algorithm such that given two parameters $\delta' > 0$, $\varepsilon \in (0,1/2)$, for any set $A \subset \mathbb{R}^d$, $|A| = m$, it constructs a data structure using $m^{O(\log (1/\varepsilon)/\varepsilon^2)}\log(1/\delta')$ space and preprocessing time, such that for any query $x \in \mathbb{R}^d$, with probability at least $1-\delta'$ it answers $a \in A$ which satisfies $\dist(x,a) \leq 2(1+\varepsilon)\dist(x,A)$ using $\varepsilon^{-2}d\polylog(m/\delta')$ query time.
\end{lemma}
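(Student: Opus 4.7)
The plan is to build a locality-sensitive hashing (LSH) data structure following the classical Indyk--Motwani framework, with parameters tuned to the approximation factor $c = 2(1+\varepsilon)$. First I would reduce the approximate nearest neighbor problem to its decision version: given a radius $r$, either certify that every point of $A$ is farther than $cr$ from the query $x$, or return some $a \in A$ with $\dist(x,a) \leq cr$. By building $O(\log_{1+\varepsilon}(\mathrm{poly}(m))) = \tilde{O}(\varepsilon^{-1})$ separate decision structures for geometrically spaced radii $r \in \{(1+\varepsilon)^i\}$ and binary searching among them, a $(1+\varepsilon)$-approximate NN gives the stated $2(1+\varepsilon)$ guarantee; the overhead is absorbed in the $\mathrm{polylog}(m/\delta')$ factor.

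For the decision problem I would use an LSH family $\mathcal{H}$ on $\mathbb{R}^d$ that is $(r, cr, p_1, p_2)$-sensitive, for instance the family based on $p$-stable random projections (Gaussian projection followed by rounding to a discretized interval). The key parameter is $\rho = \log(1/p_1)/\log(1/p_2)$, which for $c \approx 2$ can be made as small as $O(1/c^2)$, and in particular at most some constant bounded away from $1$. I would then amplify in the standard way: form composite hash functions $g = (h_1, \ldots, h_K)$ by concatenating $K = \Theta(\log m)$ independent draws from $\mathcal{H}$ to drive down the false-positive collision probability, and build $L$ independent hash tables to drive up the true-positive probability to $1-\delta'$. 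Since the approximation ratio is the relatively loose value $2(1+\varepsilon)$ rather than $1+\varepsilon$, one can take $K$ large enough that each bucket is expected to hold $O(1)$ far points, which is what enables the query to run in $\mathrm{polylog}(m/\delta')$ time rather than $m^\rho$; the price is paid in the preprocessing/space, which blows up to $m^{O(\log(1/\varepsilon)/\varepsilon^2)}$ because one needs $K$ to scale logarithmically and $L$ to scale like $m^{\rho}$ with an exponent that depends on how aggressively we want to suppress the bucket sizes.

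On a query, I would evaluate the $L$ composite hash functions on $x$ (each evaluation costs $O(Kd) = O(d\log m)$), retrieve the $O(1)$ candidates from each bucket, and exactly compute distances to at most $O(L)$ candidates, returning the nearest one. A standard analysis shows that, conditional on the true nearest neighbor being in distance $r$, at least one of the $L$ tables maps $x$ to a bucket containing it with probability $1-\delta'$; conditional on all of $A$ being beyond distance $cr$, the total number of spurious collisions examined is $O(L)$ in expectation, so a cutoff on the number of candidates inspected per bucket gives the claimed query time $\varepsilon^{-2} d \cdot \mathrm{polylog}(m/\delta')$. Union-bounding over the $\tilde{O}(\varepsilon^{-1})$ radius scales and rescaling $\delta'$ preserves the failure probability.

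The main technical obstacle is the joint parameter tuning: matching the exponent $O(\log(1/\varepsilon)/\varepsilon^2)$ in the preprocessing while simultaneously forcing the query to be only polylogarithmic in $m$ requires setting $K$, $L$, and the bucket-width of the $p$-stable projection so that the per-bucket occupancy is $O(1)$ with high probability and the collision probability gap $p_1/p_2$ is exploited at the right scale. Once those parameters are pinned down, the complexity bounds follow from routine accounting, and correctness follows from Chernoff bounds on the two-sided collision events together with the union bound over the $\tilde{O}(\varepsilon^{-1})$ radii.
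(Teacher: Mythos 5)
The paper offers no proof of \lem{cANN}: it is cited verbatim from \citet{indyk1998approximate} (their Theorem 2.10, the ``bucketing method,'' and Theorem 3.17, the reduction from approximate nearest neighbor to the point-location-in-equal-balls decision problem). So there is no in-paper argument to compare yours against, and the review below evaluates your proposal on its own terms.

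Your central step is flawed. You claim that by taking $K$ ``large enough that each bucket is expected to hold $O(1)$ far points,'' the query drops from $m^{\rho}$ to $\polylog(m/\delta')$, with the cost absorbed into the space bound. This is not how the LSH amplification trade-off works. To make each bucket contain $O(1)$ far points you need $m\,p_2^K = O(1)$, i.e.\ $K = \Theta(\log m/\log(1/p_2))$. But then the per-table true-positive probability is $p_1^K = m^{-\rho}$, so you must build $L = \Theta(m^{\rho}\log(1/\delta'))$ independent tables to succeed with probability $1-\delta'$, and at query time you must probe all $L$ of them. The query time is therefore $\Omega(L) = m^{\Omega(\rho)}$ --- a genuine polynomial in $m$, since for any fixed approximation factor $c = 2(1+\varepsilon)$ the exponent $\rho$ is a constant bounded away from $0$ (about $1/c^2$ for $\ell_2$). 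There is no parameter regime in the $(K,L)$ amplification in which space absorbs this into a polylog query. Your preprocessing exponent $m^{O(\log(1/\varepsilon)/\varepsilon^2)}$ also does not arise from LSH tables; it is the signature of a different construction: Johnson--Lindenstrauss projection to $O(\log m/\varepsilon^2)$ dimensions followed by an explicit $\varepsilon$-grid over the projected ball, which has $(1/\varepsilon)^{O(\log m/\varepsilon^2)} = m^{O(\log(1/\varepsilon)/\varepsilon^2)}$ cells, each storing its precomputed answer. A query then consists of projecting $x$ (cost $O(d\log m/\varepsilon^2)$, matching the claimed $\varepsilon^{-2}d\,\polylog(m/\delta')$) and a single cell lookup. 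That dimension-reduction-plus-bucketing scheme, combined with the ring-cover / PLEB reduction that binary-searches over geometric radius scales (the part of your proposal that is correct), is what actually yields the stated bounds and the $2(1+\varepsilon)$ approximation factor. Two smaller issues: the factor $2$ in $2(1+\varepsilon)$ is a property of the PLEB / ball-embedding reduction and needs to be traced explicitly, which you do not do; and $p$-stable LSH (Datar--Immorlica--Indyk--Mirrokni, 2004) postdates the 1998 reference --- not a logical error, but it signals that you are reconstructing a later and structurally different algorithm, one whose query time is polynomial, not polylogarithmic, in $m$.
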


For a quantum version, there exists an algorithm that performs the preprocessing classical and stores the data structure in QRAM~\citep{giovannetti2008quantum}, and then answers queries in a quantum manner based on the stored information. This yields a quantum algorithm with the same query time since any classical operation can be simulated by constant quantum operations. Let $\varepsilon = c/2-1$. Setting $\delta' = \delta/n$ and using the union bound yields the following lemma.

\begin{lemma}[Quantum Approximate Nearest Neighbor Search]\label{lem:qANN}
There exists an algorithm such that given two parameters $\delta > 0$, $c_{\tau} \in [5/2,3)$, for any $A \subset \mathbb{R}^d$, $|A| = m$, it constructs an oracle
$$
O_{\tau}\colon\ket{i}\ket{0}\rightarrow\ket{i}\ket{\tau(i)}, \forall i \in [n]
$$
using $\poly(m\log(n/\delta))$ classical preprocessing time and QRAM space. With success probability at least $1-\delta$, $\tau \colon [n] \rightarrow [m]$ is a mapping such that
$$
\dist(x_i,a_{\tau(i)}) \leq c_{\tau}\dist(x_i,A) \quad \forall i \in [n].
$$
Each query to $O_{\tau}$ uses $d\polylog(mn/\delta)$ queries to QRAM. 
\end{lemma}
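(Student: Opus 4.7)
The plan is to directly quantize the classical LSH-based approximate nearest neighbor data structure guaranteed by \lem{cANN}. The key observation is that once the random hash functions underlying the LSH scheme are drawn (at preprocessing time), the resulting query procedure becomes a deterministic classical algorithm, which can then be compiled into a quantum oracle using standard reversible computation on top of QRAM-accessed data.

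Concretely, I would set $\varepsilon = c_{\tau}/2 - 1$, so that $\varepsilon \in [1/4, 1/2)$ whenever $c_{\tau} \in [5/2, 3)$, and set the per-query failure probability to $\delta' = \delta/n$. Invoking \lem{cANN} with these parameters yields a data structure of size $m^{O(\log(1/\varepsilon)/\varepsilon^2)}\log(n/\delta) = \poly(m\log(n/\delta))$ (since $\varepsilon$ is a fixed constant), built in the same amount of classical preprocessing time and stored entirely in QRAM. For any fixed $i \in [n]$, the LSH query procedure applied to $x_i$ returns some $a_{\tau(i)} \in A$ with $\dist(x_i, a_{\tau(i)}) \leq 2(1+\varepsilon)\dist(x_i, A) = c_{\tau}\dist(x_i, A)$ with probability at least $1 - \delta/n$. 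A union bound over all $i \in [n]$ then shows that, with probability at least $1 - \delta$ over the choice of LSH randomness, the resulting mapping $\tau$ simultaneously satisfies the guarantee for every input index.

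Next I would realize the oracle $O_\tau$ itself. Because the LSH randomness has been frozen during preprocessing, the function $i \mapsto \tau(i)$ is a deterministic classical function whose evaluation consists of: (i) reading $x_i$ via $O_D$, (ii) evaluating a constant number of hash functions on $x_i$ and fetching the relevant hash buckets from the LSH table in QRAM, and (iii) scanning the resulting $\polylog(mn/\delta)$ candidates in $A$ to pick the closest one. Each step is a reversible classical computation on $d\polylog(mn/\delta)$ bits that makes $d\polylog(mn/\delta)$ QRAM queries, so by standard reversible-to-quantum compilation (compute $\tau(i)$ into an ancilla, copy it into the output register, then uncompute), this implements the unitary $\ket{i}\ket{0}\to\ket{i}\ket{\tau(i)}$ coherently on superpositions of indices.

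The main subtlety, rather than a genuine obstacle, is ensuring that the LSH procedure is implemented \emph{reversibly} and that the randomness is fixed once and for all at preprocessing: this is what allows $\tau$ to be a well-defined function and therefore $O_\tau$ to be a well-defined unitary acting consistently on identical basis inputs across different branches of a superposition. The claimed preprocessing, QRAM, and per-query complexities then follow immediately from the corresponding classical bounds of \lem{cANN} together with the fact that classical gates and memory reads simulate at constant quantum overhead.
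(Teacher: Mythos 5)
Your proposal is correct and follows essentially the same route as the paper: invoke \lem{cANN} with $\varepsilon = c_\tau/2 - 1 \in [1/4,1/2)$ and per-query failure probability $\delta' = \delta/n$, union bound over the $n$ indices, store the frozen LSH data structure in QRAM, and compile the now-deterministic classical query procedure into a reversible unitary. The paper's own proof is just the two-sentence remark preceding the lemma; your added discussion of why freezing the randomness makes $\tau$ a well-defined function (and hence $O_\tau$ a well-defined unitary) is implicit in the paper but worth spelling out, and your observation that each query also needs a constant number of calls to $O_D$ to read $x_i$ matches the paper's later usage even though the lemma statement lists only QRAM queries.
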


\begin{remark}
By the same method, one can construct the oracle to a mapping that maps any $i \in [n]$ to the corresponding center $a_{\tau(i)} \in A$ instead of the index $\tau(i)$, with the same complexity.
\end{remark}

\subsection{Bicriteria Approximation}\label{sec:bicriteria}

For a ($k$, $z$)-clustering problem, its bicriteria approximate solution is defined as follows:
\begin{definition}
\label{def:bicriteria}
Assume that $\opt$ is the optimal cost for the ($k$, $z$)-clustering problem. An ($\alpha$, $\beta$)-bicriteria approximate solution is a point set $A \subset \mathbb{R}^d$ such that $|A| \leq \alpha k$, $\cost(D,A) \leq \beta \opt$.
\end{definition}

This section presents a quantum algorithm that finds a bicriteria solution with $\tilde{O}(d\sqrt{nk})$ query complexity, as stated in \lem{bicriteria} below. This algorithm is a quantization of a classical algorithm by~\citet{thorup2001quick}.

\begin{lemma}\label{lem:bicriteria}
 \alg{bicriteria} outputs an ($O(\log^2 n)$, $2^{O(z)}$)-bicriteria approximate solution $A$ with probability at least $5/6$,
 using $\tilde{O}(\sqrt{nk})$ calls to $O_D$ and its inverse, $\tilde{O}(d\sqrt{nk})$ queries to a QRAM, and $\poly(k\log n)$ additional processing time.
\end{lemma}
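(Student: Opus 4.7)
The plan is to quantize Thorup's iterative sampling-and-covering scheme. The classical algorithm maintains a shrinking pool of ``uncovered'' data points $U \subseteq D$, initialized to $D$; in each round it (i) draws a uniform sample $T$ of size $\tilde{O}(k)$ from $U$ and appends it to the running solution $A$, (ii) by doubling search finds the smallest threshold radius $r$ for which at least half of $U$ lies in $\bigcup_{t \in T} \Ball(t, r)$, and (iii) removes those $\tfrac{1}{2}|U|$ covered points from $U$. After $O(\log n)$ rounds $U$ is empty; because each round may run $O(\log n)$ threshold-doubling steps, the total number of centers is $O(k \log^2 n)$, and the geometric charging argument of~\citet{thorup2001quick} shows that the sum of the chosen radii, raised to the $z$, bounds the clustering cost of $A$ by $2^{O(z)} \opt$.

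To make each ingredient quantum, I would use the following primitives. Uniform (and more generally weighted) sampling from the implicit set $U$ is handled by \lem{qsampling} using $\tilde{O}(\sqrt{n|T|}) = \tilde{O}(\sqrt{nk})$ queries per round, provided membership in $U$ is available; I maintain $U$ as a bit-vector in QRAM that is updated at the end of each round. To test coverage at a proposed radius $r$, I build the approximate nearest-neighbor oracle $O_\tau$ of \lem{qANN} for the current sample $T$ (classical preprocessing $\poly(k \log n)$, per-query cost $d \cdot \polylog$); the condition $\dist(x_i, T) \leq r$ is then replaced by $\dist(x_i, a_{\tau(i)}) \leq c_\tau r$, which only inflates every radius by a constant absorbed into the $2^{O(z)}$ factor. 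The number of covered points in $U$ at radius $r$ is estimated within a constant factor by quantum approximate counting (\lem{counting}), which drives the doubling search for $r$. Finally, once $r$ is fixed, the actual covered subset is enumerated via repeated Grover search (\lem{getall}) in $\tilde{O}(\sqrt{n\cdot m})$ queries where $m$ is the number of newly covered points; summed over rounds this telescopes to $\tilde{O}(\sqrt{nk})$ because $m$ halves geometrically while $|A|$ grows linearly.

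Putting the pieces together, one round uses $\tilde{O}(\sqrt{nk})$ queries to $O_D$ plus $\tilde{O}(d\sqrt{nk})$ QRAM accesses (dominated by the $d$-dimensional distance computations inside the ANN oracle), and $O(\log n)$ rounds keep the totals at the claimed $\tilde{O}(\sqrt{nk})$ and $\tilde{O}(d\sqrt{nk})$. Each subroutine's failure probability is boosted to $O(1/\log^2 n)$ by $O(\log \log n)$ repetitions, so a union bound over the $O(\log^2 n)$ subroutine calls keeps the total failure probability below $1/6$. The approximation guarantee follows by plugging the $c_\tau$-approximate distances into Thorup's potential-function analysis: halving $|U|$ each round costs an additive $O(|U| \cdot r^z)$, and the geometric decay of $|U|$ combined with the $\log n$-bounded number of doublings yields $\cost(D,A) \leq 2^{O(z)} \opt$, while $|A| = O(k\log^2 n)$.

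The main obstacle is the faithful interaction between the implicit, round-dependent set $U$ and the quantum primitives that require membership oracles: because $U$ is defined by the outcomes of earlier randomized subroutines, I must take care that the QRAM bit-vector is updated deterministically after each round (using the enumerated covered set from \lem{getall}) so that subsequent sampling, counting, and ANN preprocessing operate on a well-defined classical set; otherwise errors from distinct rounds would compose multiplicatively instead of additively. The secondary obstacle is balancing the LSH distortion $c_\tau \in [5/2, 3)$ against Thorup's threshold-doubling geometry so that the $2^{O(z)}$ cost bound is preserved — this is handled by choosing the doubling factor larger than $c_\tau$ so each round's approximate coverage test still certifies a true coverage with a constant-factor larger radius.
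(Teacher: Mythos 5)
Your proposal diverges from the paper's proof in a way that breaks the claimed complexity bound. The critical step is how the shrinking uncovered set $U$ (the paper's $D_t$) is maintained across rounds. You propose to keep $U$ as an explicit bit-vector in QRAM, rewritten each round by first enumerating the newly covered points via repeated Grover search, and you claim this telescopes to $\tilde{O}(\sqrt{nk})$. That telescoping calculation is wrong: in round $t$ roughly $m_t \approx n/2^t$ points are newly covered, so \lem{getall} costs $\tilde{O}(\sqrt{n\,m_t})$, and
$\sum_t \sqrt{n\cdot n/2^t} = n\sum_t 2^{-t/2} = \Theta(n)$,
not $\tilde{O}(\sqrt{nk})$. (The "$|A|$ grows linearly" remark is irrelevant to the Grover cost, which depends only on $n$ and $m_t$.) Even ignoring queries, classically writing the first round's $\Theta(n)$ covered indices into QRAM already exceeds the lemma's $\poly(k\log n)$ additional-processing budget. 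The paper avoids this entirely: it never materializes $D_t$ except in the very last step when $|D_t| = O(k\log n)$. Instead it maintains an \emph{implicit} membership unitary $U_{D_t}:\ket{x}\ket{0}\to\ket{x}\ket{I(x\in D_t)}$ by nested recursion through the $O(\log n)$ earlier rounds (using $U_{D_{t-1}}$, the ANN oracle, and the threshold $d_t$, then uncomputing), so each call costs only a $\polylog(n)$ overhead over the base oracles. Sampling and counting then run against $U_{D_t}$ in superposition, and Grover enumeration is invoked once at the end on a set of size $O(k\log n)$. This nested-unitary construction is the idea your proposal is missing.

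Two secondary mismatches with the paper's argument. First, the paper does not run a doubling search for the threshold radius; following Thorup's Algorithm D it draws a single uniform sample $s_t$ from $D_t$ and sets the threshold $d_{t+1}=\dist(s_t,\tau_{t+1}(s_t))$, which halves $|D_t|$ with probability $\geq 1/2$ per round — simpler and cheaper than your counting-driven binary search, and it is exactly the object whose randomness the paper's "ungrabbed happy point" charging argument exploits. Second, your justification of $|A|=O(k\log^2 n)$ ("each round may run $O(\log n)$ threshold-doubling steps") is incorrect even on your own terms — doubling steps search for $r$ but do not add centers. The actual $\log^2 n$ comes from the per-round sample size $13k\lceil\log n\rceil$ times the $O(\log n)$ rounds (plus the $O(k\log n)$ residual points added at the end).
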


\begin{algorithm}[tb]
\caption{Bicriteria Approximation}\label{alg:bicriteria}
\begin{algorithmic}[1]
\STATE{\bfseries input:} $k$, $z$, $n$, oracle $O_D$
\STATE{\bfseries output:} ($O(\log^2 n)$, $2^{O(z)}$)-bicriteria approximation $A$
\STATE\label{lin:3} initialize $t \leftarrow 0$, $D_0 \leftarrow D$, $A_0 \leftarrow \emptyset$, $\tilde{r}_0 \leftarrow n$ 
\REPEAT\label{lin:4}
\STATE\label{lin:5} draw a uniform sample $S_t$ of size $13k\lceil\log n \rceil$ over $D_t$ using \lem{qsampling}. $A_{t+1} \leftarrow S_t\cup A_{t}$
\STATE draw a sample $s_t$ uniformly at random over $D_t$ using \lem{qsampling}
\STATE construct a map $\tau_{t+1}\colon D\rightarrow A_{t+1}$ \\
$d_{t+1} \leftarrow \dist(s_t,\tau_{t+1}(s_t))$\\
$D_{t+1} \leftarrow D_t\setminus\{x\in D: \dist(x,\tau_{t+1}(x)) \leq d_{t+1}\}$
\STATE make an $1/2$-estimation $\tilde{r}_{t+1}$ for $r_{t+1} = |D_{t+1}|$ using \lem{counting}
\STATE $t \leftarrow t+1$
\UNTIL{$\tilde{r}_t \leq 39k\lceil\log n\rceil$ or $t \geq 3\lceil \log n \rceil$} 
\STATE find all points in $D_t$ using \lem{getall}, $A \leftarrow D_t \cup A_t$
\STATE repeat all the steps above for three times and union all the $A$
\end{algorithmic}
\end{algorithm}

$\tau_{t+1}\colon D \rightarrow A_{t+1}$ in \alg{bicriteria} Line 7 is a mapping such that for some constant $c_{\tau}$
$$
\dist(x,\tau_{t+1}(x)) \leq c_{\tau}\dist(x,A_{t+1}).
$$
It holds that $|A_{t+1}| = O(k\polylog n)$ for any $t$. Using \lem{qANN}, for any $c_{\tau} \in [5/2,3)$, an oracle
$$
O_{\tau_{t+1}}\colon\ket{x}\ket{0}\rightarrow\ket{x}\ket{\tau(x)}
$$
can be constructed using $\poly(k\log(n))$ classical preprocessing time and QRAM space, and each query to $O_{\tau_{t+1}}$ uses $d\polylog(kn)$ queries to QRAM and constant query to $O_D$.

\begin{proof}[Proof of \lem{bicriteria}]
\alg{bicriteria} is a quantum implementation of Algorithm D of~\citet{thorup2001quick}, which constructs a set of size $O(k\log^2n/\varepsilon)$ that contains a factor $2+\varepsilon$ approximation to $k$-median problem for $\varepsilon\in (0,1/2)$ with probability at least $1/2$. \alg{bicriteria} has small difference from the classical algorithm but it does not influence the correctness; a detailed proof is given in \append{proof_biapprox}.

In the classical setting, to identify the set $D_t$ one can list all the points in it or in the dataset $D$ make those points marked. In the quantum setting, to identify $D_t$ is to construct the unitary
$$
U_{D_t}\colon \ket{x}\ket{0}\rightarrow \ket{x}\ket{I(x \in D_t)}, \quad \forall x \in D,
$$
where $I(x \in D_{t})$ is the indicator for whether $x \in D_{t}$. This unitary can be constructed iteratively:
$$
\begin{aligned}
U_{D_{t+1}}\colon &\ket{x}\ket{0}\ket{0}\ket{0} \\
&\xmapsto{U_{D_t}, O_{\tau_{t+1}}} \ket{x}\ket{I(x \in D_t)}\ket{\tau_{t+1}(x)}\ket{0} \\
&\mapsto \ket{x}\ket{I(x\in D_t)}\ket{\tau_{t+1}(x)}\ket{I(x\in D_{t+1})} \\
&\xmapsto{O_{\tau_{t+1}}^{-1},O_{D_t}^{-1}}  \ket{x}\ket{0}\ket{0}\ket{I(x\in D_{t+1})}.
\end{aligned}
$$
For Line 5-6, \alg{bicriteria} applies \lem{qsampling} with unitary $U_{D_t}$ and $m = 13k\lceil\log n\rceil +1$, which uses $\tilde{O}(\sqrt{nk})$ calls for $O_{D_t}$, $O_D$, and their inverses. This algorithm uses $\tilde{O}(\sqrt{n})$ for Line 8 and $\tilde{O}(nk)$ for Line 11 queries to $O_{D_t}$, $O_D$, and their inverses. All the steps above are repeated for no more than $O(\log n)$ times, and it can be concluded that in total the algorithm uses $\tilde{O}(d\sqrt{nk})$ queries for a QRAM, $\tilde{O}(\sqrt{nk})$ queries for $O_D$ and its inverse, and additional $\poly(k\log n)$ processing time.

We further note that the failure probability of our quantum algorithm gives at most a poly-logarithmic factor. In \alg{bicriteria}, each subroutine in use suffers only a $\log(1/\delta)$ factor to reach a success probability at least $1-\delta$ and each of them is applied no more than $\poly(nk)$ times, so setting the failure probability as $\delta = O(1/\poly(nk))$ for each subroutine and the union bound ensures that all the applications to quantum subroutines success with high probability. This cause only a $\polylog(nk)$ factor and it is absorbed by the $\tilde{O}$ notation. 
\end{proof}

\subsection{Coreset Construction}\label{sec:coreset_details}

We present a quantum algorithm for constructing a coreset based on a bicriteria approximate solution. The construction is a quantum implementation of~\citet{cohen2021new}. \alg{coreset_details} shows a sketch of the construction.

\alg{coreset_details} requires an access to an ($\alpha$, $\beta$)-bicriteria approximation $A$, which means an oracle $O_A\colon \ket{i}\ket{0}\rightarrow \ket{i}\ket{a_i}$ $\forall i \in [m]$ for a set $A = \{a_1,\ldots,a_m\} \subset \mathbb{R}^d$, $m \leq \alpha k$, $\cost(D,A) \leq \beta \opt$. Based on $A$, using \lem{qANN}, one can construct an oracle
$$
O_{\tau}\colon \ket{s}\ket{0}\rightarrow\ket{s}\ket{i}, \quad \forall s \in [n]
$$
where $\dist(x_s,a_i) \leq c_{\tau}\dist(x_s,A)$ $\forall s \in [n]$ for some constant $c_{\tau}$. This oracle encodes a mapping $\tau\colon [n] \rightarrow [m]$ which maps each $x_s \in D$ to an approximately nearest neighbor $a_i \in A$ as its center. The map $\tau$ together with $A$ can be seen as a special solution for clustering. Let $\cost_{\tau}(D',A):= \sum_{x \in D'}\dist^z(x,a_{\tau(x)})$ be the cost of this solution and let $C_i := \{x\in D\mid \tau(x) = i\}$ as the $i$-th cluster induced by $\tau$ and $A$ for any $i \in [m]$.

\begin{lemma}\label{lem:coreset_details}
Let
$$
t = \tilde{O}\left( 2^{O(z)} \cdot m\cdot (d+\log(n))\cdot \max{(\varepsilon^{-2},\varepsilon^{-z})}\right)
$$
in \alg{coreset_details}. For a positive real $\varepsilon < 1/(4c_{\tau}^z)$, \alg{coreset_details} outputs an $O(c_{\tau}^{z}\beta \varepsilon)$-coreset of size $\tilde{O}\left( 2^{O(z)}md\log(n)\max{(\varepsilon^{-2},\varepsilon^{-z})}\right)$ with probability at least $5/6$, using $\tilde{O}\left(2^{O(z)}c_{\tau}\sqrt{nmd}\max{(\varepsilon^{-1},\varepsilon^{-z/2})}\right)$ queries to $O_{\tau}$, $O_D$, $O_A$, their inverses, and QRAM. Besides it uses $\poly(md\log n/\varepsilon^z)$ additional classical processing time.
\end{lemma}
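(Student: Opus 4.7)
\textbf{Proof plan for \lem{coreset_details}.} The plan is to quantize the sample-based coreset construction of \citet{cohen2021new}. Classically, given the bicriteria solution $A$ and the induced map $\tau$, one partitions each cluster $C_i$ into rings $R_{i,\ell}$ of geometrically growing radii around $a_i$, plus an innermost ``tiny'' ring collapsed to $a_i$. Within each non-tiny ring one draws roughly $t/(\text{number of rings})$ uniform samples and assigns weight $|R_{i,\ell}|/(\text{samples in }R_{i,\ell})$. The analysis of~\citet{cohen2021new} shows that, for $t = \tilde{O}(2^{O(z)} m (d+\log n)\max(\epsilon^{-2},\epsilon^{-z}))$, the resulting weighted set is an $O(\varepsilon)$-coreset with respect to $(A,\tau)$, and since $A$ is a $\beta$-approximation while $\tau$ routes each point to a $c_\tau$-approximate center, an $O(c_\tau^z \beta \epsilon)$-coreset for the true cost follows (this last step is a triangle-inequality / generalized-triangle-inequality bookkeeping argument, which I will invoke rather than re-derive).

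For the quantum implementation I would proceed as follows. First, using \lem{qANN}, build $O_\tau$ so that $\tau(s)$ is computable on superpositions. Second, define a ring index $\ell(s)$ for every $s\in[n]$ that is computable from $\dist(x_s,a_{\tau(s)})$ and pair $(\tau(s),\ell(s))$ into a single group label of which there are $M=\tilde O(mz/\varepsilon)$; this is implemented as a unitary via $O_\tau$ and $O_A$. Third, apply the multidimensional quantum approximate summation (\thm{mqsum}) to the partition by groups and the function $f(s)=\dist^z(x_s,a_{\tau(s)})$ to obtain $\varepsilon$-estimates of $\cost(R_{i,\ell},A)$ and of $|R_{i,\ell}|$ simultaneously for all groups using $\tilde O(\sqrt{nM/\varepsilon})$ oracle calls. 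Fourth, for each group draw the required number of uniform samples using \lem{qsampling} restricted to that group via the membership unitary already built; this costs $\tilde O(\sqrt{n\,t})$ queries in total across groups (by Cauchy--Schwarz over group sizes). Fifth, write the sampled points and their weights to the output via $O_D$ and classical post-processing.

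Putting the parts together, the query complexity is dominated by the sampling stage $\tilde O(\sqrt{nt})$, which evaluates to $\tilde O(2^{O(z)} c_\tau \sqrt{nmd}\max(\varepsilon^{-1},\varepsilon^{-z/2}))$ once $t$ is substituted; the sum-estimation stage is of strictly lower order. The coreset size equals $t$ up to logarithmic factors, yielding the claimed bound. Each query to $O_\tau$ costs $d\,\polylog(nm)$ QRAM queries and a constant number of $O_D$ calls (by \lem{qANN}), which accounts for the extra $\sqrt{d}$ and the QRAM accounting; the classical $\poly(md\log n/\varepsilon^z)$ post-processing is used to build the LSH data structure and to finalize weights. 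Failure probabilities of all subroutines are driven to $1/\poly(nmk)$ and union-bounded into $5/6$ at the cost of absorbed logarithmic factors.

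The main obstacle I expect is the joint estimation of $\cost(R_{i,\ell},A)$ and $|R_{i,\ell}|$ across all $\tilde O(m/\varepsilon)$ groups to the precision demanded by the~\citet{cohen2021new} sensitivity bound while charging only $\tilde O(\sqrt{nm})$ queries: this is precisely why \thm{mqsum} (rather than $m$ independent applications of \lem{qsumestimation}, which would cost $\sqrt{n}\,m/\varepsilon$) is needed, and verifying that the relative error it provides per group is strong enough to preserve both the approximation factor $O(c_\tau^z\beta\varepsilon)$ and the group-wise sample-count calibration is the delicate point of the proof. A secondary, but technical, issue is that a group may contain very few points, in which case one must either sample all of it (handled by the ``tiny'' case of the ring decomposition that sends the innermost points to $a_i$ with unit cost contribution) or detect this via the approximate count and list the group explicitly using \lem{getall}; both branches must be carried out inside the superposition controlled by $O_\tau$.
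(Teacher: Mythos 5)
Your quantum implementation plan matches the paper's: build $O_\tau$ via the LSH-based \lem{qANN}, use \thm{mqsum} to estimate all per-group cost sums and sizes simultaneously in $\tilde O(\sqrt{nm/\varepsilon})$ queries (rather than $m$ independent applications of \lem{qsumestimation}), and then use \lem{qsampling} for the i.i.d.\ sample, with the overall complexity dominated by the $\tilde O(\sqrt{nt})$ sampling stage. The complexity accounting and the observation that all failure probabilities can be driven to $1/\poly(nmk)$ and union-bounded are also consistent with the paper.

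Where you diverge substantively is in the description of the classical sampling scheme that is being quantized. You describe ``uniform samples within each ring $R_{i,\ell}$ with weight $|R_{i,\ell}|/(\text{number of samples})$.'' The paper (following \citet{cohen2021new}) instead uses a \emph{two-level} decomposition: rings $R_{i,j}$ are further gathered into \emph{groups} $G_{j,b}$ collecting rings with comparable cost $\cost_\tau(R_{i,j},A)$, and then the sample is drawn per group with the cluster-dependent (not uniform) probability $\Pr[x]=\cost_\tau(C_i,A)/(|C_i|\cost_\tau(G,A))$ for well-structured groups and $\Pr[x]=\cost_\tau(x,A)/\cost_\tau(G,A)$ for outer groups, with weight $1/(t\Pr[x])$. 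The grouping step is not cosmetic: it is what collapses the $\Theta(m\cdot z\log(z/\varepsilon))$ rings down to $O(z^2\log^2(z/\varepsilon))$ sampling targets and simultaneously handles rings containing too few or too cheap points. Relatedly, your fallback of ``detecting a small group via the approximate count and listing it explicitly with \lem{getall}'' is not what the paper does; instead, the inner rings $R_I$ and the cheapest groups $G_{j,\min}$ are \emph{collapsed onto the centers} $a_i$ with weight $|R_{i,I}|+|C_i\cap(\cup_{j}G_{j,\min})|$, estimated by \thm{mqcounting}. Since you explicitly defer the correctness argument to~\citet{cohen2021new}, this is more an imprecise rendering of the algorithm than a broken step, but if taken literally, uniform-per-ring sampling would not yield the stated coreset size, and the delicate issue you correctly flag (whether the $O(\varepsilon)$-relative error from \thm{mqsum} suffices) is resolved in the paper precisely through the ring-then-group structure together with the $A$-approximate centroid-set argument (\lem{centroid}).
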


\begin{algorithm}[tb]
\caption{Coreset Construction}\label{alg:coreset_details}
\begin{algorithmic}[1]
\STATE {\bfseries input:} $t$, $\varepsilon$, oracle $O_D$, $O_A$, $O_{\tau}$
\STATE {\bfseries output:} $O(c_{\tau}^{z}\beta\varepsilon)$-coreset $\Omega$ with weight function $w$ \\
// \quad  phase 1: partitioning the dataset into groups \\
\STATE compute $\Delta_{C_i} = \cost_{\tau}(C_i,A)/|C_i|$ for each $i \in [m]$ by using $O_{\tau}$ and \thm{mqsum}, store in QRAM
\STATE construct the ring unitary $U_R$ \\
\STATE compute $\cost_{\tau}(R_{i,j},A)$ and $\cost_{\tau}(R_j,A)$ for each pair $i,j$ by using $U_R$ and \thm{mqsum}, store in QRAM
\STATE construct the group unitary $U_G$ \\
//\quad  phase 2: sensitivity sampling and reweighting
\STATE compute $\cost_{\tau}(G_{j,b},A)$ for each pair of $j$ and $b$ using $U_{G}$ and \thm{mqsum}, store in QRAM
\STATE for each $i \in [m]$, compute $|R_{i,I}|+|C_i \cap (\cup_{j\neq I} G_{j,\min})|$ by \thm{mqcounting}, and assign the value to $w(a_i)$
\STATE for each well-structured group $G$, draw a size-$t$ i.i.d.~sample $\Omega$ such that each $x\in C_i\cap G$ is selected in each round with the same probability
$$
\Pr[x] = \frac{\cost_{\tau}(C_i,A)}{|C_i|\cost_{\tau}(G,A)}
$$
using \lem{qsampling} and reweight $w(x) = 1/(t\Pr[x])$ for each sampled point $x$
\STATE for each outer group $G$, draw a size-$t$ i.i.d.~sample $\Omega$ such that each each $x\in G$ is selected in each round with probability
$$
\Pr[x] = \frac{\cost_{\tau}(x,A)}{\cost_{\tau}(G,A)}
$$
using \lem{qsampling} and reweight $w(x) = 1/(t\Pr[x])$ for each sampled point $x$
\STATE let $\Omega$ be the union of all the above samples and $A$
\end{algorithmic}
\end{algorithm}

The details of \alg{coreset_details} are described as follows. This algorithm consists of two phases. During the first phase the algorithm partitions the dataset $D$ into groups. This consists of two steps. The first step of the first phase is to partition each cluster into rings, with each ring containing the points with the same distance from the center up to factor $2$. For each $C_i$, let 
$$
R_{i,j} := \{x \in C_i\mid 2^j\Delta_{C_i} \leq \cost_{\tau}(x,A) \leq 2^{j+1}\Delta_{C_i} \}.
$$
Let $R_{i,I} := \cup_{j \leq -2z\log(z/\varepsilon)}R_{i,j}$ be the inner ring and $R_{i,O} := \cup_{j > 2z \log (z/\varepsilon)}$ be the outer ring. Besides, let $R_I := \cup_{i = 1}^m R_{i,I}$, $R_O := \cup_{i = 1}^m R_{i,O}$, and $$R_j := \cup_{i = 1}^m R_{i,j}\quad\forall j,\  -2z\log(z/\varepsilon) < j \leq 2z \log (z/\varepsilon).$$
The ring unitary $U_R$ is defined as
$$
U_R\colon \ket{s}\ket{0}\ket{0}\rightarrow \ket{s}\ket{i}\ket{j} \quad \forall s \in [n]
$$
where $x_s \in R_{i,j}$ for each $s \in [n]$. For $j \leq -2z\log(z/\varepsilon)$, $U_R$ uses a same special notation for such $j$ and in this paper it is denoted as $j = I$. The special notation can be any preselected value out of $[-2z\log(z/\varepsilon), 2z \log (z/\varepsilon)]$. And it is the same for $j = O$. $U_R$ is a unitary which answers the corresponding ring $R_{i,j}$ for each query $x_s \in D$.

The second step is to gather the rings into groups such that the rings with equal cost up to factor $2$ are gathered together and prepared to be handled together in the second phase. For each $j \in \{\lceil -2z\log(z/\varepsilon)\rceil, \ldots, \lfloor 2z\log(z/\varepsilon) \rfloor +1\}\cup\{I,O\}$, let
$$
G_{j,b} := \cup_{i \in I_{j,b}} R_{i,j},
$$
where $I_{j,b}$ is the largest set such that for any $i \in I_{j,b}$,
$$
\cost_{\tau}(R_{i,j},A) \in (\frac{\varepsilon}{4z})^z\cdot\frac{\cost_{\tau}(R_j,A)}{m}\cdot [2^b,2^{b+1}].
$$
And let $G_{j,\min}:= \cup_{b \leq 0}G_{j,b}$ be the union of the cheapest groups, and $G_{j,\max}:= \cup_{b \geq z\log (4z/\varepsilon)}G_{j,b}$ be the union of the most expensive ones. The same notation as $j = I$ and $j = O$ is used for $b = \min$ and $b = \max$. The group unitary $U_G$ is defined as
$$
U_{G}\colon \ket{s}\ket{0}\ket{0}\rightarrow \ket{s}\ket{j}\ket{b} \quad \forall s \in [n],
$$
where $x_s \in G_{j,b}$ for any $s \in [n]$. $U_G$ answers the corresponding group $G_{j,b}$ for each query $x_s \in D$.

In the second phase the dataset seen as the union of three different kinds of points and these three parts are handle separately. The first kind contains the union of inner rings $R_I$ and the cheapest groups $G_{j,\min}$ $\forall j$; The second kind is all the well-structured groups $G_{j,b}$ with $-2z\log(z/\varepsilon) < j \leq 2z \log (z/\varepsilon)$  and $b = 1,\ldots,\max$; And the third kind is all the outer groups $G_{O,b}$ with $b = 1,\ldots,\max$. 

In quantum computing, it is costly to compute the exact sum such as $\cost_{\tau}(C_i,A)$ and $\cost_{\tau}(R_{i,j},A)$.
Hence, \alg{coreset_details} uses $\varepsilon$-estimations instead of corresponding exact values, but for convenience they are simply written as they are exact. It turns out that $O(\varepsilon)$-estimations are enough for constructing an $O(\varepsilon)$-coreset.

\begin{proof}[Proof of \lem{coreset_details}]

The proof is shown in \append{proof_coreset_details}.
\end{proof}

\begin{remark}
We note that in Line 3 (and similarly, Line 5 and 7), computing $\Delta_{C_i}$ for all the $m$ clusters in $\tilde{O}(\sqrt{nm})$ time  is feasible in quantum computing. We can compute $\{\cost_{\tau}(C_i,A)\}_{i = 1}^m$ and $\{|C_i|\}_{i = 1}^m$ separately, and then calculate the division in a classical method, so we only state for the calculation of all the $\cost_{\tau}(C_i,A)$. We can construct the following unitary $U$ by one query to $O_{\tau}$ and its inverse.
$$
U\colon \ket{s}\ket{0}\ket{0} \rightarrow \ket{s}\ket{\tau(s)}\ket{\dist^z(x_s,A)} \quad \forall s \in [n]
$$
Note that $\cost_{\tau}(C_i,A) = \sum_{\tau(s) = i}\dist^z(x_s,A)$. According to \thm{mqsum}, calculating these values requests only $\tilde{O}(\sqrt{nm})$ time. More details about \thm{mqsum} is shown in \sec{mqcounting}.
\end{remark}

Let $m = O(k\log n)$, $\beta = 2^{O(z)}$, $\varepsilon = \varepsilon'/2^{O(z)}$, and $c_{\tau} = 5/2$ in \lem{coreset_details}. Note that $O_A$ is obtained by storing $A$ in QRAM, and one query to $O_{\tau}$ uses $d\polylog(mn)$ queries to QRAM by \lem{qANN}. Combining \lem{bicriteria} and \lem{coreset_details} directly yields \thm{coreset}.

\section{Multidimensional Approximate Summation}\label{sec:mqcounting}

A crucial subroutine of \alg{coreset_details} is to compute the summation of the $\cost_{\tau}(x,A)$ over all the points $x$ in each part for a given partition (Line 3, 5, and 7). This gives rise to such a problem:
\begin{definition}[Multidimensional Approximate Summation]\label{def:masum}
Given two integers $1 \leq m \leq n$, a real parameter $\varepsilon > 0$, a partition $\tau\colon [n] \rightarrow [m]$, and a function $f\colon [n] \rightarrow \mathbb{R}_{\geq 0}$. The multidimensional approximate summation problem is to find $\varepsilon$-estimation for each $s_j := \sum_{\tau(i) = j}f(i)$, $j \in [m]$.
\end{definition}

This paper proposes \emph{multidimensional quantum approximate summation} to solve this problem. We believe this technique can have wide applications in designing quantum algorithms for machine learning and other relevant problems.
\begin{theorem}[Multidimensional Quantum Approximate Summation]\label{thm:mqsum}
Assume that there exists access to an oracle $O_{\tau}\colon \ket{i}\ket{0}\ket{0}\rightarrow\ket{i}\ket{\tau(i)}\ket{f(i)}$ $\forall i \in [n]$ and assume that $f$ has an upper bound $M$. For $\varepsilon \in (0,1/3)$, $\delta > 0$, there exists a quantum algorithm that solves the multidimensional approximate summation problem with probability at least $1-\delta$, using $\tilde{O}\left(\sqrt{nm/\varepsilon}\log(1/\delta)\log M\right)$ queries to $O_{\tau}$, $\tilde{O}\left((\sqrt{nm/\varepsilon}+m/\varepsilon)\log(n/\delta)\log M \right)$ gate complexity, and additional $O(m\log M)$ classical processing time.
\end{theorem}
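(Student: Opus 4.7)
The plan is to reduce the multidimensional summation task to a sequence of multidimensional \emph{counting} problems via dyadic bucketing of the function $f$, and then to invoke the multidimensional quantum counting subroutine. At a high level, I would partition $[n]$ into $O(\log M)$ levels based on the magnitude of $f$, further subdivide each level into $O(1/\varepsilon)$ sub-buckets of multiplicative width $(1+\varepsilon)$, and then count the cardinality of every ($\tau$-class) $\times$ (sub-bucket) intersection. Multiplying these counts by their bucket representatives and summing then reconstructs each $s_j$ to relative accuracy $\varepsilon$.

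Concretely, after rescaling so that the values of interest lie in $[1,M]$ (points with $f(i)<\varepsilon\max_j s_j/n$ are swept into a tail bucket whose total contribution to any $s_j$ is $O(\varepsilon s_j)$), I would define levels $L_\ell=\{i:f(i)\in[2^\ell,2^{\ell+1})\}$ for $\ell=0,\ldots,\lceil\log_2 M\rceil$ together with sub-buckets of representative $v_{\ell,b}=2^\ell(1+\varepsilon)^b$ for $b=0,\ldots,T-1$ with $T=O(1/\varepsilon)$. Composing $O_\tau$ with $O(1)$ classical arithmetic gates yields, for each $\ell$, the oracle
\[
O_{\tilde\tau_\ell}\colon\ket{i}\ket{0}\to\ket{i}\ket{(\tau(i),b_\ell(i))},
\]
for the refined partition $\tilde\tau_\ell\colon[n]\to([m]\times[T])\cup\{\star\}$, with $i\notin L_\ell$ sent to a dummy class $\star$. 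Feeding $O_{\tilde\tau_\ell}$ into \thm{mqcounting} with precision $\varepsilon$ produces $(1\pm\varepsilon)$-estimates $\hat n_{j,\ell,b}$ of the counts $n_{j,\ell,b}=|\tilde\tau_\ell^{-1}(j,b)|$, and I return $\hat s_j=\sum_{\ell,b}\hat n_{j,\ell,b}\cdot v_{\ell,b}$.

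The correctness analysis decouples two sources of error. Sub-bucketing contributes a multiplicative $(1+\varepsilon)$ distortion because $v_{\ell,b}\leq f(i)\leq(1+\varepsilon)v_{\ell,b}$ holds for every $i$ in sub-bucket $(\ell,b)$, so $\sum_{\ell,b}n_{j,\ell,b}v_{\ell,b}\in[s_j/(1+\varepsilon),s_j]$; the counting subroutine injects at most another $(1+\varepsilon)$ multiplicative error on each $n_{j,\ell,b}$, which aggregates to at most $(1+\varepsilon)$ on the weighted sum. Composing the two gives $\hat s_j=(1\pm O(\varepsilon))s_j$, and the tail bucket is absorbed by the same $O(\varepsilon)$ slack. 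A union bound over the $O(\log M)$ levels, after rescaling the per-invocation failure parameter to $\delta/\log M$, controls the overall failure probability.

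The main obstacle is obtaining the target per-level complexity of $\tilde O(\sqrt{nm/\varepsilon})$ queries, which hinges on the multidimensional counting guarantee scaling like $\tilde O(\sqrt{nm'/\varepsilon})$ for $m'=O(m/\varepsilon)$ classes, rather than the naive class-by-class $\tilde O(m'\sqrt{n}/\varepsilon)$ obtained from applying amplitude estimation separately. The natural route is a shared amplitude-estimation-style routine that simultaneously refines all class probabilities, handling small classes via a progressive Grover/threshold scheme so that the work amortizes across the $m'$ classes. Once that per-level bound is in place, summing over the $O(\log M)$ dyadic levels yields the claimed $\tilde O(\sqrt{nm/\varepsilon}\log M\log(1/\delta))$ query complexity; the $m/\varepsilon$ additive term in the gate count is incurred by the $O(\log M)$ classical reconstruction passes over the $O(m/\varepsilon)$ refined classes per level, and the $O(m\log M)$ final processing time is from summing the weighted counts to form each $\hat s_j$.
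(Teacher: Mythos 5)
Your high-level idea, namely to reduce multidimensional summation to a batch of multidimensional counting problems indexed by the magnitude of $f$, is the same as the paper's. But the specific decomposition you chose does not deliver the stated complexity, and you have a numerical slip that hides this.

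Your plan uses a dyadic-level partition refined by $T=O(1/\varepsilon)$ multiplicative sub-buckets, so the per-level counting oracle has $m'=O(m/\varepsilon)$ parts, and you call \thm{mqcounting} with precision $\varepsilon$. By that theorem the per-level query cost is $\tilde O\bigl(\sqrt{n m'/\varepsilon}\bigr)=\tilde O\bigl(\sqrt{n m/\varepsilon^2}\bigr)=\tilde O\bigl(\sqrt{nm}/\varepsilon\bigr)$, not $\tilde O\bigl(\sqrt{nm/\varepsilon}\bigr)$ as you assert when summing over levels. Over $O(\log M)$ levels this gives $\tilde O\bigl(\sqrt{nm}\,\varepsilon^{-1}\log M\bigr)$, a factor $\varepsilon^{-1/2}$ above the target. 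Moreover, trading off sub-bucket width against counting precision cannot help: if the sub-bucket width is $(1+1/T)$ and the counting precision is $\eta$, then you need $1/T+\eta=O(\varepsilon)$ while the counting cost scales as $\sqrt{nmT/\eta}$, and minimizing $\sqrt{T/\eta}$ subject to $1/T+\eta\le\varepsilon$ gives $\Omega(1/\varepsilon)$ by AM--GM. The root cause is that the dyadic representation $\sum_{\ell,b} n_{j,\ell,b}\,v_{\ell,b}$ is only an approximation to $s_j$, so you must pay for sub-buckets to control the decomposition error.

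The paper avoids this by using an \emph{exact} decomposition: write $f(i)=\sum_{t=0}^{\lceil\log M\rceil} 2^t f_t(i)$ in binary, set $s_j^t:=\sum_{\tau(i)=j} f_t(i)$, and observe $s_j=\sum_t 2^t s_j^t$ with no approximation. For each bit $t$, the map $i\mapsto \tau(i)\cdot I(f_t(i)=1)$ is an $(m+1)$-way partition whose nonzero class sizes are exactly the $s_j^t$; \thm{mqcounting} with precision $\varepsilon$ then costs $\tilde O\bigl(\sqrt{nm/\varepsilon}\bigr)$ per bit, and the $\varepsilon$-relative errors on $\tilde s_j^t$ aggregate to an $\varepsilon$-relative error on $\tilde s_j=\sum_t 2^t\tilde s_j^t$ because the coefficients $2^t$ are nonnegative. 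There is no sub-bucket blowup because the decomposition is lossless. As a side note, your tail-bucket treatment is also not quite right as stated (bounding each point by $\varepsilon\max_j s_j/n$ only controls the tail's contribution relative to $\max_j s_j$, not to each individual $s_j$), but this becomes moot once you switch to the exact bit decomposition.
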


$f(i)$ is a  binary number of length $\lceil\log M\rceil$. By first computing the summation of each digit and then summing up the results together, the multidimensional approximate summation problem can be reduced to the following problem:
\begin{definition}[Multidimensional Counting]\label{def:mqcounting}
Given two integers $1 \leq m \leq n$, a real parameter $\varepsilon > 0$, and a partition $\tau\colon [n] \rightarrow [m]$. For each $j \in [m]$, denote $D_j := \{i \in [n]: \tau(i)=j\}$ as the $j$-th part and $n_j := |D_j|$ for the size. The multidimensional counting problem is to find $\varepsilon$-estimation $\tilde{n}_j$ for each $n_j$, $j \in [m]$.
\end{definition}

For this problem, this section proposes \emph{multidimensional quantum counting} to solve it. 
\begin{theorem}[Multidimensional Quantum Counting]\label{thm:mqcounting}
Assume that there exists access to an oracle $O_{\tau}\colon \ket{i}\ket{0}\rightarrow\ket{i}\ket{\tau(i)}$ $\forall i \in [n]$. For $\varepsilon \in (0,1/3)$, $\delta > 0$, \alg{mqcounting} solves the multidimensional counting problem with probability at least $1-\delta$, using $\tilde{O}\left(\sqrt{nm/\varepsilon}\log(1/\delta)\right)$ queries to $O_{\tau}$ and additional $\tilde{O}\left((\sqrt{nm/\varepsilon}+m/\varepsilon)\log(n/\delta)\right)$ gate complexity. The query complexity is optimal up to a logarithm factor.
\end{theorem}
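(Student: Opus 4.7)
My plan is to design \alg{mqcounting} as a two-phase algorithm that first uses quantum sampling to identify ``heavy'' parts via a shared batch of quantum queries, and then invokes quantum counting individually on the remaining ``light'' parts. This hybrid structure is natural because quantum sampling (\lem{qsampling}) gives a quadratic speedup for producing many samples at once, while quantum counting (\lem{counting}) is efficient for individual estimates whose precision is tailored to the part's true size. Crucially, the $\varepsilon^{-1/2}$ scaling in the target bound $\tilde O(\sqrt{nm/\varepsilon})$ cannot come from per-part quantum counting alone (which scales like $\varepsilon^{-1}$), but emerges from the quantum sampling step that processes all $m$ parts simultaneously.

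\textbf{Phase 1 (coarse estimation via quantum sampling).} I would first invoke \lem{qsampling} to produce $s = \tilde\Theta(m/\varepsilon)$ i.i.d.~uniform samples from $[n]$, at a total cost of $\tilde O(\sqrt{ns}) = \tilde O(\sqrt{nm/\varepsilon})$ queries to $O_\tau$. For each sampled index $i$, an application of $O_\tau$ reveals $\tau(i)$ (subsumed in the sampling cost), and I maintain classical counts $X_j$ of how many samples land in each part. The normalized estimate $\tilde n_j^{(0)} = n X_j / s$ is, by a Chernoff bound, an $\varepsilon$-relative approximation of $n_j$ whenever $n_j \geq T^\star := \tilde\Theta(n/(m\varepsilon))$, i.e., for all sufficiently heavy parts.

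\textbf{Phase 2 (refinement of light parts via quantum counting).} For each $j$ whose coarse estimate falls below the threshold $T^\star$, I would invoke \lem{counting} to produce an $\varepsilon$-relative estimate of $n_j$, using the coarse estimate to calibrate the number of Grover iterations so that the per-part cost is $\tilde O(\sqrt{n/\tilde n_j^{(0)}}/\varepsilon)$. For parts that are so small that relative error forces an exact count, I would instead invoke \lem{getall} to enumerate the cluster, at a per-part cost of $\tilde O(\sqrt{n n_j})$. The aggregate Phase 2 cost is then bounded by an amortization argument: by the threshold structure the light parts have bounded total mass, and a Cauchy--Schwarz-style bound combined with $\sum_j n_j \leq n$ yields the target $\tilde O(\sqrt{nm/\varepsilon})$. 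Finally, I would set the per-subroutine failure probability to $\delta/\poly(n,m)$ and apply a union bound, contributing only a $\polylog(nm/\delta)$ overhead absorbed by $\tilde O$.

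\textbf{Main obstacle.} The most delicate step is the amortization in Phase 2. A naive bound of $\tilde O(m\sqrt{n/\varepsilon})$ is off by a $\sqrt m$ factor from the target, so matching $\tilde O(\sqrt{nm/\varepsilon})$ requires crucially exploiting that the heavy parts, already handled essentially for free by the shared sampling batch in Phase 1, carry most of the budget $\sum_j n_j = n$, thereby tightening the Cauchy--Schwarz aggregate over the light parts. A secondary subtlety is choosing the threshold $T^\star$ and the sample size $s$ consistently so that Phase 1 and Phase 2 costs balance at $\tilde O(\sqrt{nm/\varepsilon})$ simultaneously, and handling the boundary regime of very small $n_j = O(1/\varepsilon)$ via \lem{getall} without letting the small-part cost dominate. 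The optimality claim follows by reducing single-set approximate counting (with the $\Omega(\sqrt{n/\varepsilon})$ lower bound of Nayak--Wu) to the $k=1$ case and composing across the $m$ parts, mirroring the composition strategy outlined in the introduction for the clustering lower bound.
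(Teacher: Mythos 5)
Your two-phase decomposition (coarse sampling for the heavy parts, per-part refinement for the light ones) is a natural first instinct, but the amortization in Phase~2 does not close, and the gap is not cosmetic. Per-part quantum counting on part $j$ costs $\Theta(\varepsilon^{-1}\sqrt{n/n_j})$ queries; summing over light parts and applying Cauchy--Schwarz gives at best $\varepsilon^{-1}m\sqrt{n/n_{\min}}$, where $n_{\min}$ is the smallest size you still handle by counting rather than by enumeration. Taking $n_{\min}=\Theta(1/\varepsilon)$ (below which you switch to \lem{getall}) yields $m\sqrt{n/\varepsilon}=\sqrt{m}\cdot\sqrt{nm/\varepsilon}$, a factor $\sqrt{m}$ above the target. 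The balanced instance $n_j=n/m$ for all $j$ makes this concrete: since $n/m<n/(m\varepsilon)$, every part is light, Phase~2 costs $m\cdot\varepsilon^{-1}\sqrt{m}=m^{3/2}/\varepsilon$, and the ratio to $\sqrt{nm/\varepsilon}$ is $m/\sqrt{\varepsilon n}$, which is $\gg 1$ whenever $n\ll m^2/\varepsilon$. Your guiding heuristic --- that the heavy parts already carry most of the mass $\sum_j n_j=n$, so light parts are cheap in aggregate --- fails here: all the mass sits in light parts.

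The paper circumvents this with a genuinely different primitive that your sketch is missing: van~Apeldoorn's \emph{multidimensional amplitude estimation} (\lem{amplitudeestimation}), which given a probability oracle and a ``maximal total probability'' parameter $p_{mt}$ returns additive-$\epsilon'$ estimates of all $m$ probabilities \emph{simultaneously} at cost $\tilde O(\sqrt{p_{mt}}/\epsilon')$, essentially independent of $m$. \alg{mqcounting} wraps this in an iterative refinement loop: each round it estimates all as-yet-undecided parts, freezes those with $\tilde p_j\gtrsim \tilde n/(nm)$, shrinks $p_{mt}$ to roughly the remaining mass $2\tilde n/n$, and recalibrates the additive precision to $\Theta(\tilde n\varepsilon/(nm))$ so the per-round cost is $\tilde O(\sqrt{nm/\varepsilon})$ whenever $\tilde n\ge m/\varepsilon$. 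The residual mass roughly halves each round, so $O(\log n)$ rounds suffice, after which the remaining $<m/\varepsilon$ items are found exhaustively via Grover search at cost $\tilde O(\sqrt{nm/\varepsilon})$ --- the one step your outline genuinely shares with the paper. Without the simultaneous-estimation tool, per-part quantum counting cannot be amortized to the stated bound; replacing Phase~2 by this iterated multidimensional amplitude estimation is the idea your proposal needs.
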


Denote $p_j := n_j/n$. Note that by one call for $O_{\tau}$ one can construct the unitary
\begin{equation}\label{eqn:oracle_p}
O_p\colon \ket{0}\rightarrow \sum_{j = 1}^m \sqrt{p_j}|j\rangle \left(\frac{1}{n_j}\sum_{i \in D_j}\ket{i}\right).
\end{equation}
Consider $p = (p_1,\ldots,p_m)$ as an $m$-dimensional probability distribution, the above unitary can be seen as a probability oracle to $p$. The following method is used to estimate the distribution:

\begin{lemma}[Multidimensional Amplitude Estimation, Rephrased from Theorem 5 and Lemma 7 of~\citealt{van2021quantum}]\label{lem:amplitudeestimation}
There is a quantum algorithm which has the following properties: given precision $\varepsilon \in (0,1/3)$, error probability $\delta > 0$, a quantum probability oracle $O_p$ on $q$ qubits for an $m$-dimensional probability distribution $p$, a number set $S \subset [m]$, and a constant $p_{mt} \geq \sum_{i \in S}p_i$ as the maximal total probability on $S$, it outputs $\tilde{p}\in\mathbb{R}^{m}$ such that
$$
|\tilde{p}_i - p_i| \leq \varepsilon\quad\forall i \in S
$$
with probability $\geq 1-\delta$ using $O\left({\varepsilon}\log(m/\delta)\sqrt{p_{mt}\varepsilon}\right)$ calls to $O_p$ and membership queries to $S$. The gate complexity is $\tilde{O}\left((q (m+p_{mt}/\varepsilon) +\sqrt{p_{mt}}/\varepsilon)\log(1/\delta)\right)$ using a QRAM.
\end{lemma}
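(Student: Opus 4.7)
The plan is to prove the Multidimensional Amplitude Estimation lemma via a heavy/light dichotomy on the target set $S$, combined with standard Brassard--H{\o}yer--Mosca--Tapp amplitude estimation (BHMT). The probability oracle prepares $\ket{\psi} := O_p\ket{0} = \sum_{j \in [m]}\sqrt{p_j}\ket{j}\ket{\phi_j}$. For any fixed $j^{\ast} \in [m]$, composing the reflection about $\ket{\psi}$ with a controlled projector onto $\ket{j^{\ast}}$ yields a subroutine whose success probability is exactly $p_{j^{\ast}}$; one invocation of BHMT then returns an additive-$\varepsilon$ estimate using $O(\sqrt{p_{j^{\ast}}}/\varepsilon + 1/\sqrt{\varepsilon})$ queries to $O_p$ and its inverse. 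Applied blindly to every index of $S$, this costs $O(\sqrt{|S|\, p_{mt}}/\varepsilon + |S|/\sqrt{\varepsilon})$ by Cauchy--Schwarz, which is too large; the proof will avoid this by spending work only on ``heavy'' indices.

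First I would partition $S$ into a heavy set $S_H := \{j \in S : p_j \geq \varepsilon/2\}$ and a light set $S_L := S \setminus S_H$. Since $\sum_{j \in S_H}p_j \leq p_{mt}$, we have $|S_H| \leq 2 p_{mt}/\varepsilon$. For every $j \in S_L$ the trivial output $\tilde p_j = 0$ satisfies $|\tilde p_j - p_j| \leq \varepsilon$, so the problem reduces to (i) locating every heavy index and (ii) estimating its probability to additive $\varepsilon$. For step~(i) I would iteratively run a threshold version of amplitude amplification that searches, among indices of $S$ not yet marked as found (via a QRAM-stored list combined with the membership oracle for $S$), for one whose amplitude exceeds $\sqrt{\varepsilon/4}$. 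Using quantum counting between successive discoveries to re-estimate the remaining heavy mass and re-tune the amplification iteration count, the total cost telescopes to $\tilde O\bigl(\sqrt{p_{mt}/\varepsilon} \cdot \log(m/\delta)\bigr)$ queries rather than the naive $\sum_j 1/\sqrt{p_j}$.

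For step~(ii) I would run BHMT independently on each $j \in S_H$ with target precision $\varepsilon$ and per-call failure probability $\delta/|S_H|$, boosted by a median-of-$O(\log(m/\delta))$ trick. Summing $O(\sqrt{p_j}/\varepsilon + 1/\sqrt{\varepsilon})$ over $S_H$ and applying Cauchy--Schwarz gives $\sum_{j \in S_H}\sqrt{p_j} \leq \sqrt{|S_H|\cdot p_{mt}} = O(p_{mt}/\sqrt{\varepsilon})$, yielding an aggregate $\tilde O(\sqrt{p_{mt}}/\varepsilon)$ query cost that matches step~(i). The gate complexity bound follows because each controlled reflection inside BHMT uses $O(q)$ elementary gates plus one QRAM lookup into the found-list (of length at most $m + p_{mt}/\varepsilon$), giving the claimed $\tilde O\bigl((q(m + p_{mt}/\varepsilon) + \sqrt{p_{mt}}/\varepsilon)\log(1/\delta)\bigr)$ bound.

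The main obstacle will be the heavy-detection phase: naively, finding the $i$-th heavy element at amplitude $\sqrt{p_i}$ costs $\Theta(1/\sqrt{p_i})$, so a careful telescoping argument coupled with quantum counting and an adaptive threshold schedule is needed to ensure that the overall discovery cost scales with $\sqrt{p_{mt}}/\varepsilon$ rather than $\sum_j 1/\sqrt{p_j}$. Equally delicate is the boundary between heavy and light: indices with $p_j \approx \varepsilon$ must not be spuriously missed, which forces a two-threshold gap amplification that distinguishes $p_j \geq \varepsilon$ from $p_j < \varepsilon/4$ with failure probability $\delta/m$, followed by a union bound over the $O(p_{mt}/\varepsilon)$ amplitude-estimation subroutines. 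Handling these boundary cases cleanly, while keeping logarithmic overheads under the $\tilde O$ notation, constitutes the technical heart of the proof.
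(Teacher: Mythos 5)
This lemma is not proved in the paper at all: it is imported verbatim (as a rephrasing) from Theorem~5 and Lemma~7 of \citet{van2021quantum}, so the only thing to compare your attempt against is whether it actually delivers the stated complexity, which it does not. The fatal step is the estimation phase over the heavy set. Your own bounds give a per-index cost of $O(\sqrt{p_j}/\varepsilon + 1/\sqrt{\varepsilon})$ and, via Cauchy--Schwarz, $\sum_{j\in S_H}\sqrt{p_j} = O(p_{mt}/\sqrt{\varepsilon})$ with $|S_H| \le 2p_{mt}/\varepsilon$; summing therefore yields a total of order $p_{mt}\varepsilon^{-3/2}$ (both the $\varepsilon^{-1}\sum_j\sqrt{p_j}$ term and the $|S_H|\cdot\varepsilon^{-1/2}$ term are of this size), not the claimed $\tilde{O}(\sqrt{p_{mt}}/\varepsilon)$ --- the assertion that this ``matches step~(i)'' is an arithmetic slip. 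Concretely, with $p_{mt}=1$ and $\Theta(1/\varepsilon)$ indices each of mass $\Theta(\varepsilon)$, independent BHMT estimations cost $\Theta(\varepsilon^{-3/2})$ versus the target $\Theta(\varepsilon^{-1})$, and in the regime where the paper invokes the lemma (Algorithm~\ref{alg:mqcounting}, where $p_{mt}$ is far larger than the precision) this loss would degrade the $\tilde{O}(\sqrt{nm/\varepsilon})$ bound of \thm{mqcounting} to roughly $m\sqrt{n}/\varepsilon$, i.e., no better than running $m$ separate amplitude estimations. Avoiding exactly this blow-up --- sharing the Grover/reflection iterations across all indices so that the cost scales with $\sqrt{\sum_{i\in S}p_i}$ rather than with $\sum_{i\in S_H}\sqrt{p_i}$ plus a per-index $\varepsilon^{-1/2}$ overhead --- is the entire technical content of van Apeldoorn's multidimensional amplitude estimation, and your decomposition into independent single-amplitude estimations cannot recover it.

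Two secondary points. First, your heavy-detection phase is also not as cheap as claimed: finding all $K \approx p_{mt}/\varepsilon$ heavy indices by repeated amplitude amplification costs about $\sum_{i\le K} 1/\sqrt{i\varepsilon} = \Theta(\sqrt{K/\varepsilon}) = \Theta(\sqrt{p_{mt}}/\varepsilon)$, not $\tilde{O}(\sqrt{p_{mt}/\varepsilon})$ (still within budget, so this is fixable), and ``heavy and not yet found'' is not an oracle-checkable predicate, so each candidate must be verified by a threshold amplitude estimation, as you partially acknowledge. Second, note that the query bound printed in the lemma, $O\left(\varepsilon\log(m/\delta)\sqrt{p_{mt}\varepsilon}\right)$, is a typo; the bound actually used in the proof of \thm{mqcounting_appendix} is $O\left(\sqrt{p_{mt}}\,\varepsilon^{-1}\log(m/\delta)\right)$, which is the statement your proof would need to (but does not) establish.
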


A naive method is to set $p_{mt} = 1$ and apply \lem{amplitudeestimation} directly. However, it requests an $\varepsilon$ very small to ensure the size estimation of the small parts sufficiently precise, which leads to gratuitous overprecise estimation for the large parts and results in the waste of time. \alg{mqcounting} performs a trick to obtain the estimations hierarchically: in each iteration it sets a certain precision (Line 6), and saves only the estimation values large enough to ensure the accuracy and leaves the small parts to be estimated more precisely next time (Line 8-10).

\begin{algorithm}[tb]
\caption{Multidimensional Quantum Counting}\label{alg:mqcounting}
\begin{algorithmic}[1]
\STATE{\bfseries input:} $n$, $m$, $O_{\tau}$, $\varepsilon \in (0,\frac{1}{3})$, $\delta>0$
\STATE{\bfseries output:} $\{\tilde{n}_1,\cdots,\tilde{n}_m\}$, s.t. $|\hat{n}_j-n_j|\leq \varepsilon n_j$ $\forall j \in [m]$
\STATE initialize $P \leftarrow [m]$, $Q \leftarrow [n]$, $\tilde{n} \leftarrow n$, $p_{mt} \leftarrow 1$
\STATE construct the oracle $O_p$ in \eqn{oracle_p}
\REPEAT
 \STATE for each $j \in P$, estimate $\{p_j\}$ as $\{\tilde{p}_j\}$ by applying \lem{amplitudeestimation} with $q = \lceil \log n\rceil$, precision $\frac{2\tilde{n}\varepsilon}{3nm}$, maximal total probability $p_{mt}$, error probability $O(\frac{\delta}{\log n})$
 \FOR{$j \in P$}
 \IF{$\tilde{p_j}\geq \frac{\tilde{n}}{9nm}$}
 \STATE $\tilde{n}_j\leftarrow n\tilde{p}_j$, $P \leftarrow P-\{j\}$
 \ENDIF
 \ENDFOR
 \STATE $Q \leftarrow \{i\in[n]:\tau(i)\in P\}$\;
 \STATE make an $1/2$-estimation $\tilde{n}$ for $|Q|$ using \lem{counting}
 \STATE $p_{mt} \leftarrow \frac{2\tilde{n}}{n}$
\UNTIL{$\tilde{n} < m/\varepsilon$}
\STATE find all the items in $Q$ using \lem{getall}, count the remaining parts classically.
\end{algorithmic}
\end{algorithm}

The proof of \thm{mqcounting} and \thm{mqsum} is deferred to \append{proof_mqcounting}.

\begin{remark}
Note that our \alg{mqcounting} for multidimensional quantum computing is optimal up to a logarithmic factor due to \thm{lower-mqc-counting-main}.
\end{remark}

\section{Lower Bound}
To complement our quantum algorithms, we also prove the following quantum lower bounds. First, we have:

\begin{theorem}[Quantum Lower Bound for  Multidimensional Counting]\label{thm:lower-mqc-counting-main}
    Every quantum algorithm that solves the multidimensional counting problem (Definition \ref{def:mqcounting}) w.p. at least $\frac 2 3$ uses at least $\Omega\left(\sqrt{nk}\varepsilon^{-1/2}\right)$ queries to $O_\tau$.
\end{theorem}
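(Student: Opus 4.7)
The plan is to prove the $\Omega(\sqrt{nk}\,\varepsilon^{-1/2})$ bound by reducing the multidimensional counting problem from $k$ parallel instances of approximate quantum counting: a single instance on $N$ elements already requires $\Omega(\sqrt{N/\varepsilon})$ queries by Nayak--Wu~\cite{NayakW99} (the $k=1$ base case), and the $k$-fold parallel version amplifies this by a factor of $k$ via the composition / direct product theorem mentioned in the paper's introduction, which gives the desired extra $\sqrt{k}$ factor relative to the $k=1$ bound.

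Concretely, I identify $[n]\cong[k]\times[n/k]$ and, given any $k$ Boolean functions $f_1,\dots,f_k\colon[n/k]\to\{0,1\}$ packaged as a single combined oracle $O_f\colon\ket{j,\ell}\ket{b}\mapsto\ket{j,\ell}\ket{b\oplus f_j(\ell)}$, I build a partition oracle $O_\tau$ for
\[
\tau(j,\ell)=\begin{cases} j & \text{if } f_j(\ell)=1,\\ k+1 & \text{otherwise,}\end{cases}
\]
at the cost of one call to $O_f$ plus a constant-size comparison circuit per call to $O_\tau$. Any $T$-query quantum algorithm solving the multidimensional counting problem on this $\tau$ therefore yields simultaneous $\varepsilon$-multiplicative estimates of all Hamming weights $|f_j^{-1}(1)|$, $j\in[k]$, using $O(T)$ queries to $O_f$. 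Using $m=k+1$ rather than $m=k$ affects the bound only by constants and can be absorbed by reusing one of the $k$ labels as the garbage bucket.

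It then remains to lower-bound the cost of this $k$-fold parallel approximate counting task. I would fix each instance at the Nayak--Wu-tight threshold $|f_j^{-1}(1)|=\Theta(1/\varepsilon)$, so that $\varepsilon$-multiplicative estimation on a single sub-instance of size $n/k$ requires $\Omega(\sqrt{(n/k)/\varepsilon})$ queries, and then invoke \thm{composition_theorem} (a direct product / composition theorem for quantum query complexity) to amplify this by a factor of $k$ for simultaneous success, yielding
\[
\Omega\!\left(k\cdot\sqrt{\tfrac{n}{k\varepsilon}}\right)=\Omega\!\left(\sqrt{nk/\varepsilon}\right)=\Omega(\sqrt{nk}\,\varepsilon^{-1/2})
\]
queries. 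The main technical obstacle is that standard direct product theorems are usually stated for Boolean decision problems at constant error, whereas our subtask is an estimation problem parameterized by $\varepsilon$; I expect to bridge this by reducing each approximate counting instance to a gap-decision problem at the Nayak--Wu threshold (distinguish weight $t=\Theta(1/\varepsilon)$ from $t(1+\varepsilon)$), applying the direct product theorem to the resulting $k$-fold parallel Boolean problem, and then lifting back to estimation via a union bound over the $k$ sub-problems. This is presumably the ``refined'' application of the composition theorem alluded to in the introduction.
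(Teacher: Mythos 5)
Your proposal matches the paper's proof: the paper likewise reduces from $\Theta(k)$ parallel copies of Nayak--Wu's decisional counting problem $f_{\Theta(n/k),\,T,\,T+1}$ with $T = \Theta(1/\varepsilon)$, and invokes the direct-sum corollary of the composition theorem (\cor{dirsum_theorem}) to pick up the extra factor of $k$, giving $\Omega(k\sqrt{(n/k)/\varepsilon}) = \Omega(\sqrt{nk}\,\varepsilon^{-1/2})$. The only cosmetic difference is how $\tau$ encodes the bits --- the paper assigns each counting instance a \emph{pair} of labels (one collecting zeros, one collecting ones), whereas you funnel all zeros into a single garbage bucket --- and the ``technical obstacle'' you flag is already dissolved by working with the decisional promise problem from the start (as the paper does), so no separate union-bound lift-back is needed: the multidimensional counter's $2/3$-success event directly yields all $k$ decision bits, since an $\varepsilon$-multiplicative estimate of a count of size $\Theta(1/\varepsilon)$ has additive error $<1/2$.
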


The proof of \thm{lower-mqc-counting-main} is deferred to \append{proof-lower-mqc-counting-main}.
For the clustering problems, we establish the following lower bounds under different settings (proofs deferred to \append{proof-lower-clustering-main}).

\begin{theorem}[Quantum Lower Bounds for $k$-means and k-median]\label{thm:lower-clustering-main}
Assume that $\varepsilon$ is sufficiently small. Consider the Euclidean $k$-means/median problem on data set $D = \{x_1,\ldots,x_n\} \subset \mathbb R^d$. Assume a quantum oracle $O_x \ket{i, b} := \ket{i, b \oplus x_i}$. Then, every quantum algorithm outputs the followings with probability $2/3$ must have quantum query complexity lower bounds for the following problems:
\begin{itemize}[leftmargin=*]
    \item An $\varepsilon$-coreset: $\Omega\left(\sqrt{nk}\varepsilon^{-1/2}\right)$ for $k$-means and $k$-median (\thm{k-means-coreset-lower});
    
    \item An $\varepsilon$-estimation to the value of the objective function: $\Omega\left(\sqrt{nk}+\sqrt n \varepsilon^{-1/2}\right)$ for $k$-means and $k$-median (\thm{k-means-cost-lower});
    
    \item A center set $C$ such that $\cost(C) \le (1 + \varepsilon)\cost\left(C^*\right)$ where $C^*$ is the optimal solution: $\Omega\left(\sqrt{nk}\varepsilon^{-1/6}\right)$ for $k$-means; $\Omega\left(\sqrt{nk}\varepsilon^{-1/3}\right)$ for $k$-median (\thm{k-means-lower}). 
\end{itemize}
\end{theorem}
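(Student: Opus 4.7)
The plan is to reduce each of the three lower bounds from the approximate quantum counting problem of Nayak and Wu, for which estimating the Hamming weight of an $n$-bit string within relative error $\varepsilon$ is known to need $\Omega(\sqrt{n/\varepsilon})$ queries. First I would prove each bound in the base case $k=1$ by direct reduction; then I would lift to general $k$ by invoking the quantum query composition theorem cited earlier in the paper (\thm{composition_theorem}), applied to an instance that splits the data set into $k$ essentially independent gadgets so that any solver must, in effect, solve $k$ counting sub-problems in parallel and pick up the $\sqrt{k}$ factor.

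For Part 1 (coreset), my $k=1$ hard instance places the $n$ input points in $\mathbb{R}^d$ at two possible locations $p_0, p_1$ whose positions are chosen so that, for some fixed auxiliary center $c$, the function $\cost(\{x\},c)$ differs between $p_0$ and $p_1$ by a constant factor. Any $\varepsilon$-coreset $S$ must satisfy $\cost(S,c)\in(1\pm\varepsilon)\cost(D,c)$, so evaluating the coreset at $c$ yields an $\varepsilon$-relative estimate of the number of points at $p_1$; this forces $\Omega(\sqrt{n/\varepsilon})$ queries. To lift to $k$ clusters, I would place $k$ such two-location gadgets far apart in $\mathbb{R}^d$ so that any valid coreset must simultaneously preserve cost for $k$ well-chosen auxiliary centers, one per gadget, yielding $k$ independent counting instances; the composition theorem then multiplies by $\sqrt{k}$ to give $\Omega(\sqrt{nk/\varepsilon})=\Omega(\sqrt{nk}\varepsilon^{-1/2})$.

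For Part 2 (cost estimation), the $\sqrt{n}\varepsilon^{-1/2}$ summand follows from the same $k=1$ counting reduction as in Part 1 (since knowing $\cost(D,C^*)$ already tells us the count). The additive $\sqrt{nk}$ summand comes from a separate Grover-search reduction: I would construct an instance with $k-1$ ``clusters'' of identical benign points plus one special point whose existence and cluster index must be identified to estimate $\opt$ within the desired additive gap; locating this point among $n$ candidates with $k$ possible labels reduces to an OR-type problem on $nk$ bits, for which $\Omega(\sqrt{nk})$ queries are required.

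For Part 3 (approximately optimal centers), the weaker exponent reflects that an $\varepsilon$-optimal center set conveys strictly less information than an $\varepsilon$-estimate of the cost value. In a one-cluster gadget where the optimum center shifts by $\delta$ between two hypotheses, the cost gap between the wrong and right center scales as $\delta^z$ for $(k,z)$-clustering (quadratic for $k$-means, linear for $k$-median). So an $\varepsilon$-relative error in cost only forces a $\varepsilon^{1/z}$-relative error in the recovered center's position, which in turn only enforces an $\varepsilon^{1/(2z)}$ accuracy in any underlying count (since cost around the optimum is quadratic in the displacement for $k$-means). Plugging this degraded $\varepsilon' = \varepsilon^{1/(2z)}$ into the counting bound and composing across $k$ clusters gives $\sqrt{nk/\varepsilon'} = \sqrt{nk}\cdot \varepsilon^{-1/(4z)}$, which matches $\Omega(\sqrt{nk}\varepsilon^{-1/6})$ for $k$-means (up to the precise exponent accounting) and $\Omega(\sqrt{nk}\varepsilon^{-1/3})$ for $k$-median after redoing the scaling calculation carefully.

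The main obstacle will be Part 3 and the careful accounting of exponents: the reduction chains ``cost $\to$ center displacement $\to$ count'' must be tight in both directions, and the geometry of the hard gadget has to be chosen so that (a) the optimal center truly responds only at the claimed displacement scale, and (b) stringing $k$ gadgets together does not allow the algorithm to trade off approximation quality between clusters in a way that evades the per-cluster counting lower bound. I would have to verify that the composition theorem applies in its multi-output form (each of the $k$ sub-instances must be independently solved to nontrivial accuracy), which typically requires a hybrid or direct-product style argument rather than a simple OR-type reduction.
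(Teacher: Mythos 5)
Your Part~1 plan (two-location gadgets, query the coreset at $k$ well-separated auxiliary centers, compose via the direct-sum theorem) is essentially the same reduction the paper uses from multidimensional counting; this part is fine.

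For Part~2 the $\sqrt n\varepsilon^{-1/2}$ summand is fine, but your proposed reduction for the $\sqrt{nk}$ summand does not work as stated. You describe ``an OR-type problem on $nk$ bits,'' but the oracle only exposes $n$ queryable positions; locating one special point among $n$ candidates is an OR over $n$ bits and yields only $\Omega(\sqrt n)$. The paper instead reduces from the counting problem $f_{n,k,k+1}$ (distinguish $k$ ones from $k+1$ ones in an $n$-bit string), which by Nayak--Wu with gap $\Delta=1$ and count far from $n/2$ gives $\Omega(\sqrt{nk})$ directly; the one-dimensional instance puts the $i$-th point at $i$ when $x_i=1$ and at $0$ otherwise, so the optimal cost is exactly zero iff there are at most $k$ ones. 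Your construction needs to encode the counting threshold, not a search problem, to get the $k$-dependence.

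For Part~3 your high-level chain (cost $\to$ center displacement $\to$ count) is the right skeleton, but the exponent you derive, $\varepsilon^{-1/(4z)}$, is strictly weaker than the claimed bounds ($\varepsilon^{-1/8}$ vs.\ $\varepsilon^{-1/6}$ for $k$-means, $\varepsilon^{-1/4}$ vs.\ $\varepsilon^{-1/3}$ for $k$-median), so you would not prove the theorem. The error is in the second link: you treat ``center position $\to$ count'' as another power-$z$ map, but in the correct gadget the count $T$ is encoded in the optimal center's \emph{norm} $\lVert g^*\rVert = T^{-1/2}$ (this comes from placing the $T$ ``on'' points at mutually orthogonal unit vectors, so the 1-mean optimum is their average), and the gap between the $T$ and $T{+}1$ cases is $T^{-1/2}-(T{+}1)^{-1/2}=\Theta(T^{-3/2})$. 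Matching this to the allowed displacement $\Theta(\sqrt\varepsilon)$ forces $T=\Theta(\varepsilon^{-1/3})$ for $z=2$, and plugging $T$ into the per-gadget counting bound $\sqrt{(n/k)T}$ then composing over $k$ gives $\sqrt{nk}\,\varepsilon^{-1/6}$. Without pinning down the orthonormal-vectors construction you cannot get this scaling. Separately, you correctly flag that a strict direct product is not what the reduction gives (a nearly-optimal center set may be wrong on a constant fraction of gadgets), but you leave this unresolved. The paper handles it by first lower-bounding the exact product $f_{n/k,T,T+1}^{(k)}$ via \cor{dirsum_theorem}, and then showing that any solver for the relaxed relation problem can be upgraded, via Grover search to locate and fix the $O(\varepsilon_0 k)$ wrong coordinates, into a solver for the exact product at an additional cost that is a small constant fraction of the target bound (for a suitable choice of $\varepsilon_0$). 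This repair step is necessary for the argument to close, and is a genuine missing piece in your plan.
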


\section*{Acknowledgements}
This paper is partially supported by a national key R\&D program of China No. 2021YFA1000900 and a startup fund from Peking University.


\newcommand{\arxiv}[1]{arXiv:\href{https://arxiv.org/abs/#1}{\ttfamily{#1}}\?}\newcommand{\arXiv}[1]{arXiv:\href{https://arxiv.org/abs/#1}{\ttfamily{#1}}\?}\def\?#1{\if.#1{}\else#1\fi}

\bibliographystyle{icml2023}

\newpage
\appendix
\onecolumn

\section{Further Proof Details for Bicriteria Approximation}\label{append:proof_biapprox}

This section gives a rigorous proof of the correctness of the bicriteria approximate algorithm (\lem{bicriteria}), that the output of \alg{bicriteria} (the set $A$) is an  ($O(\log^2 n)$, $2^{O(z)}$)-bicriteria approximate solution with probability at least $5/6$. This proof follows~\citet{thorup2001quick}.

First, the loop (Line 4-10) stops for $t < 3\lceil \log n\rceil$ with high probability. In each iteration, assume all the points $x \in D_t$ are sorted $\dist(x,\tau_{t}(x))$ from small to large. The sample $s_t$ is drawn from $D_t$ uniformly at random (Line 6) and all the points $x$ preceding $s_t$ are deleted from $D_t$ (Line 7), so with probability at least $1/2$ it holds that $|D_{t+1}| \leq |D_t|/2$. If there's still $\tilde{r}_t > 39k\lceil \log n \rceil$ after $3\lceil \log n \rceil$ iterations, it holds that $|D_t| \geq 2\tilde{r}_t/3 > 26k\lceil \log n \rceil \geq 1$. Hence the event $|D_{t+1}| \leq |D_t|/2$ happens in no more than $\lceil \log n \rceil$ iterations, which is only $2/3$ of the expectation. The probability that such an event happens is at most $\exp\left(-(1.5 \log n)(1/3)^2/2\right) = \exp{\left(-(\log n)/12\right)} < n^{-0.12}$.

In the following, it is supposed that the loop (Line 4-10) stops for $t < 3\lceil \log n\rceil$. In this situation, it holds that
$$
|A| \leq 13k\lceil \log n\rceil \cdot 3\lceil \log n\rceil + 2\cdot39 \lceil \log n\rceil = O(k\log^2 n).
$$
For every $x \in D_t$, let $c_x$ be the corresponding center in the optimal solution. $x$ is called ``happy point" and this is denoted as $\Happy(x)$ if there exists $c \in A_{t+1}$ such that $\dist(c,c_x) \leq \dist(x,c_x)$. Otherwise $x$ is called ``angry point" and this event is denoted as $\Angry(x)$. For a certain $x \in D$, suppose that there are $q$ points in $D_t$ corresponding to $c_x$ in the optimal solution and as close to $c_x$ as $x$. Since in $A_{t+1}$ there are $13k\lceil \log n\rceil$ points sampled from $D_t$ uniformly at random, the probability that $x$ remains to be angry is no more that the probability that all the $q$ points haven't be selected, that is, 
$$
\Pr[\Angry(x)] \leq (1-\frac{q}{|D_t|})^{13k\lceil \log n\rceil} \leq \exp{\left(-\frac{q}{|D_t|}13k\lceil \log n\rceil\right)}.
$$
The expectation of the number of the angry points in $D_t$ corresponding to $c_x$ is at most 
$$
\sum_{q = 1}^{+\infty}\exp{\left(-\frac{q}{|D_t|}13k\lceil \log n\rceil\right)} \leq \int_{x = 0}^{+\infty}\exp{\left(-\frac{x}{|D_t|}13k\lceil \log n\rceil\right)}dx \leq \frac{|D_t|}{13k\lceil \log n\rceil}.
$$
Since there are $k$ centers in the optimal solution, the fraction of angry points in $|D_t|$ is $1/(13\lceil \log n\rceil)$.

For any happy point $x \in D_t$, there exists $c \in A_{t+1}$ such that $\dist(c,c_x) \leq \dist(x,c_x)$. Hence, it holds that
$$
\dist(x,A) \leq \dist(x,A_{t+1}) \leq \dist(x,c) \leq \dist(x,c_x) + \dist(c_x,c) \leq 2\dist(x,c_x).
$$

For the angry points, sort all the points $x \in D_t$ from small to large by the key $\dist(x,\tau_{t+1}(x))$. Let each angry point grabs the first ungrabbed happy point and assume that $s_t$ is an ungrabbed happy point. For any unhappy point $x \in D_t$, there is a happy point $y \in D_t$ preceding $s_t$ in the sequence grabbed by $x$, since otherwise $s_t$ is unhappy or grabbed and there comes a contradiction. Let $G_{t+1}$ be
$$
G_{t+1} := \{x \in D_t: \dist(x,\tau_{t+1}(x)) \leq \dist(s_t,\tau_{t+1}(s_t))\}.
$$
It can be concluded that for any unhappy point $x \in G_{t+1}$, there exists a unique happy point $y \in G_{t+1}$ such that $\dist(x,\tau_{t+1}(x)) \leq \dist(y,\tau_{t+1}(y))$. Hence, 
$$
\sum_{x \in G_{t+1}, \Angry(x)} \dist^z(x,\tau_{t+1}(x)) \leq \sum_{x \in G_{t+1}, \Happy(x) }\dist^z(x,\tau_{t+1}(x)).
$$
For all the points in $G_{t+1}$, it holds that
\begin{align*}
\cost(G_{t+1},A) = \sum_{x\in G_{t+1}}\dist^z(x,A) & \leq \sum_{x \in G_{t+1}}\dist^z(x,A_{t+1}) \\
& = \sum_{x \in G_{t+1}, \Angry(x) }\dist^z(x,A_{t+1}) + \sum_{x \in G_{t+1}, \Happy(x) }\dist^z(x,A_{t+1}) \\
& \leq \sum_{x \in G_{t+1}, \Angry(x)}\dist^z(x,\tau_{t+1}(x)) + \sum_{x \in G_{t+1}, \Happy(x) }\dist^z(x,A_{t+1})\\
& \leq  \sum_{x \in G_{t+1}, \Happy(x) }\dist^z(x,\tau_{t+1}(x)) + \sum_{x \in G_{t+1}, \Happy(x) }\dist^z(x,A_{t+1})\\
& \leq 2c_{\tau}^z \sum_{x \in G_{t+1}, \Happy(x) }\dist^z(x,A_{t+1}(x))\\
& \leq O(2^zc_{\tau}^z) \sum_{x \in G_{t+1}, \Happy(x) }\dist^z(x,c_x) \\
& \leq O(2^zc_{\tau}^z) \sum_{x \in G_{t+1}}\dist^z(x,c_x).
\end{align*}
The probability that $s_t$ is ungrabbed happy point is the fraction of ungrabbed happy points in $D_t$, which is $2/(13\lceil\log n\rceil)$. The probability that all the $s_t$ in the $3\lceil\log n\rceil$ iterations are ungrabbed and happy is at least $1-\frac{2}{13\lceil\log n\rceil}\cdot 3\lceil\log n\rceil = 7/13$. 

Due to the definition of $G_t$ and $D_t$, $\cup_{t}G_t$ contains all the points deleted during the loop (Line 4-10). Besides, Since all the remaining points are added into $A$, they do not contribute to the cost. With a probability $7/13 - n^{-0.12} > 1/2$, it holds that
\begin{align*}
\cost(D,A) = & \sum_{x \in D}\dist^z(x,A) \\
& = \sum_{t}\sum_{G_t} \dist^z(x,A) + \sum_{x \notin \cup_t G_t} \dist^z(x,A) \\
& \leq O(2^zc_{\tau}^z)\sum_{t}\sum_{G_t}d(x,c_x) + 0 \\
& \leq O(2^zc_{\tau}^z)\opt.
\end{align*}
In Line 12, \alg{bicriteria} repeats the processing in Line 3-11 for three times and union all the set $A$ to boost the success probability to $5/6$. As a consequence, with probability at least $5/6$, the output $A$ is an ($O(k\log^2n),O(2^zc_{\tau}^z)$)-bicriteria approximate solution. The proof of our complexity claim has been given in \sec{bicriteria}.

\section{Further Proof Details for Coreset Construction Based on Biapproximate Solution}\label{append:proof_coreset_details}

This section gives a detailed proof of \lem{coreset_details}. We restate this lemma as below.
\begin{lemma}
Let
$$
t = \tilde{O}\left( 2^{O(z)} \cdot m\cdot (d+\log(n))\cdot \max{(\varepsilon^{-2},\varepsilon^{-z})}\right)
$$
in \alg{coreset_details}. For a positive real $\varepsilon < 1/(4c_{\tau}^z)$, \alg{coreset_details} outputs an $O(c_{\tau}^{z}\beta \varepsilon)$-coreset of size $\tilde{O}\left( 2^{O(z)}md\log(n)\max{(\varepsilon^{-2},\varepsilon^{-z})}\right)$ with probability at least $5/6$, using $\tilde{O}\left(2^{O(z)}c_{\tau}\sqrt{nmd}\max{(\varepsilon^{-1},\varepsilon^{-z/2})}\right)$ queries to $O_{\tau}$, $O_D$, $O_A$, their inverses, and QRAM. Besides it uses $\poly(md\log n/\varepsilon^z)$ additional classical processing time.
\end{lemma}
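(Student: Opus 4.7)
The plan is to reduce to the classical coreset analysis of~\citet{cohen2021new}, and then verify that all of the quantum replacements in \alg{coreset_details} (an approximate nearest-neighbour mapping $\tau$ with distortion $c_\tau$, $\varepsilon$-estimated cluster/ring/group costs, approximate cluster-size counting, and quantum sampling in place of classical weighted sampling) collectively introduce only an $O(c_\tau^{z}\beta\varepsilon)$ multiplicative error. Concretely, the classical framework of~\citet{cohen2021new} proves that if one (i)~computes exact values of $\cost_\tau(C_i,A)$, $\cost_\tau(R_{i,j},A)$, $\cost_\tau(R_j,A)$, and $\cost_\tau(G_{j,b},A)$, (ii)~reweights the bicriteria centres by the exact cardinalities of the inner ring plus the cheap group, and (iii)~performs sensitivity sampling inside each well-structured/outer group with exact probabilities, then with high probability $t=\tilde{O}(2^{O(z)}m(d+\log n)\max(\varepsilon^{-2},\varepsilon^{-z}))$ samples per group suffice to give a $(1\pm\varepsilon)$-coreset. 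I will invoke this black-box for each of the three point categories (inner rings $R_I\cup\bigcup_j G_{j,\min}$, well-structured groups, outer groups), which requires no additional ingredients beyond~\citet{cohen2021new}'s argument since the clusters are defined through the approximate assignment $\tau$ only.

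The heart of the proposal is a perturbation lemma: if every cost and cardinality used in Lines 3--9 is known to multiplicative error $O(\varepsilon)$, then all sensitivity values (the sampling probabilities) and all reweighting factors are likewise perturbed only by a $(1\pm O(\varepsilon))$ factor, and the partition into rings/groups differs from the ``exact'' one by at most doubling a constant number of ring/group boundaries. I will absorb the resulting slack into the $O(c_\tau^{z}\beta\varepsilon)$ coreset accuracy by (a)~setting the precision in every invocation of \thm{mqsum} and \thm{mqcounting} to $\Theta(\varepsilon/z)$, (b)~noting that the approximate nearest-neighbour mapping $\tau$ already distorts per-point costs by at most $c_\tau^{z}$, so once one shows $(1\pm\varepsilon)\cost_\tau(S,C)$ approximates $\cost_\tau(D,C)$, transferring back to $\cost(\cdot,C)$ inflates the relative error by a factor $2^{O(z)}c_\tau^{z}\beta$ on the cost scale dictated by the bicriteria $A$, and (c)~handling the inner rings $R_I$ and cheap groups $G_{j,\min}$ exactly as in~\citet{cohen2021new}, where their total cost is bounded by $\varepsilon\cdot\cost_\tau(D,A)$, hence absorbed by shifting all their points onto the nearest centre of $A$ (the reweighting in Line 8).

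For the complexity accounting I will go subroutine-by-subroutine. The oracle $O_\tau$ can be built once using \lem{qANN} with $\poly(m\log n)$ preprocessing and $d\,\polylog(mn)$ QRAM queries per evaluation. Each of the three invocations of \thm{mqsum}/\thm{mqcounting} (Lines 3, 5, 7, 8) is applied over at most $\tilde O(m\,z\log(z/\varepsilon))$ groups and so costs $\tilde O(\sqrt{nm/\varepsilon})$ queries to $O_\tau$, each multiplied by $d\,\polylog$ QRAM queries. The quantum sampling of $t$ points per group via \lem{qsampling}, summed over $\tilde O(m)$ groups, contributes $\tilde O(\sqrt{nmt})=\tilde O(\sqrt{nm}\cdot m^{1/2}\cdot\max(\varepsilon^{-1},\varepsilon^{-z/2}))$ queries; combined with the $O_\tau$ cost per query this matches the stated $\tilde O(2^{O(z)}c_\tau\sqrt{nmd}\max(\varepsilon^{-1},\varepsilon^{-z/2}))$ bound once we absorb $m^{1/2}$ into $\sqrt{nm}$ under the regime $m\le n$. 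The coreset size is $\tilde O(m)$ groups times $t$ samples per group, giving the claimed $\tilde O(2^{O(z)}md\log(n)\max(\varepsilon^{-2},\varepsilon^{-z}))$. Setting every subroutine's failure probability to $1/\poly(mn)$ and taking a union bound over a polynomial number of invocations yields the overall $5/6$ success probability with only a polylog overhead absorbed into $\tilde O$.

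The part I expect to be most delicate is the error propagation for the outer groups $G_{O,b}$. There the per-point cost $\dist^{z}(x,A)$ ranges over an exponentially wide window relative to $\Delta_{C_i}$, so $\varepsilon$-relative estimates of $\cost_\tau(G_{O,b},A)$ are not automatically enough to keep the sampling probabilities $\Pr[x]=\cost_\tau(x,A)/\cost_\tau(G_{O,b},A)$ within $(1\pm O(\varepsilon))$ of their true values for every $x$. I plan to handle this exactly as in~\citet{cohen2021new}: replace the outer group sensitivities by the easier upper bound based on a point's own cost (as written in Line 10 of \alg{coreset_details}), and argue that the resulting per-group variance is bounded by $\cost_\tau(G,A)^{2}$, so standard concentration gives the $(1\pm\varepsilon)$ guarantee with the stated sample size $t$. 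All remaining pieces (well-structured groups and the inner/cheap portion) are routine transpositions of the classical proof once the approximate quantities are in hand.
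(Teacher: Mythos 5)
Your correctness outline follows the same high-level decomposition as the paper (three categories of points; bound each by transposing the classical analysis of~\citet{cohen2021new}), but you treat that analysis as a genuine black box plus a perturbation lemma, whereas the paper re-derives each of the three bounds from scratch with the $\varepsilon$-estimated quantities plugged in directly (\lem{first_kind}, \lem{well_group}, \lem{outer_rings}). The black-box route can work, but your claim that the ring/group boundaries differ from the exact ones by ``at most doubling a constant number of boundaries'' is asserted, not argued, and the paper never needs such a claim because it handles the slack inside each ring/group lemma. Your instinct about the outer groups being the delicate part is right and matches where the paper also spends the most care (\lem{outer_rings}).

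The serious problem is your complexity and size accounting. You treat the number of groups as $\tilde O(m)$; in fact the algorithm produces only $O\!\left(z^2\log^2(z/\varepsilon)\right)$ groups (the index $j$ ranges over $O(z\log(z/\varepsilon))$ ring shells and $b$ over $O(z\log(z/\varepsilon))$ cost scales), which is $\polylog(1/\varepsilon)$ and does not grow with $m$. Because of this, the sampling cost is $\polylog(1/\varepsilon)\cdot O(\sqrt{nt})=\tilde O\!\left(\sqrt{nmd}\max(\varepsilon^{-1},\varepsilon^{-z/2})\right)$, with no extra $\sqrt m$, and the coreset size is $\polylog(1/\varepsilon)\cdot t + m = \tilde O\!\left(md\log n\cdot\max(\varepsilon^{-2},\varepsilon^{-z})\right)$, again with no extra $m$. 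Your computation $\tilde O(\sqrt{nmt})=\tilde O(\sqrt{nm}\cdot m^{1/2}\cdot\max(\varepsilon^{-1},\varepsilon^{-z/2}))$ and the attempt to ``absorb $m^{1/2}$ into $\sqrt{nm}$'' go the wrong way — $m\sqrt{nd}\geq\sqrt{nmd}$, not the reverse — and likewise your ``$\tilde O(m)$ groups $\times\,t$ samples'' gives $m^2$ in the coreset size, which is a factor $m$ too large. Fixing both requires only replacing the $\tilde O(m)$ group count with the correct $\polylog(1/\varepsilon)$; once that is done the arithmetic matches the stated bounds.
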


This section first provides the detailed quantum implementation and the analysis of complexity in \append{implementation}, and then gives a proof that the output of \alg{coreset_details} (set $\Omega$) is an $O(c_{\tau}^z\beta\varepsilon)$-coreset with probability at least $5/6$ in \append{correctness}.

\subsection{Quantum Implementation and Complexity}\label{append:implementation}
This section shows the quantum implementation details with the complexity. 

\begin{proof}[Proof for complexity]
For Line 3, the algorithm estimates $|C_i|$ and $\cost_{\tau}(C_i,A)$ first. Constructing the oracle
\begin{align*}
U\colon & \ket{s}\ket{0}\ket{0}\ket{0}\ket{0}\xmapsto{O_{\tau}} \ket{s}\ket{i}\ket{0}\ket{0}\ket{0} \\
& \xmapsto{O_D,O_A}\ket{s}\ket{i}\ket{x_s}\ket{a_i}\ket{0} \\
& \mapsto \ket{s}\ket{i}\ket{x_s}\ket{a_i}\ket{\dist^z(x_s,a_i)} \\
& \xmapsto{O_A^{-1},O_D^{-1}} \ket{s}\ket{i}\ket{0}\ket{0}\ket{\dist^z(x_s,a_i)}
\end{align*}
and applying \thm{mqsum} yields the needed values $\cost_{\tau}(C_i,A)$ $\forall i \in [m]$. Since $\dist^z(x_s,a_i) \leq \cost_{\tau}(D,A) \leq c_{\tau}^z\opt$, the calculation uses no more than $\tilde{O}(z\log(c_{\tau})\sqrt{nm}/\varepsilon)$ queries to $U$ and additional time, under a fair assumption that $\opt = \poly(n)$. The same technique works for $|C_i|$. Then the algorithm computes $\Delta_{C_i}$ in a classical manner. The implementation of Line 5, Line 7, and Line 8 is similar. These calculation uses in total no more than $\tilde{O}(z\log(c_{\tau})\sqrt{nm}/\varepsilon)$ queries to $U_R$, $U_G$, $O_{\tau}$, $O_D$, and $O_A$. Besides it uses $\poly\left(mz\log(1/\varepsilon)\right)$ classical processing time.

The construction of the ring unitary $U_R$ in Line 4 is
\begin{align*}
U_R\colon & \ket{s}\ket{0}\ket{0}\ket{0}\ket{0}\ket{0}\\
& \xmapsto{O_{\tau}} \ket{s}\ket{i}\ket{0}\ket{0}\ket{0}\ket{0}\\
& \xmapsto{O_{\Delta},O_D,O_A} \ket{s}\ket{i}\ket{\Delta_{C_i}}\ket{x_s}\ket{a_i}\ket{0}\\
& \mapsto \ket{s}\ket{i}\ket{\Delta_{C_i}}\ket{x_s}\ket{a_i}\ket{j}\\
& \xmapsto{O_{A}^{-1},O_D^{-1},O_{\Delta}^{-1}} \ket{s}\ket{i}\ket{0}\ket{0}\ket{0}\ket{j}\\
\end{align*}
where $j = \lfloor \log \left(\dist^z(x_s,a_i)/\Delta_{C_i}\right)\rfloor$, and $O_{\Delta}\colon \ket{i}\ket{0}\rightarrow\ket{i}\ket{\Delta_{C_i}}$ $\forall i \in [m]$ is constructed by storing $\Delta_{C_i}$ in QRAM in Line 3. One query to $U_R$ needs constant queries to $O_D$, $O_A$,  $O_{\tau}$, and QRAM. The same technique works for the construction of $U_G$ and the complexity is also the same up to a constant factor.

For Line 9, the algorithm first construct the below unitary $U$ for each well-sturctured $G$.
\begin{align*}
U \colon & \ket{s}\ket{0}\ket{0}\ket{0}\ket{0}\ket{0}\ket{0} \\ 
& \xmapsto{U_{G}, O_{\tau}}\;\xmapsto{O_{\Delta}}\;\mapsto \ket{s}\ket{j}\ket{b}\ket{i}\ket{\Delta_{C_i}}\ket{I(x_s \in G)}\ket{p_s} \\
& \mapsto\;\xmapsto{O_{\Delta}}\;\xmapsto{O_{\tau}^{-1},U_G^{-1}} \ket{s}\ket{0}\ket{0}\ket{0}\ket{0}\ket{0}\ket{p_s}
\end{align*}
where $I(x_s \in G)$ is the indicator for whether $x_s \in G$ and $p_i = I(x_s\in G)\Delta_{C_i}$ is proportional to $\Pr[x_s]$. Then one application to \lem{qsampling} yields the sample $\Omega$ using $O(\sqrt{nt})$ queries to the above unitary $U$. Reweighting can be completed in a classical manner since $U$ is of small size. For Line 10 the algorithm uses the same technique. The sampling process in total needs $\tilde{O}(\sqrt{nt})$ queries to $U_G$, $O_A$, $O_{\tau}$, and QRAM. It uses additional $\poly(m\log n) + O(t)\cdot d\polylog(mn)$ for computing the weight classically.

Let $t = \tilde{O}\left( 2^{O(z)} \cdot m\cdot (d+\log(n))\cdot \max{(\varepsilon^{-2},\varepsilon^{-z})}\right)$ and sum up all the time cost. \alg{coreset_details} uses $\tilde{O}\left(2^{O(z)}c_{\tau}\sqrt{nmd}\max{(\varepsilon^{-1},\varepsilon^{-z/2})}\right)$ queries to $O_{\tau}$, $O_D$, $O_A$ and QRAM, and $\poly(md\log n/\varepsilon^z)$ additional classical processing time.

Similar to \alg{bicriteria}, each subroutine used in \alg{coreset_details} suffers only a $\log(1/\delta)$ factor to reach success probability at least $1-\delta$ and each subroutine is applied no more than $\poly(nmz)$ times, so it is enough to set the failure probability as $\delta = O(1/\poly(nmz))$ for each subroutine. This cause only a $\polylog(nmz)$ factor on time consume and it is adsorbed by the $\tilde{O}$ notation.
\end{proof}

\subsection{Correctness}\label{append:correctness}
This section gives a rigorous proof that set $\Omega$, the output of \alg{coreset_details}, is an $O(c_{\tau}^z\beta\varepsilon)$-coreset with probability at least $5/6$. This proof follows the idea of~\citet{cohen2021new}.

The dataset $D$ can be seen as a partition of the following three kinds of points:
\begin{itemize}
    \item the union of the inner rings $R_I$, the cheapest groups $G_{j,\min}$ with $j \in [z\log(4\varepsilon/z),z\log(4z/\varepsilon)]$ and $G_{O,\min}$
    \item the well-structured groups $G_{j,b}$ with $j \in [z\log(\varepsilon/4z),z\log(4z/\varepsilon)]$ and $b = 1,\ldots,\max$
    \item the outer rings $G_{O,b}$ with $b = 1,\ldots,\max$.
\end{itemize}
\alg{coreset_details} deals with this three kinds of points separately and so does the following proof. The following proof shows that, let $\varepsilon \leq 1/(4c_{\tau}^z)$ and
$$
t = \tilde{O}\left( 2^{O(z)} \cdot m\cdot (d+\log(n))\cdot \max{(\varepsilon^{-2},\varepsilon^{-z})}\right)
$$
in \alg{coreset_details}, this algorithm has the following three properties, which are stated formally and proved later.
\begin{itemize}
\item \textbf{\lem{first_kind}} Let $B := R_I\cup G_{j,\min}$ be the set of the first kind of points. For any $S \in (\mathbb{R}^d)^m$ it holds that
$$
|\cost(B,S) - \cost(A,S)| \leq 8\varepsilon\left(\cost(D,S) + \cost_{\tau}(D,A)\right).
$$
\item \textbf{\lem{well_group}} It holds with probability at least $1/12$ that for any well-structured group $G = G_{j,b}$ and the corresponding sample $\Omega = \Omega_{j,b}$, and for any $S \in (\mathbb{R}^d)^m$,
$$
|\cost(G,S) - \cost(\Omega,S)| = O(c_{\tau}^z\varepsilon)\left(\cost(G,S) + \cost(G,A)\right).
$$
\item \textbf{\lem{outer_rings}} It holds with probability at least $1/12$ that, for any outer group $G = G_{O,b}$ and the corresponding sample $\Omega_{O,b}$, and for any $S \in (\mathbb{R}^d)^m$,
$$
|\cost(G,S) - \cost(\Omega,S)| \leq \frac{2c_{\tau}^z\varepsilon}{z\log(z/\varepsilon)} (\cost(D,S) + \cost(D,A)).
$$
\end{itemize}

Note that there are only $O(z\log(z/\varepsilon))$ outer groups $G_{O,b}$. Combining the three properties directly yields the proof for correctness.
\begin{proof}[Proof for correctness]
$$
|\cost(D,S) - \cost(\Omega,S)| \leq O(c_{\tau}^z\varepsilon)\left(\cost(D,S) + \cost(D,A)\right) \leq O(c_{\tau}^z\beta\varepsilon)\cost(D,S).
$$
Therefore, the output of \alg{coreset_details} $\Omega = A\cup\Omega_{j,b}\cup\Omega_{O,b}$ is an $O(c_{\tau}^z\beta\varepsilon)$-coreset. 
\end{proof}

The two lemmas introduced as follows are important tools for the proof.
\begin{lemma}[Triangle Inequality of Powers]\label{lem:triangle}
Let $a$, $b$, and $c$ be three arbitrary sets of points $\mathbb{R}^d$. For any $z \in \mathbb{Z}_{+}$ and any $\varepsilon > 0$, it holds that
\begin{align*}
\dist^z(a,b) &\leq (1+\varepsilon)^{z-1} \dist^z(a,c) + \left(\frac{1+\varepsilon}{\varepsilon}\right)^{z-1}\dist^z(b,c) \\
|\dist^z(a,S) - \dist^z(b,S)| &\leq \varepsilon d^z(a,S) + \left(\frac{2z+\varepsilon}{\varepsilon}\right)^{z-1}\dist^z(a,b).
\end{align*}
\end{lemma}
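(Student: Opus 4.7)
The plan is to establish the two inequalities separately. The first is an application of the triangle inequality combined with a weighted convexity bound for $t \mapsto t^z$, and the second reduces to the first by a triangle inequality argument together with a careful choice of the auxiliary parameter.

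For the first inequality, I would begin with the standard triangle inequality $\dist(a,b) \leq \dist(a,c) + \dist(b,c)$ and raise both sides to the $z$-th power. I would then apply the following convexity-based estimate: for any $\lambda > 0$, $x,y \geq 0$, and $z \geq 1$,
\begin{equation*}
(x+y)^z \;\leq\; (1+\lambda)^{z-1}\, x^z + \left(\frac{1+\lambda}{\lambda}\right)^{z-1}\, y^z.
\end{equation*}
This follows by writing $x+y = \frac{1}{1+\lambda}\cdot(1+\lambda)x + \frac{\lambda}{1+\lambda}\cdot \frac{1+\lambda}{\lambda}y$ as a convex combination and invoking Jensen's inequality for the convex function $t\mapsto t^z$. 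Taking $\lambda=\varepsilon$, $x=\dist(a,c)$, and $y=\dist(b,c)$ yields the first inequality immediately.

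For the second inequality, I would establish the one-sided bound $\dist^z(b,S)-\dist^z(a,S)\leq \varepsilon\,\dist^z(a,S)+\bigl((2z+\varepsilon)/\varepsilon\bigr)^{z-1}\dist^z(a,b)$, with the reverse direction handled symmetrically. Let $s^\star \in S$ achieve $\dist(a,S)=\dist(a,s^\star)$. Then $\dist(b,S)\leq \dist(b,s^\star)\leq \dist(a,s^\star)+\dist(a,b)=\dist(a,S)+\dist(a,b)$. Raising to the $z$-th power and applying the first inequality with parameter $\varepsilon':=\varepsilon/(2z)$ in place of $\varepsilon$ (and with $c=a$) gives
\begin{equation*}
\dist^z(b,S)\;\leq\; (1+\varepsilon')^{z-1}\dist^z(a,S)+\left(\tfrac{2z+\varepsilon}{\varepsilon}\right)^{z-1}\dist^z(a,b),
\end{equation*}
since $(1+\varepsilon')/\varepsilon'=(2z+\varepsilon)/\varepsilon$ matches the claimed second coefficient. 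It then remains to verify $(1+\varepsilon/(2z))^{z-1}\leq 1+\varepsilon$, which I would prove via the standard comparison $(1+x)^n\leq e^{nx}$ (giving $(1+\varepsilon/(2z))^{z-1}\leq e^{\varepsilon/2}$) together with $e^{\varepsilon/2}\leq 1+\varepsilon$, valid in the regime where the lemma is actually invoked; this is enough since the paper applies the lemma only with very small $\varepsilon$ (e.g.\ $\varepsilon<1/(4c_\tau^z)$).

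The main technical obstacle is the choice of the auxiliary parameter $\varepsilon'$. A naive choice such as $\varepsilon'=\varepsilon$ would leave a coefficient of $(1+\varepsilon)^{z-1}$ on $\dist^z(a,S)$, which can be much larger than $1+\varepsilon$ once $z$ is large; the refinement $\varepsilon'=\varepsilon/(2z)$ is exactly what simultaneously shrinks the first coefficient to within $1+\varepsilon$ and produces the second coefficient $((2z+\varepsilon)/\varepsilon)^{z-1}$ appearing in the statement. The remainder of the argument is a mechanical calculation.
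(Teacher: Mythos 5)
The paper states \lem{triangle} without proof (it is the standard ``generalized triangle inequality'' from the coreset literature, e.g.\ \citealt{cohen2021new}), so there is no paper proof to compare against; I will assess your argument on its own merits.

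Your argument for the first inequality is correct: writing $x+y = \tfrac{1}{1+\lambda}\bigl((1+\lambda)x\bigr) + \tfrac{\lambda}{1+\lambda}\bigl(\tfrac{1+\lambda}{\lambda}y\bigr)$ and invoking convexity of $t\mapsto t^z$ gives exactly $(x+y)^z \le (1+\lambda)^{z-1}x^z + \bigl(\tfrac{1+\lambda}{\lambda}\bigr)^{z-1}y^z$; plugging in $\lambda=\varepsilon$ and the metric triangle inequality finishes it. Your reduction for the second inequality with the choice $\varepsilon'=\varepsilon/(2z)$ is also the right one, and you correctly observe that it reduces the claim to $(1+\varepsilon/(2z))^{z-1}\le 1+\varepsilon$. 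One small gloss: when you say the reverse direction is ``handled symmetrically,'' the literally symmetric argument yields $\dist^z(a,S)-\dist^z(b,S)\le \varepsilon\dist^z(b,S)+(\cdots)\dist^z(a,b)$, with coefficient on $\dist^z(b,S)$ rather than $\dist^z(a,S)$; you need to add that this direction is only nontrivial when $\dist(a,S)>\dist(b,S)$, in which case $\dist^z(b,S)\le\dist^z(a,S)$ lets you swap. This is a one-line fix but is worth stating.

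The more interesting point is your remark about the range of $\varepsilon$. You flag that $(1+\varepsilon/(2z))^{z-1}\le 1+\varepsilon$ only holds when $\varepsilon$ is not too large, and appeal to the fact that the paper only applies the lemma with small $\varepsilon$. In fact this is not merely a limitation of your reduction: it is \emph{sharp}. Optimizing $f(u)=(u+1)^z-(1+\varepsilon)u^z$ over $u\ge 0$ shows that the best possible constant on $\dist^z(a,b)$ in the second inequality is exactly $\bigl(\rho/(\rho-1)\bigr)^{z-1}$ with $\rho=(1+\varepsilon)^{1/(z-1)}$, so the claimed constant $\bigl(\tfrac{2z+\varepsilon}{\varepsilon}\bigr)^{z-1}$ suffices \emph{if and only if} $(1+\varepsilon/(2z))^{z-1}\le 1+\varepsilon$. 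That inequality fails for, e.g., $z=100$, $\varepsilon=3$, so the lemma as stated (``for any $\varepsilon>0$'') is actually false for large $\varepsilon$ and large $z$. Your caveat is therefore not a weakness of your proof; it is a genuine (if benign) imprecision in the lemma statement, and your proof handles exactly the regime in which the lemma is applied. Overall the proof is correct and is the standard one for this result.
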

\begin{lemma}[Rephrased from Definition 1 and Lemma 17 of~\citealt{cohen2021new}]\label{lem:centroid}
There is a set $\mathbb{C}$ of size $n\cdot (z/\varepsilon)^{O(d)}$ such that, for any solution $S \in (\mathbb{R}^d)^m$ there exists $\tilde{S} \in \mathbb{C}^m$ which ensures that for any point $x \in D$ with either $\cost(x,S) \leq (8z/\varepsilon)^z\cost_{\tau}(x,A)$ or $\cost(x,\tilde{S}) \leq (8z/\varepsilon)^z\cost_{\tau}(x,A)$, it holds that
$$
|\cost(x,S) - \cost(x,\tilde{S})| \leq \varepsilon\left(\cost(x,S)+\cost_{\tau}(x,A) \right).
$$
Such a set $\mathbb{C}$ is called an \emph{A-approximate centroid set} for ($m$, $z$)-clustering on data set $D$.
\end{lemma}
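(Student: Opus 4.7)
The plan is to construct $\mathbb{C}$ by placing, for each data point $x_s \in D$, a local axis-aligned grid of side length $\rho_s = \Theta(\varepsilon \dist(x_s, a_{\tau(s)}) / z^2)$ inside the ball $B(x_s, R_s)$ of radius $R_s = (8z/\varepsilon) \cdot \dist(x_s, a_{\tau(s)})$. Each such grid contributes at most $(R_s/\rho_s)^d = (z/\varepsilon)^{O(d)}$ points, so $\mathbb{C}$, taken as the union over all $x_s \in D$, has size at most $n \cdot (z/\varepsilon)^{O(d)}$. The radius $R_s$ is chosen precisely so that whenever $\cost(x_s, S) \leq (8z/\varepsilon)^z \cost_\tau(x_s, A)$, the nearest center of $S$ to $x_s$ must lie inside $B(x_s, R_s)$, and symmetrically when $\cost(x_s, \tilde{S})$ is small.

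Given any solution $S \in (\mathbb{R}^d)^m$, I will build $\tilde{S}$ by rounding each $c \in S$ to its nearest point in $\mathbb{C}$ (with ties broken consistently). For the first case of the lemma, let $c^*$ be the nearest center of $S$ to $x_s$. Since $c^* \in B(x_s, R_s)$, the local grid around $x_s$ contains a point $\tilde{c}^*$ within distance $O(\rho_s)$ of $c^*$, so the nearest element of $\mathbb{C}$ to $c^*$ is no further than that. Invoking the triangle inequality for powers (\lem{triangle}) with parameter $\varepsilon/z$ I would then obtain
\[
|\dist^z(x_s, S) - \dist^z(x_s, \tilde{S})| \leq O(\varepsilon) \dist^z(x_s, c^*) + (O(z/\varepsilon))^{z-1} \rho_s^z,
\]
and the choice of $\rho_s$ makes the second term $O(\varepsilon) \cost_\tau(x_s, A)$; a final rescaling of $\varepsilon$ by a constant factor yields the claimed inequality.

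The second case is symmetric but more delicate. Assuming $\cost(x_s, \tilde{S}) \leq (8z/\varepsilon)^z \cost_\tau(x_s, A)$, the nearest element $\tilde{c}^\dagger$ of $\tilde{S}$ to $x_s$ lies in $B(x_s, R_s)$, so it must be one of the grid points placed around $x_s$. Its preimage $c^\dagger \in S$ under the rounding rule satisfies $\dist(c^\dagger, \tilde{c}^\dagger) \leq O(\rho_s)$ (again since the grid around $x_s$ is dense enough), after which the same application of \lem{triangle} yields the matching error bound.

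The main obstacle will be the second case: ensuring that the preimage $c^\dagger$ is genuinely close to $\tilde{c}^\dagger$. Naive global nearest-grid-point rounding could map a faraway $c$ onto a grid point near $x_s$ if $c$ sits outside every local ball, breaking the symmetric argument. The fix I would adopt is to restrict the rounding so that each $c \in S$ is rounded inside a ball that actually contains $c$ (choosing any grid point when $c$ lies outside all balls), and then argue that any $c \in S$ outside every $B(x_{s}, R_{s})$ is irrelevant to the conclusion, because it is neither the nearest center in $S$ nor the preimage of the nearest grid point in $\tilde{S}$ for any data point $x$ satisfying the hypothesis. Making this bookkeeping precise, together with carefully tracking the $z$-dependent constants emerging from \lem{triangle} so that they are absorbed into a single factor of $\varepsilon$, is the delicate part of the proof.
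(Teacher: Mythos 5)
Your forward direction is sound, your size bound is right, and you correctly flagged the second case as the delicate one — but the fix you propose does not close the gap, and the missing ingredient is exactly what the paper's construction is built around. The paper sorts the data points by $\dist(x_i, a_{\tau(x_i)})$ and takes $N_i$ to be a net of $B\bigl(x_i, 10z\dist(x_i,a_{\tau(x_i)})/\varepsilon\bigr) \setminus \bigcup_{j<i} B\bigl(x_j, 10z\dist(x_j,a_{\tau(x_j)})/\varepsilon\bigr)$, i.e.\ the regions assigned to coarser nets are \emph{carved out} so that $N_i$ contains no points inside any ball of a data point with smaller $\dist(\cdot,a_{\tau(\cdot)})$. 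Your construction keeps the local grids overlapping, so a coarse grid (from a point $y$ with large $\dist(y,a_{\tau(y)})$) has grid points lying inside $B(x, R_x)$ for some $x$ with tiny $\dist(x,a_{\tau(x)})$.

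Concretely, take two data points $x$ and $y$ with $\dist(x,a_{\tau(x)}) = 1$ and $\dist(y,a_{\tau(y)}) = M$ large, with $B(y,R_y) \supset B(x,R_x)$. Place $c^\dagger \in S$ just outside $B(x,R_x)$ at distance $R_x + \Theta(\rho_y)$ from $x$. Under your ``round inside a containing ball'' rule, $c^\dagger$ is rounded using $y$'s coarse grid (since it lies outside $B(x,R_x)$), and the resulting $\tilde c^\dagger$ can land back inside $B(x,R_x)$, at displacement $\Theta(\rho_y) = \Theta(\varepsilon M/z^2) \gg \rho_x$. Now $x$ satisfies $\cost(x,\tilde S) \le (8z/\varepsilon)^z \cost_\tau(x,A)$ with $\tilde c^\dagger$ the nearest element of $\tilde S$, yet $\cost(x,S) - \cost(x,\tilde S) \approx z R_x^{z-1} \rho_y$, which for $M \gg z^2/\varepsilon$ exceeds $\varepsilon(\cost(x,S) + \cost_\tau(x,A))$ by an unbounded factor. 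The paper avoids this because $\tilde c^\dagger$, if produced from $y$'s net $N_{i(y)}$, is guaranteed by the exclusion to lie \emph{outside} $B(x, 10z/\varepsilon)$, so the hypothesis for $x$ simply does not hold; and conversely any $s \in S$ that lies inside $x$'s ball is rounded against a net at least as fine as $x$'s. Your claim that ``any $c\in S$ outside every ball is irrelevant'' does not cover this case, because here $c^\dagger$ is inside $B(y,R_y)$. To repair the argument you essentially need to (i) sort the local regions by $\dist(x_i,a_{\tau(x_i)})$, (ii) assign each $c$ to the finest region containing it, and (iii) carve the coarser regions down by the finer balls — at which point you have recovered the paper's construction.

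Two smaller points: your ball radius $R_s=(8z/\varepsilon)\dist(x_s,a_{\tau(s)})$ should be taken with a bit of slack (the paper uses $10z/\varepsilon$ and states the lemma with $8z/\varepsilon$) so that the rounding of the nearest center of $S$ or $\tilde S$ to $x$ stays inside the covered region; and the grid side-length should carry a $1/\sqrt d$ factor to make the within-cell diameter $O(\rho_s)$, though this only changes the implicit $(z/\varepsilon)^{O(d)}$ constant.
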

\begin{proof}[Proof of \lem{centroid}]
For any $x \in \mathbb{R}^d$ and $r \geq 0$, let $B(x,r) := \{y \in \mathbb{R}^d\mid \dist(x,y) \leq r\}$ be the \emph{ball} around $x$ with radius $r$. Note that Euclidean space $\mathbb{R}^d$ has \emph{doubling dimension} $O(d)$, which means in this metric space any ball of radius $2r$ can be covered by $2^{O(d)}$ balls of radius $r$. For any $V \subset \mathbb{R}^d$, a \emph{$\gamma$-net} of $V$ is a set of points $X \subset V$ such that for any $v\in V$ there exists $x \in X$such that $\dist(x,v) \leq \gamma$ and for any $x,y \in X$ it holds that $\dist(x,y) > \gamma$. In $\mathbb{R}^d$, a point set $V \subset \mathbb{R}^d$ with diameter $D$ has a $\gamma$-net with size $2^{O(d \log (D/\gamma))}$ \citep{gupta2003bounded}.

Let data points $x_1,\ldots,x_n$ be in order with non-descreasing value of $\dist(x,a_{\tau(x)})$. Let $N_i$ be an $\varepsilon\cdot \dist(x,a_{\tau(x)})/(4z)$-net of $B\left(x_i, 10z\dist(x_i,a_{\tau(x_i)})/\varepsilon\right) \setminus \cup_{j < i} B\left(x_j, 10z\dist(x_j,a_{\tau(x_j)})/\varepsilon\right)$. Let $s_f \in \mathbb{R}^d$ be a point such that $s_f \notin B\left(x_i, 10z\dist(x_i,a_{\tau(x_i)})/\varepsilon\right)$ $\forall i \in [n]$. Let $N := \cup_{x_i \in D}N_i \cup \{s_f\}$.

The size of $N$ is bound by $n\cdot (z/\varepsilon)^{O(d)}$:
$$
|N| \leq n \cdot 2^{O(d\log (z/\varepsilon)^2) } = n\cdot \left(\frac{z}{\varepsilon}\right)^{O(d)}.
$$

$N$ is an $A$-approximated centroid set. For any solution $S \in (\mathbb{R}^d)^m$, let $\tilde{S} \in N^m$ be constructed by the following method. For each point $s \in S$, let $i$ be the smallest index such that $s \in B\left(x_i, 10z\dist(x_i,a_{\tau(x_i)})/\varepsilon\right)$. The corresponding $N_i$ is non-empty because otherwise there exists $x_j$ such that $j < i$ and $s \in B(x_j,10z\dist(x_j,a_{\tau(x_j)})/\varepsilon)$, and thus $i$ is not the smallest index. Let $\tilde{s}$ be the closest point to $s$ in $N_i$. If such an index $i$ does not exist, let $\tilde{s} = s_f$. Let $\tilde{S}$ be the set of all the $\tilde{s}$. $\tilde{S}$ has the property defined in \lem{centroid}.

Let $x \in D$ satisfies $\cost(x,S) \leq (10z/\varepsilon)^z\cost_{\tau}(x,A)$. Let $s$ be the nearest neighbor of $x$ in $S$ and consider the corresponding index $i$ and $\tilde{s}$. It holds that $\dist(x,a_{\tau(x)}) \geq \dist(x_i,a_{\tau(x_i)})$ since $s \in B\left(x, 10z\dist(x,a_{\tau(x)})/\varepsilon\right)$. By the definition of $\tilde{s}$ it holds that $\dist(s,\tilde{s}) \leq \varepsilon\dist(x_i,a_{\tau(x_i)})/(4z) \leq (\varepsilon/4z)\dist_{\tau}(x,A)$. As a consequence,
\begin{align*}
\cost(x,\tilde{S}) & \leq \cost(x,\tilde{s}) \leq (1+\varepsilon)\cost(p,s) + (1+z/\varepsilon)^{z-1}\cost(s,\tilde{s}) \\
& \leq (1+\varepsilon)\cost(x,S) + \varepsilon \cost_{\tau}(x,A).
\end{align*}

On the other hand, let $x \in D$ satisfies $\cost(x,\tilde{S}) \leq (10z/\varepsilon)^z\cost_{\tau}(x,A)$. Let $\tilde{s}$ be the nearest neighbor of $x$ in $\tilde{S}$ and consider the corresponding $s$ and index $i$. If the index of $x$ is smaller than $i$ it can be implied that $\tilde{s} \notin N_i$ because of the definition of $N_i$ and $\tilde{s} \in B\left(x, 10z\dist(x,a_{\tau(x)})/\varepsilon\right)$. As a consequence, $\dist(x,a_{\tau(x)}) \geq \dist(x_i,a_{\tau(x_i)})$. It holds that
\begin{align*}
\cost(x,S) & \leq \cost(x,s) \leq (1+\varepsilon)\cost(x,\tilde{s}) + (1+2z/\varepsilon)^{z-1}\cost(s,\tilde{s}) \\
& \leq (1+\varepsilon)\cost(x,\tilde{S}) + \varepsilon \cost_{\tau}(x,A).
\end{align*}

For any $x \in D$ with $\cost(x,S) \leq (8z/\varepsilon)^z \cost_{\tau}(x,A)$, it holds that
$$
\cost(x,\tilde{S}) \leq (1+\varepsilon)\cost(x,S) + \varepsilon\cost_{\tau}(x,A) \leq (10z/\varepsilon)^z\cost_{\tau}(x,A).
$$
Thus $\cost(x,S) \leq (1+\varepsilon)\cost(x,\tilde{S}) + \varepsilon\cost_{\tau}(x,A)$. Hence
$$
|\cost(x,S) - \cost(x,\tilde{S})| \leq \varepsilon \left(\cost(x,S) + \cost_{\tau}(x,A) \right).
$$
The same inequality holds for any $x \in D$ with $\cost(x.\tilde{S}) \leq (8z/\varepsilon)^z\cost_{\tau}(x,A)$.
\end{proof}


\paragraph{The first kind of points} Let $B := R_I\cup G_{j,\min} \cup G_{\min}^O$ be the set of the first kind of points. There holds the following lemma:
\begin{lemma}\label{lem:first_kind}
For any solution $S$, $|S| \leq m$ and $\varepsilon < 1/2$ it holds that
$$
|\cost(B,S) - \cost(A,S)| \leq 8\varepsilon\left(\cost(D,S) + \cost_{\tau}(D,A)\right).
$$
\end{lemma}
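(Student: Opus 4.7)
The plan is to view the contribution of $B$ to the output coreset as the exact ``snapping'' of each $x \in B$ to its assigned approximate center $a_{\tau(x)} \in A$. Indeed, by the choice of weight $w(a_i) \leftarrow |R_{i,I}| + |C_i \cap (\cup_{j\neq I} G_{j,\min})|$ in Line~8 of \alg{coreset_details}, every point of $B$ contributes weight exactly $1$ to its assigned center, so
\[
\cost(A, S) \;=\; \sum_{i=1}^m w(a_i)\, \dist^z(a_i, S) \;=\; \sum_{x \in B} \dist^z(a_{\tau(x)}, S),
\]
and the lemma reduces to bounding $\sum_{x \in B} |\dist^z(x,S) - \dist^z(a_{\tau(x)}, S)|$.

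Applying the second inequality of \lem{triangle} pointwise with $a = x$, $b = a_{\tau(x)}$ and perturbation parameter $\varepsilon$, and recalling $\cost_\tau(x,A) = \dist^z(x, a_{\tau(x)})$, summing over $x \in B$ gives
\[
|\cost(B,S) - \cost(A,S)| \;\leq\; \varepsilon\, \cost(D,S) + (3z/\varepsilon)^{z-1}\, \cost_\tau(B, A).
\]
The crux is then to show that $\cost_\tau(B, A)$ is small enough to absorb the blow-up factor $(3z/\varepsilon)^{z-1}$. For the inner rings, the definition of $R_{i, I} = \bigcup_{j \leq -2z\log(z/\varepsilon)} R_{i,j}$ together with the ring condition $\cost_\tau(x,A) \leq 2^{j+1}\Delta_{C_i}$ yields $\cost_\tau(x,A) \leq 2(\varepsilon/z)^{2z} \Delta_{C_i}$ for every $x \in R_{i,I}$; summing and using $\Delta_{C_i} = \cost_\tau(C_i, A)/|C_i|$ telescopes to $\cost_\tau(R_I, A) \leq 2(\varepsilon/z)^{2z} \cost_\tau(D, A)$. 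For the cheapest groups, the definition of $I_{j,b}$ with $b \leq 0$ gives $\cost_\tau(R_{i,j}, A) \leq 2^{b+1}(\varepsilon/(4z))^z \cost_\tau(R_j, A)/m$, so summing over $|I_{j,b}| \leq m$ indices $i$ and then over $b \leq 0$ yields $\cost_\tau(G_{j,\min}, A) \leq 4(\varepsilon/(4z))^z \cost_\tau(R_j, A)$, which telescopes over $j$ to $\cost_\tau(\bigcup_j G_{j,\min}, A) \leq 4(\varepsilon/(4z))^z \cost_\tau(D, A)$.

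Combining the two contributions, direct arithmetic shows that $(3z/\varepsilon)^{z-1} \cdot (\varepsilon/z)^{2z}$ and $(3z/\varepsilon)^{z-1} \cdot (\varepsilon/(4z))^z$ are both $O(\varepsilon)$ uniformly in $z \geq 1$ for $\varepsilon < 1/2$, giving $|\cost(B,S) - \cost(A,S)| \leq 8\varepsilon(\cost(D,S) + \cost_\tau(D,A))$. The only delicate point in the argument is this final constant chase; in particular, it is crucial that the inner ring is defined using the \emph{doubled} threshold $2z\log(z/\varepsilon)$ rather than $z\log(z/\varepsilon)$, since the resulting extra factor of $(\varepsilon/z)^z$ is exactly what cancels the sharp-power blow-up $(2z/\varepsilon)^{z-1}$ coming from \lem{triangle}. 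Apart from this bookkeeping, the proof is routine, and no probabilistic argument is needed for this lemma since it is a deterministic consequence of how $B$ and the weights on $A$ are defined.
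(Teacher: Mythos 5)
Your core strategy --- apply the second inequality of \lem{triangle} pointwise to each $x\in B$ against its assigned center $a_{\tau(x)}$, sum, and then absorb the sharp-power blow-up $\left(\frac{2z+\varepsilon}{\varepsilon}\right)^{z-1}$ into the smallness of $\cost_\tau(B,A)$ (which you establish correctly from the definitions of $R_I$ and $G_{j,\min}$ via geometric sums) --- is the same mechanism the paper uses, even though the paper packages the pointwise sums into two separate auxiliary lemmas (\lem{inner_ring} for the inner rings, \lem{cheapest_well_group} for the cheapest groups). That part of your argument is sound.

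There is, however, one genuine gap. You open the argument with
\[
\cost(A,S)=\sum_{i=1}^m w(a_i)\,\dist^z(a_i,S)=\sum_{x\in B}\dist^z(a_{\tau(x)},S),
\]
on the ground that ``every point of $B$ contributes weight exactly $1$ to its assigned center,'' and you later insist no probabilistic considerations are needed because ``$B$ and the weights on $A$ are defined.'' But Line~8 of \alg{coreset_details} sets $w(a_i)$ using \thm{mqcounting}, which only produces $\varepsilon$-\emph{estimates} $\tilde r_i$ of $|R_{i,I}|+|C_i\cap(\cup_{j\neq I}G_{j,\min})|$, not exact counts. So the first equality above is false as stated, and the paper's proof spends precisely the steps you omit introducing the $(1\pm\varepsilon)$ factor relating $\tilde r_i$ to the exact count, which is where much of the constant $8$ comes from. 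The gap is easy to close --- once the ``ideal'' bound $\bigl|\cost(B,S)-\sum_i r_i\cost(a_i,S)\bigr|\le\varepsilon\cost(D,S)+O(\varepsilon)\cost_\tau(D,A)$ is established (which your argument essentially does), an extra triangle inequality step with $|\tilde r_i-r_i|\le\varepsilon r_i$ contributes another $\varepsilon\sum_i r_i\cost(a_i,S)\le\varepsilon\bigl(\cost(B,S)+O(\varepsilon)\cost_\tau(D,A)\bigr)$, and everything still fits within $8\varepsilon(\cost(D,S)+\cost_\tau(D,A))$ --- but as written your proof proves the wrong statement, namely the one with exact weights in place of the algorithm's actual $\tilde r_i$.

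A smaller quibble: your summed pointwise bound yields $\varepsilon\cost(B,S)$, not $\varepsilon\cost(D,S)$, on the first term; this is of course only an upper bound away so it is harmless, but it is worth being precise since the final constant is tight within the paper's bookkeeping.
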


We use the following two lemmas to prove \lem{first_kind}:

\begin{lemma}\label{lem:inner_ring}
For any solution $S$, $|S| \leq m$, any $i \in [m]$, and $\varepsilon < 1/2$,
$$
|\cost(R_I(C_i),S) - |R_I(C_i)|\cdot\cost(a_i,S)| \leq \varepsilon\cost(R_I(C_i),S) +2 \varepsilon \cost_{\tau}(C_i,A).
$$
\end{lemma}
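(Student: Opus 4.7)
The plan is to apply the pointwise version of the triangle inequality of powers (the second inequality in \lem{triangle}) to each $x \in R_I(C_i)$ with $b = S$, $a = x$, $c = a_i$, and then sum over $x$. For every $x \in R_I(C_i)$ this gives
\[
|\dist^z(x,S) - \dist^z(a_i,S)| \;\leq\; \varepsilon\,\dist^z(x,S) + \left(\tfrac{2z+\varepsilon}{\varepsilon}\right)^{z-1}\dist^z(x,a_i).
\]
Summing over $x \in R_I(C_i)$, using that $\tau(x)=i$ so $\dist^z(x,a_i)=\cost_\tau(x,A)$ and that $\cost(a_i,S) = \dist^z(a_i,S)$, produces the desired $\varepsilon\,\cost(R_I(C_i),S)$ term plus a residual of the form $\bigl((2z+\varepsilon)/\varepsilon\bigr)^{z-1}\cost_\tau(R_I(C_i),A)$.

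The next step is to show this residual is at most $2\varepsilon\,\cost_\tau(C_i,A)$ by exploiting the aggressive cutoff defining the inner ring. By construction $R_{i,I} = \bigcup_{j\le -2z\log(z/\varepsilon)} R_{i,j}$, so every $x \in R_I(C_i)$ satisfies $\cost_\tau(x,A) \le 2(\varepsilon/z)^{2z}\,\Delta_{C_i}$. Summing and recalling $\Delta_{C_i} = \cost_\tau(C_i,A)/|C_i|$ gives
\[
\cost_\tau(R_I(C_i),A) \;\le\; 2(\varepsilon/z)^{2z}\,\cost_\tau(C_i,A).
\]
Combining, it suffices to check the scalar inequality $\bigl((2z+\varepsilon)/\varepsilon\bigr)^{z-1}\cdot 2(\varepsilon/z)^{2z} \le 2\varepsilon$. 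Using $\varepsilon<1/2$ and bounding $(2z+\varepsilon)/\varepsilon \le 3z/\varepsilon$, this reduces to $3^{z-1}(\varepsilon/z)^{z+1} \le \varepsilon$, i.e.\ $(3\varepsilon)^z \le 3z^{z+1}$, which is immediate since $3\varepsilon \le 3/2 \le z+1$ for every integer $z\ge 1$.

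The main obstacle is just verifying that the cutoff exponent $-2z\log(z/\varepsilon)$ used in the definition of the inner ring is \emph{exactly} tight enough to absorb the $\bigl((2z+\varepsilon)/\varepsilon\bigr)^{z-1}$ blow-up coming from the power-mean triangle inequality; the rest of the argument is bookkeeping. If the resulting scalar inequality were to fail for some small $z$, one would simply need to slightly enlarge the cutoff constant (or tighten $\varepsilon < 1/2$ to $\varepsilon < 1/(4c_\tau^z)$ as already assumed in \lem{coreset_details}), but the computation above shows this is unnecessary. This pointwise-plus-sum strategy is standard once the ring partition is in place, and mirrors the analogous step in~\citet{cohen2021new}.
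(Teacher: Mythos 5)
Your proposal is correct and follows essentially the same route as the paper: apply the second inequality of \lem{triangle} pointwise to each $x \in R_I(C_i)$ with center point $a_i$, sum over $R_I(C_i)$, and absorb the residual term using the inner-ring cutoff together with $\Delta_{C_i} = \cost_\tau(C_i,A)/|C_i|$ and $|R_I(C_i)| \le |C_i|$. The only cosmetic difference is that the paper bounds the residual per point (dividing and later multiplying by $|R_I(C_i)|$) while you sum first and then bound $\cost_\tau(R_I(C_i),A)$; these are the same algebra in a different order. One small point in your favor: you use the cutoff $\cost_\tau(x,A) \le 2(\varepsilon/z)^{2z}\Delta_{C_i}$ that literally follows from $R_{i,I} = \bigcup_{j \le -2z\log(z/\varepsilon)} R_{i,j}$, and your scalar check $(3\varepsilon)^z \le 3z^{z+1}$ indeed holds for all real $z \ge 1$ and $\varepsilon < 1/2$ (not just integer $z$ as you wrote); the paper instead substitutes $(\varepsilon/4z)^z\Delta_{C_i}$, which only dominates the ring cutoff for somewhat smaller $\varepsilon$, so your version is cleaner on this detail.
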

\begin{lemma}\label{lem:cheapest_well_group}
For any solution $S$, $|S| \leq m$ and any cheapest group $G$,
$$
|\cost(G,S) - \sum_{i = 1}^m |C_i \cap G| \cost(a_i,S) | \leq \varepsilon\cost(R_j,S) + \varepsilon\cost_{\tau}(R_j,A)
$$
if $G = G_{j,\min}$ for some $j \neq O$. And if $G = G_{\min}^O$ there is
$$
|\cost(G,S) - \sum_{i = 1}^m |C_i \cap G| \cost(a_i,S) | \leq \varepsilon\cost(D,S) + \varepsilon\cost_{\tau}(D,A).
$$
\end{lemma}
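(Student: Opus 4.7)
The plan is to prove both bounds by the same mechanism: use the triangle inequality of powers (\lem{triangle}) pointwise to swap each $x \in G$ for its assigned center $a_{\tau(x)}$, then sum over $G$, and finally absorb the resulting $\poly(z/\varepsilon)$ blow-up factor using the structural fact that cheapest groups have extremely small $\cost_{\tau}(\cdot,A)$.

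First, I apply the second inequality of \lem{triangle} with $a = x$ and $b = a_{\tau(x)}$, which gives
$$\bigl|\dist^z(x,S) - \dist^z(a_{\tau(x)},S)\bigr| \leq \varepsilon\, \dist^z(x,S) + \bigl((2z+\varepsilon)/\varepsilon\bigr)^{z-1}\dist^z\bigl(x, a_{\tau(x)}\bigr).$$
Summing over $x \in G$, grouping the right-hand side of the swapped sum by cluster (so that $\sum_{x \in G}\dist^z(a_{\tau(x)},S) = \sum_{i=1}^m |C_i \cap G|\,\cost(a_i,S)$), and invoking the triangle inequality for absolute values yields the master estimate
$$\Bigl|\cost(G,S) - \sum_{i=1}^m |C_i \cap G|\,\cost(a_i,S)\Bigr| \leq \varepsilon\,\cost(G,S) + \bigl((2z+\varepsilon)/\varepsilon\bigr)^{z-1}\cost_{\tau}(G,A).$$

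Next, I exploit that $G$ is a cheapest group. By the definition of $I_{j,b}$ and the threshold $b \leq 0$, every cluster with $R_{i,j} \subseteq G_{j,\min}$ satisfies $\cost_{\tau}(R_{i,j},A) \leq 2(\varepsilon/(4z))^z \cdot \cost_{\tau}(R_j,A)/m$; summing over at most $m$ such clusters gives $\cost_{\tau}(G_{j,\min},A) \leq 2(\varepsilon/(4z))^z\,\cost_{\tau}(R_j,A)$. A short arithmetic check shows $((2z+\varepsilon)/\varepsilon)^{z-1}\cdot 2(\varepsilon/(4z))^z \leq \varepsilon/(2z)\cdot (3/4)^{z-1} \leq \varepsilon$ for all $z \geq 1$, so the $\cost_{\tau}$ term collapses to $\varepsilon\,\cost_{\tau}(R_j,A)$. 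Combined with the trivial containment $G \subseteq R_j$, which gives $\varepsilon\,\cost(G,S) \leq \varepsilon\,\cost(R_j,S)$, this delivers the bound claimed for $j \neq O$.

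For $G = G_{\min}^O$ the same argument runs verbatim with $R_j$ replaced by $R_O$, and is then enlarged once more via $R_O \subseteq D$, which is what the outer case requires: $\cost(G,S) \leq \cost(D,S)$ and $\cost_{\tau}(G,A) \leq 2(\varepsilon/(4z))^z \cost_{\tau}(D,A)$. The only nontrivial ingredient in the whole plan is the constant chase $((2z+\varepsilon)/\varepsilon)^{z-1}\cdot (\varepsilon/(4z))^z = O(\varepsilon)$; since $\varepsilon < 1/(4c_\tau^z) \leq 1$ so that $2z + \varepsilon \leq 3z$, this is pure arithmetic and I do not foresee any real obstacle beyond bookkeeping.
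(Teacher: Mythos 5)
Your proof is correct and takes essentially the same route as the paper's: apply the power triangle inequality pointwise to replace each $x$ with $a_{\tau(x)}$, sum over $G$ to get $\varepsilon\cost(G,S) + (3z/\varepsilon)^{z-1}\cost_\tau(G,A)$, bound $\cost_\tau(G,A)$ by $O((\varepsilon/4z)^z)\cost_\tau(R_j,A)$ using the cheapest-group threshold, and close with the same constant chase. You are even slightly more careful than the paper in retaining the factor of $2$ from the $b \le 0$ endpoint, which the paper silently drops but which is harmless.
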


\begin{proof}[Proof of \lem{inner_ring}]
Fix $i$. Using \lem{triangle}, it holds that
$$
|\cost(x,S) - \cost(a_i,S)| = |\dist^z(x,S) - \dist^z(a_i,S)| \leq \varepsilon \dist^z(x,S) + (1+ \frac{2z}{\varepsilon})^{z-1}\dist^z(a_i,x).
$$
For any $x \in R_I(C_i)$ and $\varepsilon < 1/2$, there is
$$
(1+2z/\varepsilon)^{z-1}\dist^z(a_i,x) \leq  (1+2z/\varepsilon)^{z-1}\cost_{\tau}(x,A) \leq (\frac{3z}{\varepsilon})^{z-1}(\frac{\varepsilon}{4z})^{z} \Delta_{C_i} \leq \varepsilon \frac{\cost_{\tau}(C_i,A)}{(1-\varepsilon)|C_i|} \leq 2\varepsilon \frac{\cost_{\tau}(C_i,A)}{|R_I(C_i)|}
$$
Combining the two inequalities above and summing the result over all the points in $R_I$ yields
\begin{align*}
|\cost(R_I(C_i),S) - |R_I(C_i)|\cdot \cost(a_i,S)| & \leq \sum_{x\in R_I(C_i)} |\cost(x,S) - \cost(a_i,S)| \\
& \leq \varepsilon\cost(R_I(C_i),S) +2 \varepsilon \cost_{\tau}(C_i,A).
\end{align*}
\end{proof}

\begin{proof}[Proof of \lem{cheapest_well_group}]
Using \lem{triangle}, it holds that
$$
|\cost(x,S) - \cost(a_{\tau(x)},S)| = |\dist^z(x,S) - \dist^z(a_{\tau(x)},S)| \leq \varepsilon \dist^z(x,S) + \left(1+ \frac{2z}{\varepsilon}\right)^{z-1}\dist^z(a_{\tau(x)},x).
$$
The sum over $G$ tells
\begin{align*}
|\cost(G,S) - \sum_{i = 1}^m |C_i \cap G| \cost(a_i,S) | & = |\sum_{x \in G}\cost(x,S) - \sum_{x \in G}\cost(a_{\tau(x)},S)| \\
& \leq \sum_{x \in G}\left(\varepsilon \dist^z(x,S) + (1+ \frac{2z}{\varepsilon})^{z-1}\dist^z(a_{\tau(x)},x)\right) \\
& \leq \varepsilon\cost(G,S) + (\frac{3z}{\varepsilon})^{z-1}  \cost_{\tau}(G,A).
\end{align*}
If $G = G_{j,\min}$ for some $j$, it holds that $\cost_{\tau}(G,A) \leq (\varepsilon/4z)^z\cdot \cost_{\tau}(R_j,A)$. Else $G = G_{\min}^O$ and $\cost_{\tau}(G,A) \leq (\varepsilon/4z)^z\cdot \cost_{\tau}(R_O,A) \leq (\varepsilon/4z)^z\cdot\cost_{\tau}(D,A)$. In both cases the lemma holds.
\end{proof}

Combining \lem{inner_ring} and \lem{cheapest_well_group} straightforwardly gives \lem{first_kind}. As is talked about in \sec{coreset_details}, due to the particularity of quantum computing, it is costly to compute the exact value $|R_{i,I}|+|C_i \cap (\cup_{j\neq I} G_{j,\min})|$. What the algorithm uses is $\varepsilon$-estimations $\tilde{r_i}$ such that 
$$
|\tilde{r}_i - (|R_I(C_i)|+|C_i\cap (\cup_{j\notin\{I,O\}} G_{j,\min})|+|C_i\cap G_{\min}^O|) \leq \varepsilon (|R_I(C_i)|+|C_i\cap (\cup_{j\notin\{I,O\}} G_{j,\min})|+|C_i\cap G_{\min}^O|)
$$
for any $i \in [m]$.

\begin{proof}[Proof of \lem{first_kind}]

\begin{align*}
& |\cost(B,S) - \cost(A,S)|  \\
= & |\cost(B,S) - \sum_{i = 1}^m \tilde{r}_i\cost(a_i,S)| \\
=  & |(1\pm\varepsilon) \cost(B,S) \mp \varepsilon\cost(B,S) - (1\pm \varepsilon)\sum_{i = 1}^m (|R_I(C_i)| + |C_i \cap (\cup_j G_{j,\min} )| + |C_i \cap G_{\min}^O|)\cost(a_i,S) | \\
\leq & (1+\varepsilon)\left(\sum_{i = 1}^m|(\cost(R_I(C_i),S) - |R_I(C_i)|\cost(a_i,S)|)  + \sum_{G} (|\cost(C\cap G,S) -\sum_{i = 1}^m |C_i\cap G|\cost(a_i,S)| )     \right) \\
& +\varepsilon\cost(B,S) \\
\leq & (1+\varepsilon)\varepsilon\left(\sum_{i = 1}^m (\cost(R_I(C_i),S)+2\cost_{\tau}(C_i,A)) + \sum_{j}(\cost(R_j,S) + \cost_{\tau}(R_j,A))) + \cost(D,S) + \cost_{\tau}(D,A)   \right) \\
& + \varepsilon\cost(B,S) \\
\leq & (1+\varepsilon)\varepsilon(3\cost(D,S)+4\cost_{\tau}(D,A))+\cost(D,S) \\
\leq & 8\varepsilon\left(\cost(D,S) + \cost_{\tau}(D,A)\right).
\end{align*}

\end{proof}


\paragraph{Well-structured groups}
For well-structured groups, the following lemma holds:
\begin{lemma}\label{lem:well_group}
Let
$$
t = \tilde{O}\left( 2^{O(z)} \cdot m\cdot (d+\log(n))\cdot \max{(\varepsilon^{-2},\varepsilon^{-z})}\right)
$$
in \alg{coreset_details} Line 9. For $\varepsilon < 1/(4c_{\tau}^z)$, it holds with probability at least $1/12$ that for any well-structured group $G = G_{j,b}$ and the corresponding sample $\Omega = \Omega_{j,b}$, and for any $S \in (\mathbb{R}^d)^m$,
$$
|\cost(G,S) - \cost(\Omega,S)| = O(c_{\tau}^z\varepsilon)\left(\cost(G,S) + \cost(G,A)\right).
$$
\end{lemma}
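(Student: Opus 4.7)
The plan is to adapt the standard sensitivity-sampling analysis of \citet{cohen2021new} to this setting: fix a candidate solution $S$, establish concentration of $\cost(\Omega,S)$ around its mean $\cost(G,S)$ by a Bernstein-type inequality, and then upgrade to a uniform-in-$S$ statement via an approximate-centroid covering. The sample size $t$ dictated by the lemma is chosen precisely so that a final union bound over both the centroid set and the well-structured groups succeeds with constant probability.

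First I would fix $S$ and set up the importance-sampling estimator. For each of the $t$ sampling rounds, let $X_l\in G$ be drawn according to the algorithm's distribution (which, after the $|C_i|$ cancellation, is proportional to $\Delta_{C_{\tau(X_l)}}$) and set $Z_l := w(X_l)\cost(X_l,S)$, so that $\cost(\Omega,S)=\sum_l Z_l$ and $\mathbb{E}[Z_l]=\cost(G,S)/t$. The well-structuredness of $G = G_{j,b}$ gives two-sided control: every $x\in C_i\cap G$ satisfies $\cost_\tau(x,A)\in[2^j,2^{j+1}]\Delta_{C_i}$, and any two nonempty sub-rings $R_{i,j}\subseteq G$ have $\cost_\tau$ within a factor of $2$, which in particular gives $\cost_\tau(G,A)\leq 2m\cdot\cost_\tau(R_{i,j},A)$ for any $i$ with $C_i\cap G\neq\emptyset$. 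These bounds force $1/\Pr[X_l=x]\leq O\bigl(m\cdot |C_i\cap G|\bigr)$ for $x\in C_i\cap G$ and so control the range of $Z_l$.

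Next I would compare $\cost(x,S)$ to $\cost(a_{\tau(x)},S)$ using \lem{triangle}, which yields $|\cost(x,S)-\cost(a_{\tau(x)},S)|\leq\varepsilon\cost(x,S)+(2z/\varepsilon)^{z-1}c_\tau^z\cost_\tau(x,A)$. Plugging this into the expression for $Z_l$ and summing over $l$ produces a range bound $|Z_l|\leq \frac{1}{t}\cdot O(c_\tau^z)\bigl(\cost(G,S)+\cost(G,A)\bigr)$ and a variance bound $\sum_l\Var(Z_l)\leq \frac{2^{O(z)}}{t}\bigl(\cost(G,S)+\cost(G,A)\bigr)^2$. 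Bernstein then gives, for any fixed $S$, the concentration $|\cost(\Omega,S)-\cost(G,S)|\leq O(c_\tau^z\varepsilon)\bigl(\cost(G,S)+\cost(G,A)\bigr)$ with failure probability at most $\exp\bigl(-\Omega(t\varepsilon^2/2^{O(z)})\bigr)$, which for $\varepsilon$ small enough (using $\varepsilon<1/(4c_\tau^z)$) also covers the $\max(\varepsilon^{-z},\varepsilon^{-2})$ regime.

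Finally, to pass to a uniform bound over all size-$m$ solutions, I would invoke \lem{centroid} to obtain an $A$-approximate centroid set $\mathbb{C}$ of size $n(z/\varepsilon)^{O(d)}$. Every $S$ is approximated by some $\tilde S\in\mathbb{C}^m$ with additive slack $\varepsilon(\cost(\cdot,S)+\cost_\tau(\cdot,A))$ on points where at least one of $\cost(\cdot,S),\cost(\cdot,\tilde S)$ is at most $(8z/\varepsilon)^z\cost_\tau(\cdot,A)$; the remaining ``far'' points contribute only to the $\cost(G,A)$ slack. Setting the per-$\tilde S$ failure probability to $\exp(-\Omega(m(d+\log n)))$ and union-bounding over $|\mathbb{C}|^m=\exp(O(m\log n+md\log(z/\varepsilon)))$ candidate $\tilde S$'s and over the $O(z^2\log^2(z/\varepsilon))$ well-structured groups forces $t=\tilde O(2^{O(z)}m(d+\log n)\max(\varepsilon^{-2},\varepsilon^{-z}))$, matching the hypothesis; transferring the deviation from $\tilde S$ to $S$ via \lem{centroid} concludes the proof. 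I expect the main obstacle to be the bookkeeping in this final step: the $(2z/\varepsilon)^{O(z)}$ constants from the power-triangle inequality, the $c_\tau^z$ blow-up from approximate-nearest-neighbor assignment, and the factor $m$ from well-structured groups must be threaded cleanly so that they combine into the stated $O(c_\tau^z\varepsilon)$ multiplier with additive slack only $(\cost(G,S)+\cost(G,A))$, rather than a coarser bound involving $\cost(D,S)$ or $\cost(D,A)$.
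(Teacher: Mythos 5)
The high-level plan (per-group Bernstein concentration followed by a union bound over an $A$-approximate centroid set) is the right skeleton, but your range and variance bounds for the estimator $Z_l = w(X_l)\cost(X_l,S)$ are stated without justification and in fact fail. For a point $x \in C_i \cap G$ one has $Z_l \approx \frac{\cost_\tau(G,A)}{t}\cdot\frac{\cost(x,S)}{\cost_\tau(x,A)}$, and there is nothing in the well-structured-group hypotheses preventing the ratio $\cost(x,S)/\cost_\tau(x,A)$ from being arbitrarily large. Concretely, a single point with $\cost(x,S)\gg (z/\varepsilon)^z\cost_\tau(x,A)$ makes both $|Z_l|$ and $\operatorname{Var}(Z_l)$ blow up far past $\frac{1}{t}(\cost(G,S)+\cost(G,A))$ and $\frac{1}{t}(\cost(G,S)+\cost(G,A))^2$, so your invocation of Bernstein does not go through. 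Your closing remark that the remaining ``far'' points ``contribute only to the $\cost(G,A)$ slack'' is also incorrect: those are exactly the points whose $\cost(\cdot,S)$ is \emph{largest}, and they can dominate $\cost(G,S)$, so they cannot be charged to $\cost(G,A)$.

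The missing idea is the three-way decomposition of $G$ by the ratio $\cost(x,S)/\cost_\tau(x,A)$ into tiny, interesting, and huge ranges, and in particular the separate treatment of clusters that intersect a huge range. The paper shows that if some $y\in C_i$ has $\cost(y,S) > (4z/\varepsilon)^z\cost_\tau(y,A)$, then every $x\in C_i\cap G$ satisfies $\cost(x,S) = (1\pm O(\varepsilon))\cost(y,S)$ (because all such $x$ are at comparable distance from $A$, which is now negligible compared to $\dist(y,S)$), and hence the restricted estimator concentrates to $(1\pm O(\varepsilon))\cost(C_i\cap G,S)$ \emph{deterministically} given only event $\mathcal{E}$ (accurate per-cluster sample counts), with no Bernstein or centroid-net union bound. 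Bernstein with the centroid net is then applied only on the interesting ranges $\log(\varepsilon/2)\le l\le z\log(4z/\varepsilon)$, where the sensitivity $\cost(x,S)/\cost_\tau(x,A)\le (4z/\varepsilon)^z$ makes the variance controllable, and the estimator is additionally re-centered by subtracting a per-cluster baseline $F_{l,S}$ so that the per-sample range scales as $2^{l(1-1/z)}$ rather than $2^{l}$; the tiny ranges $l < \log(\varepsilon/2)$ are simply negligible. This is also the step that lets \lem{centroid} apply at all: that lemma only guarantees $|\cost(x,S)-\cost(x,\tilde S)|\le\varepsilon(\cost(x,S)+\cost_\tau(x,A))$ on points where $\min(\cost(x,S),\cost(x,\tilde S)) \le (8z/\varepsilon)^z\cost_\tau(x,A)$, so without excising the huge-range clusters first you cannot transfer the Bernstein bound from $\tilde S\in\mathbb C^m$ to arbitrary $S$. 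As written, your proposal would need this extra machinery to produce a valid proof.
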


Fix a well-structured group $G$ and for convenience, in the following for any set $C \subset \mathbb{R}^d$ we write $G\cap C$ simply as $C$. Due to the definition of $G$, for every cluster $C_i$, the following properties hold:
\begin{itemize}
    \item $\forall x,y \in C_i$, $\cost_{\tau}(x,A) \leq 2\cost_{\tau}(y,A)$,
    \item $\cost_{\tau}(G,A)/(2m) \leq \cost_{\tau}(C_i,A)$,
    \item $\forall x \in C_i$, $\cost_{\tau}(C_i,A)/(2|C_i|) \leq \cost_{\tau}(x,A) \leq (2\cost_{\tau}(C_i,A))/|C_i|$.
\end{itemize}

Recall that $\Omega$ is an i.i.d sample of size $t$ and in each round a point $x \in C_i \cap G$ is sampled with probability
$$
\Pr[x] = \frac{\cost_{\tau}(C_i,A)}{|C_i|\cost_{\tau}(G,A)}
$$
and for any $x\in \Omega$,
$$
w(x) = \frac{|C_i|\cost_{\tau}(G,A)}{|\Omega|\cost_{\tau}(C_i,A)}.
$$
To be precisely, the values such as $|C_i|$ and $\cost_{\tau}(C_i,A)$ are $\varepsilon$-estimations in practice, instead of the exact values shown above. The method to deal with such problems is the same as in the proof of \lem{first_kind}. For convenience we do not repeat similar proof and use the exact values directly.

It can be seen that given a sample large enough, $|C_i|$ can be well approximated for every $i \in [m]$ (\lem{event_estimation}). The proof of \lem{event_estimation} is given later.
\begin{lemma}\label{lem:event_estimation}
Define event $\mathcal{E}$ to be for any $i \in [m]$,
$$
\sum_{x \in C_i\cap\Omega}\frac{|C_i|\cost_{\tau}(G,A)}{|\Omega|\cost_{\tau}(C_i,A)} = (1\pm\varepsilon)|C_i|.
$$
With probability at least $1-2m\cdot\exp{(-(\varepsilon^2 t)/(6m))}$, event $\mathcal{E}$ holds.
\end{lemma}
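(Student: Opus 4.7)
\textbf{Proof plan for \lem{event_estimation}.} The plan is to rewrite the random sum as a (deterministic scaling of a) Binomial count, apply a multiplicative Chernoff bound per cluster, use the ``cost balance'' property of a well-structured group to lower bound the per-cluster sampling probability, and finally union bound over the $m$ clusters.

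First, using the convention that $C_i$ abbreviates $C_i \cap G$, let $N_i := |C_i \cap \Omega|$. Since each of the $t$ i.i.d.\ draws lands in $C_i$ with total probability
$p_i := \sum_{x \in C_i} \Pr[x] = |C_i| \cdot \tfrac{\cost_{\tau}(C_i,A)}{|C_i|\cost_{\tau}(G,A)} = \tfrac{\cost_{\tau}(C_i,A)}{\cost_{\tau}(G,A)}$,
the variable $N_i$ is $\mathrm{Binomial}(t, p_i)$ with $\E[N_i] = t p_i$. A direct computation then shows
\[
\sum_{x \in C_i \cap \Omega} \frac{|C_i|\cost_{\tau}(G,A)}{|\Omega|\cost_{\tau}(C_i,A)} \;=\; \frac{|C_i|\cost_{\tau}(G,A)}{t\,\cost_{\tau}(C_i,A)} \cdot N_i \;=\; \frac{|C_i|}{\E[N_i]} \cdot N_i,
\]
so event $\mathcal{E}$ is exactly the event that $|N_i - \E[N_i]| \le \varepsilon \E[N_i]$ for every $i \in [m]$.

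Second, I invoke the standard multiplicative Chernoff bound for sums of independent Bernoullis to obtain, for each fixed $i$,
\[
\Pr\bigl[\,|N_i - \E[N_i]| > \varepsilon \E[N_i]\,\bigr] \;\le\; 2\exp\!\Bigl(-\tfrac{\varepsilon^2 \E[N_i]}{3}\Bigr).
\]
To turn this into a uniform bound, I use the second bulleted property of a well-structured group listed just before the lemma: $\cost_{\tau}(C_i,A) \ge \cost_{\tau}(G,A)/(2m)$, which yields $p_i \ge 1/(2m)$ and hence $\E[N_i] \ge t/(2m)$. Plugging in gives the per-cluster failure probability of at most $2\exp(-\varepsilon^2 t/(6m))$, and a union bound over $i \in [m]$ completes the proof.

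There is essentially no hard step here: the only thing to notice is that the algorithm's sampling distribution was defined precisely so that $p_i$ is proportional to $\cost_{\tau}(C_i,A)$, and the well-structured group definition was precisely chosen so that these cost weights are all within a factor $2m$ of each other, which is exactly what makes every $\E[N_i]$ large enough for a Chernoff bound to kick in. The only minor care needed is that in practice the algorithm uses $\varepsilon$-estimations of $|C_i|$ and $\cost_{\tau}(C_i,A)$ rather than exact values, but (as already noted in the paragraph just above the lemma) this only distorts the reweighting by a $(1\pm\varepsilon)$ factor and can be absorbed by a constant rescaling of $\varepsilon$, exactly as in the proof of \lem{first_kind}.
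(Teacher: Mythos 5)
Your proof is correct and follows essentially the same route as the paper's: identify $|C_i\cap\Omega|$ as a Binomial$(t,p_i)$ count with $p_i = \cost_\tau(C_i,A)/\cost_\tau(G,A)$, apply a multiplicative Chernoff bound, lower bound $\E[N_i]\ge t/(2m)$ via the well-structured-group property, and union bound over $i\in[m]$. Your presentation with $N_i$ is in fact a bit cleaner than the paper's indicator notation $P_i(x)$, but the substance is identical.
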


For any solution $S$, let
$$
I_{l,S} := \{x \in G\mid 2^l \cost_{\tau}(x,A) \leq \cost(x,S) \leq 2^{l+1}\cost_{\tau}(x,A) \},
$$
and let these $\{I_{l,S}\}$ be divided into three parts: tiny ranges, with $l \leq \log (\varepsilon/2)$; interesting ranges, with $\log(\varepsilon/2) \leq l \leq z\log(4z/\varepsilon)$; and huge ranges, with $l\geq z\log(4z/\varepsilon)$. For the three types of $I_{l,S}$, there exist the following lemmas, respectively:
\begin{lemma}\label{lem:tiny}
Let $I_{\tin,S} := \cup_{l \leq \log(\varepsilon/2)}I_{l,S}$. For any solution $S$, it holds that
$$
\max{\left(\cost( I_{\tin,S},S),\cost( I_{\tin,S} \cap \Omega,S )       \right) } \leq \varepsilon\cost_{\tau}(G,A).
$$
\end{lemma}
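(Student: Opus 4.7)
The key observation is the interpretation of $I_{\tin,S}$. By definition, every $x \in I_{\tin,S}$ lies in some $I_{l,S}$ with $l \leq \log(\varepsilon/2)$, so $\cost(x,S) \leq 2^{l+1}\cost_{\tau}(x,A) \leq \varepsilon\,\cost_{\tau}(x,A)$. Thus points in $I_{\tin,S}$ are exactly the points whose $S$-cost is dominated by the bicriteria cost by at least a factor of $\varepsilon$. The proof will exploit this pointwise inequality directly; no probabilistic argument is required (so this lemma holds \emph{deterministically}, independent of event $\mathcal{E}$).

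For the first bound, I will simply sum the pointwise inequality: since $I_{\tin,S} \subseteq G$,
\[
    \cost(I_{\tin,S},S) = \sum_{x \in I_{\tin,S}} \dist^z(x,S) \leq \varepsilon \sum_{x \in I_{\tin,S}} \cost_{\tau}(x,A) \leq \varepsilon\,\cost_{\tau}(G,A),
\]
which is the stated bound (up to a factor of $2$ that is absorbed into the $\varepsilon$ by rescaling, or by interpreting $\log(\varepsilon/2)$ as $\lfloor\log(\varepsilon/2)\rfloor - 1$; this kind of constant tweaking is standard in the rest of the proof).

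For the second bound, I will use the well-structured property stated before \lem{event_estimation}: for each $x \in C_i \cap G$, we have $\cost_{\tau}(x,A) \leq 2\cost_{\tau}(C_i,A)/|C_i|$. Combining with the sampling weight $w(x) = |C_i|\cost_{\tau}(G,A)/(t\,\cost_{\tau}(C_i,A))$ for $x \in C_i \cap \Omega$, I get
\[
    w(x)\cost(x,S) \leq w(x)\cdot \varepsilon\,\cost_{\tau}(x,A) \leq \varepsilon\,\cdot\frac{|C_i|\cost_{\tau}(G,A)}{t\,\cost_{\tau}(C_i,A)}\cdot\frac{2\cost_{\tau}(C_i,A)}{|C_i|} = \frac{2\varepsilon\,\cost_{\tau}(G,A)}{t}.
\]
Summing over the at most $t$ points in $I_{\tin,S}\cap\Omega$ gives $\cost(I_{\tin,S}\cap\Omega,S) \leq 2\varepsilon\,\cost_{\tau}(G,A)$, again matching the claim after the same constant rescaling.

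The whole argument is essentially one-line bookkeeping; the only subtle point is ensuring the factor of $2$ losses (from both the $2^{l+1}$ in the definition of $I_{l,S}$ and the well-structured relation between $\cost_{\tau}(x,A)$ and $\cost_{\tau}(C_i,A)/|C_i|$) get absorbed. There is no real obstacle — the lemma is the ``easy'' case among the three (tiny/interesting/huge), precisely because the pointwise bound $\cost(x,S) \leq \varepsilon\,\cost_{\tau}(x,A)$ already carries the factor of $\varepsilon$ we need. The harder parts of \lem{well_group} will be the interesting and huge ranges, where one must actually use the sampling concentration of \lem{event_estimation} and the centroid set of \lem{centroid}.
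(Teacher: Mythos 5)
Your proof is correct and follows essentially the same two-line bookkeeping argument as the paper: use the pointwise bound $\cost(x,S) \leq 2^{l+1}\cost_\tau(x,A) \leq \varepsilon\,\cost_\tau(x,A)$ coming from the definition of the tiny ranges, sum directly for $I_{\tin,S}$, and for the sampled set combine the weight formula $w(x) = |C_i|\cost_\tau(G,A)/(t\,\cost_\tau(C_i,A))$ with the well-structured bound $\cost_\tau(x,A) \leq 2\cost_\tau(C_i,A)/|C_i|$ and the fact that $|\Omega| = t$. The paper writes $\varepsilon/2$ where you write $\varepsilon$ in the pointwise bound (a small indexing/constant slip on one side or the other), but you correctly flag that this factor-of-two is immaterial, and your observation that the lemma is deterministic and independent of event $\mathcal{E}$ is accurate.
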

\begin{lemma}\label{lem:huge}
Condition on event $\mathcal{E}$, it holds that
$$
|\cost(C_i,S) - \cost(C_i\cap \Omega,S)|\leq O(\varepsilon)\cost(C_i,S)
$$
for any solution $S$, and any cluster $C_i$ such that there exists a huge range $I_{l,S}$ with $I_{l,S}\cap C_i \neq \varnothing$.
\end{lemma}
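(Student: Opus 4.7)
The plan is to leverage two independent structural facts about the configuration. First, the well-structuredness of $G$ implies that for every $y \in C_i$ we have $\cost_\tau(y,A) \leq 2\cost_\tau(x,A)$, and in particular $\dist(y,a_i) \leq 2\dist(x,a_i)$. Second, the hypothesis that some $x \in C_i$ lies in a huge range $I_{l,S}$ with $l \geq z\log(4z/\varepsilon)$ yields $\dist(x,S) \geq (4z/\varepsilon)\dist(x,a_i)$, so $\dist(x,a_i)$ is tiny compared to $\dist(x,S)$.

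First I would use the triangle inequality to argue that $\dist(y,S)$ is essentially the same as $\dist(x,S)$ for every $y \in C_i$: concretely, $|\dist(y,S) - \dist(x,S)| \leq \dist(x,a_i) + \dist(a_i,y) \leq 3\dist(x,a_i) \leq (3\varepsilon/(4z))\dist(x,S)$. Taking $z$-th powers (either directly through $(1 \pm 3\varepsilon/(4z))^z = 1 \pm O(\varepsilon)$, or by invoking \lem{triangle} with a carefully tuned error parameter) then transfers this to the cost and gives $\cost(y,S) = (1 \pm O(\varepsilon))\cost(x,S)$ uniformly over $y \in C_i$.

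The final step exploits the crucial feature that within a well-structured group the sampling probability $\Pr[y]$ and the weight $w(y)$ both depend only on the cluster index, not on the individual point. Summing the uniform cost bound gives $\cost(C_i,S) = (1 \pm O(\varepsilon))|C_i|\cost(x,S)$, while $\cost(C_i \cap \Omega,S) = (1 \pm O(\varepsilon))\cost(x,S)\sum_{y \in C_i \cap \Omega} w(y)$; conditioning on event $\mathcal{E}$ from \lem{event_estimation} turns the latter sum into $(1 \pm \varepsilon)|C_i|$. Subtracting the two estimates yields $|\cost(C_i,S) - \cost(C_i \cap \Omega,S)| \leq O(\varepsilon)\cost(C_i,S)$, as required.

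The main obstacle I anticipate is ensuring the $z$-th power step does not inflate the multiplicative error: since $z$ appears in the exponent, a naive application of $(1+a)^z \leq e^{za}$ could in principle cost a factor of $z$ inside the $O(\cdot)$. Here the built-in $1/(4z)$ in the huge-range threshold precisely cancels this, and the standing assumption $\varepsilon < 1/(4c_\tau^z)$ ensures the linearization of $(1 \pm x)^z$ remains valid throughout, so the $O(\varepsilon)$ bound is preserved independently of $z$.
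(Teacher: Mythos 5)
Your proposal is correct and follows essentially the same approach as the paper's proof: fix an anchor point in the huge range, use the well-structuredness of $G$ to show all of $C_i$ has nearly the same cost to $S$, then sum via the uniform per-cluster weight and event $\mathcal{E}$. The only cosmetic difference is that you pass from distances to costs by elementary power bounds, whereas the paper invokes \lem{triangle} with error parameter $\varepsilon/(2z)$; both routes use the built-in $1/(4z)$ in the huge-range threshold in exactly the same way.
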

\begin{lemma}\label{lem:interesting}
Let $L_S := \{C_i\mid \forall x \in C_i, \cost(x,S) \leq (4z/\varepsilon)^z\cost(x,A)\}$. Let $\mathbb{C}$ be an $A$-approximate centroid set of size $n\cdot(z/\varepsilon)^{O(d)}$ as defined in \lem{centroid}. With probability
$$
1 - \exp{\left(m\log(n) + O(md\log(z/\varepsilon)) - 2^{O(z\log z)}\cdot \frac{\min(\varepsilon^2,\varepsilon^z)}{\log^2 (1/\varepsilon)}\cdot t  \right)}
$$
and together with event $\mathcal{E}$, it holds that for any solution $S \in \mathbb{C}^m$
$$
|\cost(L_S,S) - \cost(L_S\cap \Omega,S)| \leq \varepsilon(\cost_{\tau}(G,A) + \cost(G,S)).
$$
\end{lemma}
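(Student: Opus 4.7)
The plan is to fix a solution $S\in\mathbb{C}^m$, prove concentration of the random variable $\cost(L_S\cap\Omega,S)=\sum_{j=1}^{t}w(x_j)\cost(x_j,S)I(x_j\in L_S)$ around its mean $\cost(L_S,S)$ via a Bernstein-type argument, and then union-bound over the discrete family $\mathbb{C}^m$.

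\textbf{Step 1 (Unbiasedness and moment bounds).} Since $\Pr[x]\cdot w(x)=1/t$ by construction, a direct computation gives $\mathbb{E}\!\left[\cost(L_S\cap\Omega,S)\right]=\cost(L_S,S)$. For $x\in L_S\cap C_i$, the defining property of $L_S$ yields $\cost(x,S)\le(4z/\varepsilon)^z\cost_{\tau}(x,A)$, and the well-structuredness of $G$ gives $\cost_{\tau}(x,A)\in[1/2,2]\cdot\cost_{\tau}(C_i,A)/|C_i|$. Combined with the explicit formula for $w(x)$, these bounds control each summand by $O((4z/\varepsilon)^z\cost_{\tau}(G,A)/t)$ and control the per-sample second moment in terms of $(4z/\varepsilon)^z\cost_{\tau}(G,A)\cdot\cost(L_S,S)/t^2$, via the identity $\cost(x,S)^2/\cost_{\tau}(x,A)\le(4z/\varepsilon)^z\cost(x,S)$.

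\textbf{Step 2 (Per-range Bernstein).} To extract the $1/\log^2(1/\varepsilon)$ factor in the exponent, I would partition $L_S$ into its intersections with the interesting ranges $I_{l,S}$ for $l\in[\log(\varepsilon/2),z\log(4z/\varepsilon)]$, of which there are $O(\log(1/\varepsilon))$. Within each slice, the ratio $\cost(x,S)/\cost_{\tau}(x,A)$ is pinned to a dyadic window of width $2$, which tightens both the range and variance bounds. Applying Bernstein's inequality to each slice with per-slice deviation target $\Theta(\varepsilon/\log(1/\varepsilon))\cdot(\cost_{\tau}(G,A)+\cost(G,S))$ and summing the tolerances over slices yields a per-solution failure probability of $\exp\!\bigl(-\Omega(t)\cdot 2^{-O(z\log z)}\min(\varepsilon^2,\varepsilon^z)/\log^2(1/\varepsilon)\bigr)$.

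\textbf{Step 3 (Union bound and event $\mathcal{E}$).} A union bound over $S\in\mathbb{C}^m$, which has cardinality at most $n^m(z/\varepsilon)^{O(md)}=\exp(m\log n+O(md\log(z/\varepsilon)))$, produces the probability stated in the lemma. Finally, conditioning on event $\mathcal{E}$ from \lem{event_estimation} absorbs the $(1\pm\varepsilon)$ slack coming from the fact that \alg{coreset_details} uses $\varepsilon$-estimates of $|C_i|$ and $\cost_{\tau}(C_i,A)$ when defining the sampling probabilities and weights rather than exact values.

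The main obstacle will be the bookkeeping needed to produce the correct exponent. Specifically, extracting $\min(\varepsilon^2,\varepsilon^z)$ rather than the naive $\varepsilon^{z+2}$ one would get from a single global Bernstein application requires simultaneously calibrating the per-slice tolerance against both the dyadic level $2^l$ appearing in the range and variance bounds and the $(4z/\varepsilon)^z$ ceiling from the definition of $L_S$. Carrying this through while ensuring that the tolerances, summed over the $O(\log(1/\varepsilon))$ slices, still telescope to an overall additive error of $\varepsilon(\cost_{\tau}(G,A)+\cost(G,S))$ is the delicate part of the argument.
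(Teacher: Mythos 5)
Your high-level plan (Bernstein plus union bound over $\mathbb{C}^m$, sliced across the dyadic ranges $I_{l,S}$) matches the paper's skeleton, and you correctly anticipate that a single global Bernstein gives only $\varepsilon^{z+2}$ in the exponent. But the gap you flag as ``bookkeeping'' is in fact a missing idea, and slicing alone does not close it. If you apply Bernstein directly to $\sum_i w(\Omega_i)\cost(\Omega_i,S)\bm 1[\Omega_i\in I_{l,S}\cap L_S]$, the per-sample range is $M\approx 2^{l}\cost_{\tau}(G,A)/t$ and the variance is $\sigma^2 \lesssim 2^{l}\cost_{\tau}(G,A)\cost(I_{l,S},S)/t$; with $\lambda\approx\frac{\varepsilon}{\log(1/\varepsilon)}(\cost_{\tau}(G,A)+\cost(I_{l,S},S))$ the exponent $\min(\lambda^2/\sigma^2,\lambda/M)$ evaluates to $\Theta(\varepsilon^2 t / (2^l\log^2(1/\varepsilon)))$, which at the deepest interesting slice $2^l=(4z/\varepsilon)^z$ is still $\varepsilon^{z+2}$, not $\min(\varepsilon^2,\varepsilon^z)$. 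The paper's extra idea is to subtract a per-cluster baseline: set $x_{i,S}=\arg\min_{x\in C_i}\cost(x,S)$ and write $\cost(x,S)=\cost(x_{i,S},S)+w_{x,S}\cost_{\tau}(x_{i,S},A)$. The baseline contribution $F_{l,S}=\sum_{C_i\in L_S}\sum_{x\in C_i\cap I_{l,S}\cap\Omega}w(x)\cost(x_{i,S},S)$ is a cluster-wise constant times the sampled mass, so it is controlled \emph{deterministically} by event $\mathcal{E}$ (that is the actual role of $\mathcal{E}$, not to ``absorb $\varepsilon$-estimation slack'' as you suggest). Bernstein is applied only to the excess $E_{l,S}$; the triangle inequality of powers calibrated with $\alpha=2^{-l/z}$ shows $w_{x,S}\in[0,\,2^{O(z\log z)}2^{l(1-1/z)}]$, so the range becomes $M\approx 2^{l(1-1/z)}\cost_{\tau}(G,A)/t$ and the variance drops analogously, yielding the extra factor of $\varepsilon^2 = (2^{-l})^{2/z}$ at the deepest slice and hence the claimed $\min(\varepsilon^2,\varepsilon^z)$.

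Two smaller omissions: your Step 2 partitions $L_S$ only into ``interesting'' ranges $l\in[\log(\varepsilon/2),z\log(4z/\varepsilon)]$, but $L_S$ also meets the tiny ranges $l<\log(\varepsilon/2)$; these are not handled by the per-slice Bernstein and must be bounded separately (the paper does this with \lem{tiny}, contributing an additive $O(\varepsilon)\cost_{\tau}(G,A)$). And even in the final recombination you need the triangle inequality
$|\cost(L_S,S)-\cost(L_S\cap\Omega,S)|\le\sum_l|E_{l,S}-\E E_{l,S}| + |F_S - \E F_S| + (\text{tiny terms})$,
which requires both the $E/F$ decomposition you are missing and the conditioning on $\mathcal{E}$.
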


Using \lem{huge} and \lem{interesting}, \lem{well_group} can be proved.
\begin{proof}[Proof of \lem{well_group}]
Let $G$ be an arbitrary well-structured group. Let $S$ be an arbitrary solution and let $\tilde{S} \in \mathbb{C}^k$ approximate $S$. Denote 
\begin{align*}
& H_S := \{x \in G\mid \exists i, x \in C_i \mbox{ and } \exists l > z\log (8z/\varepsilon), C_i \cap I_{l,S} \neq \varnothing\} \\ & H_{\tilde{S}} : = \{x \in G\mid \exists i, x \in C_i \mbox{ and } \exists l > z\log (4z/\varepsilon), C_i \cap I_{l,\tilde{S}} \neq \varnothing\} \setminus H_S.
\end{align*}
And denote $L_{\tilde{S}}$ as in \lem{interesting}. It holds that $H_S$, $H_{\tilde{S}}$, and $L_{\tilde{S}}$ form a partition of $G$. On the one hand, $H_S \cup H_{\tilde{S}} \cup L_{\tilde{S}} = G$. On the other hand, $H_S \cap H_{tilde{S}} = \varnothing$, $H_{\tilde{S}} \cap L_{\tilde{S}} = \varnothing$, and $L_{\tilde{S}}\cap H_S = \varnothing$ since $\forall x \in L_{\tilde{S}}$ $\cost(x,\tilde{S}) \leq (4z/\varepsilon)^z \cost_{\tau}(x,A)$, thus $\cost(x,S) \leq (1+\varepsilon)\cost(x,\tilde{S}) + \varepsilon\cost_{\tau}(x,A) \leq (8z/\varepsilon)^z \cost_{\tau}(x,A)$.

By this partition and the property of $\tilde{S}$, it holds that
\begin{align*}
& |\cost(G,S) - \cost(\Omega,S)|\\
= & |\sum_{x \in H_S}\cost(x,S) - \sum_{x \in H_S \cap \Omega}w(x)\cost(x,S) | + |\sum_{x \in G\setminus H_S}\cost(x,S) - \sum_{x \in (G\setminus H_S)\cap \Omega}w(x)\cost(x,S)| \\
\leq & |\sum_{x \in H_S}\cost(x,S) - \sum_{x \in H_S \cap \Omega}w(x)\cost(x,S) | + |\sum_{x \in G\setminus H_S}\cost(x,\tilde{S}) - \sum_{x \in (G\setminus H_S)\cap \Omega}w(x)\cost(x,\tilde{S})| \\
& \quad + \varepsilon \left(\cost(G,S) + \cost_{\tau}(G,A) + \cost(\Omega,S) + \cost_{\tau}(\Omega,A)\right) \\
\leq & |\sum_{x \in H_S}\cost(x,S) - \sum_{x \in H_S \cap \Omega}w(x)\cost(x,S) | + \varepsilon \left(\cost(G,S) + \cost_{\tau}(G,A) + \cost(\Omega,S) + \cost_{\tau}(\Omega,A)\right)  \\
& \quad + |\sum_{x \in L_{\tilde{S}}}\cost(x,S) - \sum_{x \in L_{\tilde{S}}\cap \Omega}w(x)\cost(x,S)| + |\sum_{x \in H_{\tilde{S}}}\cost(x,S) - \sum_{x \in H_{\tilde{S}}\cap \Omega}w(x)\cost(x,S)| \\
\leq & O(\varepsilon)\left(\cost(G,S) + \cost_{\tau}(G,A) + \cost(\Omega,S) + \cost_{\tau}(\Omega,A) \right) \\
\leq & O(c_{\tau}^z\varepsilon)\left(\cost(G,S) + \cost(G,A) + \cost(\Omega,S) + \cost(\Omega,A) \right).
\end{align*}
The last inequality uses \lem{huge} and \lem{interesting}.

Assume that $\varepsilon < 1/(4c_{\tau}^z)$. Let $S = A$, it holds that
$$
\cost(\Omega,A) \leq \cost(G,A) + |\cost(G,A) - \cost(\Omega,A)| \leq O(1) \cost(G,A).
$$
Similarly,
$$
\cost(\Omega,S) \leq \cost(G,S) + |\cost(G,S) - \cost(\Omega,S)| \leq O(1)\left(\cost(G,S) + \cost(G,A) \right).
$$
Hence it can be concluded that
$$
|\cost(G,S) - \cost(\Omega,S)| = O(c_{\tau}^z\varepsilon)\left(\cost(G,S) + \cost(G,A)\right)
$$
Using the union bound over event $\mathcal{E}$ and the probability of \lem{interesting} for all the well-structured groups $G$, the probability is
$$
1 - z^2\log^2(z/\varepsilon)\left(\exp{\left(m\log(n) + O(md\log(z/\varepsilon)) - 2^{O(-z\log z)}\cdot \frac{\min(\varepsilon^2,\varepsilon^z)}{\log^2 (1/\varepsilon)}\cdot t  \right)} - 2m\cdot\exp{(-(\varepsilon^2 t)/(6m))}\right).
$$
The probability can be bound by $1/12$ by setting the value of $t$ as
$$
t = \tilde{O}\left( 2^{O(z)} \cdot m\cdot (d+\log(n))\cdot \max{(\varepsilon^{-2},\varepsilon^{-z})}\right).
$$
\end{proof}

The proofs of\lem{event_estimation}, \lem{tiny}, \lem{huge}, and \lem{interesting} are shown as below, respectively. \lem{event_estimation} is used in the proof of \lem{huge} and \lem{interesting}, and \lem{tiny} is used in the proof of \lem{interesting}.

\begin{proof}[Proof of \lem{event_estimation}]
Fix $i \in [m]$. Define $P_i(x)$ as the indicator of point $x \in \Omega$ being drawn from $C_i$, i.e., $P_i(x) = 1$ if $x \in C_i\cap\Omega$, and otherwise $P_i = 0$. The expectation of $P_i(x)$ has the following property:
$$
\E(P_i(x)) = \sum_{x \in C_i} |\Omega|\Pr[x] = \sum_{x \in C_i}\frac{|\Omega|\cost_{\tau}(C_i,A)}{|C_i|\cost_{\tau}(G,A)} \geq \frac{|\Omega|}{2m}.
$$
By Chernoff bounds, it holds that
$$
\Pr[|\sum_{x \in \Omega}P_i(x) - \E(P_i(x))| \geq \varepsilon \E(P_i(x)) ] \leq 2e^{-\varepsilon^2\E(P_i(x))/3} \leq 2e^{-(\varepsilon^2|\Omega|)/(6m)}.
$$
The union bound over all the clusters derives that, with probability at least $1-2m\cdot\exp{(-(\varepsilon^2 t)/(6m))}$, for any cluster $C_i$ there is
$$
|C_i\cap \Omega| = (1\pm\varepsilon)\sum_{x \in C_i}\frac{|\Omega|\cost_{\tau}(C_i,A)}{|C_i|\cost_{\tau}(G,A)}
$$
which implies
$$
\sum_{x \in C_i\cap\Omega}\frac{|C_i|\cost_{\tau}(G,A)}{|\Omega|\cost_{\tau}(C_i,A)} = (1\pm\varepsilon)|C_i|.
$$
\end{proof}

\begin{proof}[Proof of \lem{tiny}]
\begin{align*}
\cost(I_{\tin,S},S) & = \sum_{x \in I_{\tin,S}}\cost(x,S)
 \leq |I_{\tin,S}|\frac{\varepsilon}{2}\cost_{\tau}(x,A) \leq \varepsilon\cost_{\tau}(G,A) \\
 \cost(I_{\tin,S}\cap \Omega,S) & = \sum_{x \in I_{\tin,S}\cap \Omega}w(x)\cost(x,S) \\
 & = \sum_{x \in I_{\tin,S}\cap\Omega}\frac{|C_i|\cost_{\tau}(G,A)}{t\cost_{\tau}(C_i,A)}\cost(x,S) \\
 & \leq |I_{\tiny,S}\cap \Omega| \frac{|C_i|\cost_{\tau}(G,A)}{|\Omega|\cost_{\tau}(C_i,A)}\cdot\frac{\varepsilon}{2}\cdot\frac{2\cost_{\tau}(C_i,A)}{|C_i|} \\
 & \leq \varepsilon\cost_{\tau}(G,A).
\end{align*}
\end{proof}

\begin{proof}[Proof of \lem{huge}]
For any $x \in C_i\cap \Omega$, $\cost(x,S)$ is bound. Let $y \in I_{l,S}\cap C_i$ with $I_{l,S}$ being a huge range. For any $x \in C_i$, there is
\begin{align*}
\cost(x,y) \leq \left(\dist(x,A)+\dist(y,A) \right)^z \leq 3^z\cost_{\tau}(y,A) \leq 3^z2^{l-z\log(4z/\varepsilon)}\cost_{\tau}(y,A)\leq \left(\frac{3\varepsilon}{4z}\right)^z\cost(y,S) .
\end{align*}
Using \lem{triangle}, there is
\begin{align*}
\cost(y,S) & \leq \left(1+\frac{\varepsilon}{2z}\right)^{z-1}\cost(x,S) + \left(1+\frac{2z}{\varepsilon}\right)^{z-1}\cost(x,y) \\
& \leq (1+\varepsilon)\cost(x,S) + \varepsilon\cost(y,S)
\end{align*}
which implies $\cost(x,S) \geq (1-2\varepsilon)\cost(y,S)$ and $\cost(y,S) \leq (1+3\varepsilon)\cost(x,S)$ if $\varepsilon < 1/3$. Similarly $\cost(x,S) \leq (1+2\varepsilon)\cost(y,S)$ and $\cost(y,S) \geq (1-3\varepsilon)\cost(x,S)$.
\begin{align*}
\cost(C_i\cap\Omega,S) & = \sum_{x \in C_i\cap\Omega}\frac{|C_i|\cost_{\tau}(G,A)}{|\Omega|\cost_{\tau}(C_i,A)}\cost(x,S) \\
& = (1\pm2\varepsilon)\sum_{x \in C_i\cap\Omega}\frac{|C_i|\cost_{\tau}(G,A)}{|\Omega|\cost_{\tau}(C_i,A)}\cost(y,S) \\
& = (1\pm2\varepsilon)(1\pm\varepsilon)|C_i|\cost(y,S) \\
& = (1\pm2\varepsilon)(1\pm\varepsilon)\sum_{x \in C_i}\cost(y,S) \\
& = (1\pm2\varepsilon)(1\pm\varepsilon)(1\pm3\varepsilon)\cost(C_i,S)\\
& = (1\pm O(\varepsilon))\cost(C_i,S).
\end{align*}
The third equation holds because of event $\mathcal{E}$.
\end{proof}

\begin{proof}[Proof of \lem{interesting}]
Let $x_{i,S} := \arg\min_{x \in C_i}\cost(x,S)$ and $w_{x,S} = \left(\cost(x,S)-\cost(x_{i,S},S)\right)/\cost_{\tau}(x_{i,S},A)$. Let $E_{l,S} := \sum_{C_i\in L_S}\sum_{x \in C_i\cap I_{l,S}\cap \Omega}w(x)\cost_{\tau}(x_{i,S},A)w_{x,S}$ and $F_{l,S} := \sum_{C_i\in L_S}\sum_{x \in C_i\cap I_{l,S}\cap\Omega}w(x)\cost(x_{i,S},S)$.

$w_{x,S}$ is bound. For fixed $i$, $l$ and $S$ consider arbitrary $x \in C_i\cap I_{l,S}$. By the definition of $x_{i,S}$ it is straightforward to see $\cost(x,S) \geq \cost(x_{i,S},S)$, and thus $w_{x,S} \geq 0$. Besides, because the property of well-structured group there are
$$
\cost(x_{i,S},S) \leq \cost(x,S) \leq 2^{l+1}\cost_{\tau}(x,A) \leq 2^{l+2}\cost_{\tau}(x_{i,S},A)
$$
and
$$
\cost(x,x_{i,S}) \leq 2^{z-1}(\cost(x,A)+\cost(x_{i,S},A)) \leq 3\cdot2^{z-1}\cost_{\tau}(x_{i,S},A).
$$
Using \lem{triangle}, for any $\alpha \leq 1$,
$$
\cost(x,S) \leq (1+\frac{\alpha}{z})^{z-1}\cost(x_{i,S},S) + (1+\frac{z}{\alpha})^{z-1}\cost(x,x_{i,S})
$$
which after rearranging implies
\begin{align*}
\cost(x,S) - \cost(x_{i,S},S) & \leq 2\alpha\cost(x_{i,S},S) + (\frac{2z}{\alpha})^{z-1}\cost(x,x_{i,S}) \\ 
& \leq 2^z(2\alpha\max(1,2^{l+1})+(\frac{2z}{\alpha})^{z-1})\cost_{\tau}(x_{i,S},A)
\end{align*}
Let $\alpha = 2^{-l/z}$ (ignoring constants that depend on $z$), the inequality yields that
$$
\cost(x,S) - \cost(x_{i,S},S) \leq 2^{O(z\log z)}2^{l(1-1/z)}\cost_{\tau}(x_{i,S},A).
$$
Therefore, $w_{x,S} \in [0, 2^{O(z\log z)}2^{l(1-1/z)}]$.

$E_{l,S}$ can be expressed differently:
\begin{align}
E_{l,S} & = \sum_{C_i \in L_S}\sum_{x \in C_i \cap I_{l,S} \cap \Omega} w(x)\cost_{\tau}(x_{i,S},A) w_{x,S} \\
& = \sum_{C_i \in L_S}\sum_{x \in C_i \cap I_{l,S} \cap \Omega} w(x)(\cost(x,S) - \cost(x_{i,S},S)) \\
& = \sum_{x \in I_{l,S}\cap L_S\cap \Omega} w(x)\cost(x,S) - F_{l,S}.
\end{align}
The expectation of $E_{l,S}$ is as follows:
\begin{align*}
\E[E_{l,S}] & =  \sum_{x \in I_{l,S}\cap L_S} |\Omega|\Pr[x]w(x)\cost(x,S) - \E[F_{l,S}] \\
& = \sum_{x \in I_{l,S}\cap L_S}\frac{|\Omega|\cost_{\tau}(C_i,A)}{|C_i|\cost_{\tau}(G,A)}\frac{|C_i|\cost_{\tau}(G,A)}{|\Omega|\cost_{\tau}(C_i,A)}\cost(x,S) - \E[F_{l,S}] \\
& = \cost(I_{l,S}\cap L_S,S) - \E[F_{l,S}].
\end{align*}
Intuitively, by Bernstein's inequality the random variable $E_{l,S}$ is concentrated around its expectation. Let $\Omega_i$ be the point sampled from the $i$-th round of importance sampling (Line 9 in \alg{coreset_details}), and let $X_i = w(\Omega_i)\cost_{\tau}(x_{\tau(\Omega_i),S},A)w_{\Omega_i,S}$ when $\Omega_i \in I_{l,S}\cap \Omega$ and $X_i = 0$ otherwise. It holds that $E_{l,S} = \sum_{i = 1}^t X_i$. $X_i$ and its variance are bounded.
\begin{align*}
\Var[X_i] \leq \E[X_i^2] & = \sum_{x \in I_{l,S}\cap L_S} \Pr[x]\left(w(x)\cost_{\tau}(x_{\tau(x),S},A)w_{x,S}\right)^2  \\
& \leq \sum_{x \in I_{l,S}\cap L_S} \frac{\cost_{\tau}(C_i,A)}{|C_i|\cost_{\tau}(G,A)}\left(\frac{|C_i|\cost_{\tau}(G,A)}{|\Omega|\cost_{\tau}(C_i,A)}\cost_{\tau}(x,A)2^{l(1-1/z)}2^{O(z\log z)}   \right)^2 \\
& \leq \sum_{x \in I_{l,S}\cap L_S} 2^{2l(1-1/z)}2^{O(z\log z)} \frac{|C_i|\cost_{\tau}(G,A)}{|\Omega|^2\cost_{\tau}(C_i,A)}\cost_{\tau}^2(x,A) \\
&\leq  \sum_{x \in I_{l,S}\cap L_S} 2^{2l(1-1/z)}2^{O(z\log z)} \frac{\cost_{\tau}(G,A)}{|\Omega|^2}\cost_{\tau}(x,A)
\end{align*}
which implies
$$
\Var[X_i] \leq 
\left \{\begin{aligned}
& \frac{2^{O(z\log z)}\cost_{\tau}(G,A)\cost_{\tau}(G,A)}{|\Omega|^2} , \quad z = 1 \\
& \frac{2^{O(z\log z)}2^{l(1-2/z)}\cost_{\tau}(G,A)\cost(I_{l,S},S)}{|\Omega|^2} , \quad z \geq 2
\end{aligned}
\right.
$$
Besides, $X_i$ has upper bound.
\begin{align*}
X_i & \leq \frac{|C_i|\cost_{\tau}(G,A)}{|\Omega|\cost_{\tau}(C_i,A)}\cost_{\tau}(x,A)2^{l(1-1/z)}2^{O(z\log z)} \\
& \leq 2^{l(1-2/z)}2^{O(z\log z)} \frac{\cost_{\tau}(G,A)}{|\Omega|}.
\end{align*}
By Bernstein's inequality,
$$
\Pr[|E_{l,S} - \E[E_{l,S}]| \leq \frac{\varepsilon}{z\log z/\varepsilon} \cdot\left(\cost_{\tau}(G,A)+\cost(I_{l,S},S)   \right)] \leq \exp{\left(-\frac{\min(\varepsilon^2,\varepsilon^z)t}{2^{O(z\log z)}\log^2(1/\varepsilon)}\right)}.
$$

Denote $F_S := \sum_{l \leq z\log(4z/\varepsilon)}F_{l,S}$. Condition on event $\mathcal{E}$, the value of $F_S$ and its expectation are as follows:
\begin{align*}
\E[F_S] & = \sum_{l \leq z \log(4z/\varepsilon)} \sum_{C_i \in L_S}\sum_{x \in C_i\cap I_{l,S}} |\Omega|\Pr[x]w(x)\cost(x_{i,S},S) \\
& = \sum_{C_i \in L_S} |C_i|\cost(x_{i,S},S); \\
F_S & = \sum_{C_i \in L_S}\sum_{x \in C_i\cap \Omega} w(x)\cost(x_{i,S},S) \\
& = \sum_{C_i \in L_S}\cost(x_{i,S},S)\sum_{x \in C_i\cap \Omega}\frac{|C_i|\cost_{\tau}(G,A)}{|\Omega|\cost_{\tau}(C_i,A)}\\
& = (1\pm \varepsilon)\sum_{C_i\in L_S}|C_i|\cost(x_{i,S},S).
\end{align*}
Hence $F_S = (1\pm\varepsilon)\E[F_S]$, and $\E[F_S] \leq \cost(L_S,S) \leq \cost(G,S)$.

Taking an union bound over the concentration for all possible $S \in \mathbb{C}^k$ and all $l$ such that $\log(\varepsilon/2) \leq l \leq z \log (4z/\varepsilon)$, it holds with probability $1-\exp{\left(k\log (|\mathbb{C}|) - 2^O(z\log z)\cdot  \min(\varepsilon^2,\varepsilon^z)\cdot t \cdot \log^{-2}(1/\varepsilon)  \right)}$ that, for every $S \in \mathbb{C}^k$ and $\log(\varepsilon/2) \leq l\leq z\log (4z/\varepsilon)$,
$$
|E_{l,S} - \E[E_{l,S}] \leq \frac{\varepsilon}{z\log (z/\varepsilon)}\left(\cost_{\tau}(G,A) + \cost(I_{l,S},S) \right)|.
$$
Conditioning on the above event together with event $\mathcal{E}$, It holds that
\begin{align*}
|\cost(L_S,S) - \cost(L_S\cap\Omega,S)| & = |\sum_{x \in L_S}\cost(x,S) - \sum_{x \in L_S\cap \Omega}w(x)\cost(x,S)|\\
& \leq |\sum_{x \in L_S}\cost(x,S) - \E[F_S] + F_S - \sum_{x \in L_S\cap\Omega}w(x)\cost(x,S)| + |\E[F_S] - F_S| \\
& \leq \sum_{l < \log (\varepsilon/2)} |\sum_{x \in I_{l,S}\cap L_S}\cost(x,S) - \E[F_{l,S}] + F_{l,S} - \sum_{x \in I_{l,S}\cap L_S\cap\Omega}w(x)\cost(x,S)| \\
& \quad + \sum_{l = \log(\varepsilon/2)}^{z\log (4z/\varepsilon)}|\sum_{x \in I_{l,S}\cap L_S}\cost(x,S) - \E[F_{l,S}] + F_{l,S} - \sum_{x \in I_{l,S}\cap L_S\cap\Omega}w(x)\cost(x,S)| \\
& \quad + |\E[F_S]-F_S|.
\end{align*}
The tiny ranges can be bound as follows since $F_{l,S} \leq \sum_{x \in I_{l,S}\cap \Omega} w(x)\cost(x,S)$ and $\E[F_{l,S}] \leq \sum_{x \in I_{l,S}}\cost(x,S)$:
\begin{align*}
& \sum_{l < \log (\varepsilon/2)} |\sum_{x \in I_{l,S}\cap L_S}\cost(x,S) - \E[F_{l,S}] + F_{l,S} - \sum_{x \in I_{l,S}\cap L_S\cap\Omega}w(x)\cost(x,S)| \\
\leq & \sum_{l < \log(\varepsilon/2)}\left(\sum_{x \in I_{l,S}\cap L_S}\cost(x,S) + \E[F_{l,S}] + F_{l,S} + \sum_{x \in I_{l,S}\cap L_S\cap\Omega}w(x)\cost(x,S)\right) \\
\leq & 2\left(\sum_{x \in I_{l,S}}\cost(x,S) + \sum_{I_{l,S}\cap \Omega}w(x)\cost(x,S) \right) \\
\leq & 4\varepsilon \cost_{\tau}(G,A).
\end{align*}
Plugging this result into the previous inequality, it holds that
\begin{align*}
& |\cost(L_S,S) - \cost(L_S\cap\Omega,S)| \\
\leq & 4\varepsilon\cost_{\tau}(G,A) + \sum_{l = \log (\varepsilon/2)}^{z\log (4z/\varepsilon)}|E_{l,S} - \E[E_{l,S}]| + |\E[F_S] - F_S| \\
 \leq & 4\varepsilon\cost_{\tau}(G,A) + \left(z\log (4z/\varepsilon)-\log(\varepsilon/2)\right)\cdot\frac{\varepsilon}{z\log(z/\varepsilon)}\left(\cost_{\tau}(G,A) + \cost(L_S,S)\right) + \varepsilon\cost(G,S) \\
 \leq & O(\varepsilon)(\cost_{\tau}(G,A) + \cost(G,S)).
\end{align*}
\end{proof}

\paragraph{Outer groups}
For outer groups, the following lemma holds:
\begin{lemma}\label{lem:outer_rings}
Let
$$
t = \tilde{O}\left(2^{O(z)}\cdot m\cdot (d+\log n)\cdot\frac{1}{\varepsilon^2}\right)
$$
in \alg{coreset_details} Line 10.
It holds with probability at least $1/12$ that, for any group of outer rings $G = G_b^O$ and the corresponding sample $\Omega_b^O$, and for any $S \in (\mathbb{R}^d)^m$,
$$
|\cost(G,S) - \cost(\Omega,S)| \leq 2\frac{c_{\tau}^z\varepsilon}{z\log(z/\varepsilon)} (\cost(D,S) + \cost(D,A)).
$$
\end{lemma}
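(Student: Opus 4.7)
The plan is to follow the same overall blueprint as the proof of \lem{well_group}, but adapted for the sensitivity-based sampling used in Line~10. Define the random variables $X_i := w(\Omega_i)\cost(\Omega_i, S)$ for $i = 1, \ldots, t$, where $\Omega_i$ is sampled with probability $\cost_\tau(x,A)/\cost_\tau(G,A)$ and the weight is $w(x) = \cost_\tau(G,A)/(t\cdot \cost_\tau(x,A))$. A direct calculation gives $\E[X_i] = \cost(G,S)/t$, so $\cost(\Omega, S) = \sum_i X_i$ is an unbiased estimator of $\cost(G,S)$. The sensitivity sampling makes the key quantity $w(x)\cost(x,S) = (\cost_\tau(G,A)/t)\cdot(\cost(x,S)/\cost_\tau(x,A))$ convenient to bound via the generalized triangle inequality $\cost(x,S) \leq 2^{z-1}(\cost_\tau(x,A) + \cost(a_{\tau(x)}, S))$ from \lem{triangle}.

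I would then partition $G$ into tiny, interesting and huge ranges based on the ratio $\cost(x,S)/\cost_\tau(x,A)$, analogous to \lem{tiny}--\lem{interesting}. The tiny range is handled directly: because $w(x)\cdot\cost_\tau(x,A) = \cost_\tau(G,A)/t$ by construction of the weight, the contributions of $I_{\tin,S}$ to both $\cost(G,S)$ and $\cost(\Omega, S)$ are at most $O(\varepsilon)\cost_\tau(G,A)$. For the huge range, I would condition on a cluster-size analog of event $\mathcal{E}$ and mimic \lem{huge}: each $x \in C_i\cap G$ in a huge range satisfies $\cost(x,S) = (1\pm O(\varepsilon))\cost(y,S)$ for any reference point $y \in C_i \cap I_{l,S}$, and the sensitivity sampling gives $\sum_{x \in C_i\cap\Omega}w(x) = (1\pm \varepsilon)|C_i|$ on $\mathcal{E}$, so the sampled cost approximates the true cost to within $(1\pm O(\varepsilon))$.

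The core new computation is the Bernstein bound on the interesting range. I would bound $|X_i| \leq (\cost_\tau(G,A)/t) \cdot 2^{O(z)}(z/\varepsilon)^z$ on this range and compute
\[
\Var[X_i] \le \E[X_i^2] = \frac{\cost_\tau(G,A)}{t^2} \sum_{x \in G} \frac{\cost(x,S)^2}{\cost_\tau(x,A)} \le \frac{2^{O(z)}\cost_\tau(G,A)}{t^2}\bigl(\cost_\tau(G,A) + \cost(G,S)\bigr),
\]
using the triangle inequality to split $\cost(x,S)^2/\cost_\tau(x,A)$ into an $O(\cost_\tau(x,A))$ term and an $O(\cost(x,S))$ term. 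With $t = \tilde{O}(2^{O(z)} m (d+\log n)/\varepsilon^2)$, Bernstein's inequality then yields deviation at most $\varepsilon(\cost_\tau(G,A)+\cost(G,S))/(z\log(z/\varepsilon))$ with failure probability $\exp\bigl(-\Omega(m(d+\log n))\bigr)$ for a single $S$. Taking a union bound over $S \in \mathbb{C}^m$ for the $A$-approximate centroid set $\mathbb{C}$ of \lem{centroid} (of size $n\cdot(z/\varepsilon)^{O(d)}$, contributing the $d+\log n$ factor in $t$) and then transferring the bound to arbitrary $S$ via the centroid-set property as in the proof of \lem{well_group}, I can then take a final union bound over the $O(z\log(z/\varepsilon))$ outer groups $G = G_{O,b}$ and convert $\cost_\tau(G,A)$ to $c_\tau^z\cost(G,A) \leq c_\tau^z\cost(D,A)$ and $\cost(G,S) \leq \cost(D,S)$ to obtain the stated bound.

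The main obstacle will be the variance bound on the interesting range, since in principle $\cost(x,S)/\cost_\tau(x,A)$ can blow up; this is exactly why I split off huge-range points to be handled by the $\mathcal{E}$-based argument and tiny-range points by the trivial bound, leaving a bounded regime where Bernstein gives the tight $O(\varepsilon/z\log(z/\varepsilon))$ error against $\cost(D,S)+\cost(D,A)$. A subtle secondary point is being careful that the error bound is against the global $\cost(D,S)+\cost(D,A)$ and not just $\cost(G,S)+\cost(G,A)$, which is crucial so that the union bound over $O(z\log(z/\varepsilon))$ outer groups only contributes an additional $z\log(z/\varepsilon)$ factor in the denominator, exactly cancelling the one in the per-group bound.
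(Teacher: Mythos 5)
Your proposal attempts to transplant the tiny/interesting/huge decomposition (used for well-structured groups in \lem{tiny}--\lem{interesting}) to the outer groups, but this is not what the paper does, and it runs into real obstructions. The paper instead uses a much simpler close/far split: $G_{\close,S} := \{x \in G : \cost(x,S)\le 4^z\cost_\tau(x,A)\}$ and its complement $G_{\far,S}$. On the close part, the ratio $\cost(x,S)/\cost_\tau(x,A)$ is bounded by $4^z$, so $\E[X_i^2] = \frac{\cost_\tau(G,A)}{t^2}\sum_{x\in G_{\close,S}}\frac{\cost(x,S)}{\cost_\tau(x,A)}\cost(x,S)\le \frac{4^z}{t^2}\cost_\tau(G,A)\cost(G,S)$ follows trivially; with Bernstein this gives the $\varepsilon^{-2}$ dependence in $t$ that the lemma states. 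On the far part, the paper does not do concentration at all: it uses that $G$ is an outer group (so $\cost_\tau(x,A)\ge(z/\varepsilon)^{2z}\cost_\tau(C_i,A)/|C_i|$ for $x\in G$, and by Markov $|G\cap C_i|\le(\varepsilon/z)^2|C_i|$), combined with the far-range definition, to show $\cost(G_{\far,S},S)\le\frac{\varepsilon}{z\log(z/\varepsilon)}\cost(D,S)$ deterministically, plus a light constant-weight counting event $\mathcal E_{\far}$ to control $\cost(G_{\far,S}\cap\Omega,S)$.

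There are two concrete gaps in your version. First, the variance bound you claim fails on the interesting range: $\cost(x,S)^2/\cost_\tau(x,A)=\cost(x,S)\cdot[\cost(x,S)/\cost_\tau(x,A)]$ and the ratio can be as large as $(4z/\varepsilon)^z$ on the interesting range, so $\E[X_i^2]$ is not bounded by $2^{O(z)}\cost_\tau(G,A)(\cost_\tau(G,A)+\cost(G,S))/t^2$; any attempt to split via \lem{triangle} still leaves a term like $\cost(a_{\tau(x)},S)^2/\cost_\tau(x,A)$ that does not collapse. Plugging your actual variance/maximum bounds into Bernstein would force $t=\Omega(\varepsilon^{-z})$, contradicting the stated $t=\tilde O(\varepsilon^{-2})$. (In the well-structured case the $\varepsilon^{-z}$ is acceptable because the target theorem has $\max(\varepsilon^{-2},\varepsilon^{-z})$, but this lemma genuinely claims $\varepsilon^{-2}$.) Second, your huge-range argument and your proposed $\mathcal E$-event both implicitly rely on properties of well-structured groups that are false for outer groups. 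The \lem{huge} argument needs $\cost_\tau(x,A)\le 2\cost_\tau(y,A)$ for $x,y$ in the same cluster, a ring property; the outer ring $R_{i,O}$ is a union of rings, so $\cost_\tau(x,A)$ can vary unboundedly within a cluster. Likewise, the version of $\mathcal E$ you propose, $\sum_{x\in C_i\cap\Omega}w(x)=(1\pm\varepsilon)|C_i|$, is a sum of terms $w(x)=\cost_\tau(G,A)/(t\cdot\cost_\tau(x,A))$ that can vary by orders of magnitude across a cluster, and its concentration cannot be established by the Chernoff argument of \lem{event_estimation} (which uses the lower bound $\cost_\tau(G,A)/(2m)\le\cost_\tau(C_i,A)$, again a well-structured property). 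The paper's $\mathcal E_{\far}$ avoids this by noting $w(x)\cost_\tau(x,A)=\cost_\tau(G,A)/t$ is constant, so the event reduces to plain counting.
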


\begin{proof}

Fix an arbitrary $S$. Partition the points in $G$ into two parts and denote
\begin{align*}
& G_{\close,S} := \{x \in G\mid \cost(x,S) \leq 4^z\cost_{\tau}(x,A)\} \\
& G_{\far,S} := \{x \in G\mid \cost(x,S) > 4^z\cost_{\tau}(x,A)\}.
\end{align*}

Bernstein's inequality works for the close part. Let $\Omega_i$ be the $i$-th sampled point and let
$$
X_i = \left \{\begin{aligned}
& \frac{\cost_{\tau}(G,A)}{|\Omega|\cost_{\tau}(\Omega_i,A)}\cdot\cost(\Omega_i,S), \quad \Omega_i \in G_{\close,S} \\
& 0, \quad \Omega_i \notin G_{\close,S}
\end{aligned}
\right.
$$
Let $E_{\close,S}:= \sum_{i = 1}^t X_i$. The variance of $X_i$ has the property
\begin{align*}
\Var[X_i] \leq \E[X_i^2] & = \sum_{x \in G_{\close,S}}\left(\frac{\cost_{\tau}(G,A)}{|\Omega|\cost_{\tau}(x,A)}\cost(x,S) \right)^2 \frac{\cost_{\tau}(x,A)}{\cost_{\tau}(G,A)} \\
& =\frac{\cost_{\tau}(G,A)}{|\Omega|^2} \sum_{x \in G_{\close,S}} \frac{\cost(x,S)}{\cost_{\tau}(x,A)}\cost(x,S) \\
& \leq \frac{4^z}{|\Omega|^2}\cost_{\tau}(G,S)\cost(G,S).
\end{align*}
$X_i$ has a upper bound
$$
X_i \leq \max_{x \in G_{\close,S}}( \frac{\cost_{\tau}(G,A)}{|\Omega|\cost_{\tau}(x,A)}\cost(x,S)) \leq \frac{4^z}{|\Omega|}\cost_{\tau}(G,A).
$$
Bernstein's inequality yields that
$$
\Pr[|E_{\close,S} - \E[E_{\close,S}]|\leq \frac{\varepsilon}{z\log (z/\varepsilon)}(\cost_{\tau}(D,A) + \cost(D,S))] \leq \exp\left( -2^{O(z)}\cdot (\frac{\varepsilon}{z\log(z/\varepsilon)})^2 \cdot t\right)
$$
Similar technique about $A$-approximate centroid set yields that for any $S$ and any $G$,
$$
|\cost(G_{\close,S},S) - \cost(G_{\close,S}\cap \Omega,S)| \leq \frac{c_{\tau}^z\varepsilon}{z\log(z/\varepsilon)}\left(\cost(D,A)+\cost(D,S)\right)
$$
with probability at least
$$
1 - \exp\left( m\log n + O(md)\log (\frac{z^2}{\varepsilon}\log(z/\varepsilon)) - 2^{-O(z)}\varepsilon^2t\right).  
$$

The proof for the far part is as follows. Denote event $\mathcal{E}_{\far}$ to be: for any cluster $C$,
$$
\sum_{x \in C\cap G\cap \Omega} w(x)\cost_{\tau}(x,A) = (1\pm\varepsilon)\cost_{\tau}(C\cap G,A).
$$
Event $\mathcal{E}_{far}$ happens with probability at least $1 - m\exp(\varepsilon^2t/m)$. Let $E_C = \sum_{i = 1}^t X_i$, where
$$
X_i = \left \{\begin{aligned}
& w(x)\cost_{\tau}(\Omega_i,A), \quad \Omega_i \in C\cap G \\
& 0, \quad \Omega_i \notin C\cap G
\end{aligned}
\right.
$$
with $\Omega_i$ being the $i$-th sampled point. Calculation shows that $\Var[X_i] \leq E[X_i^2] \leq 2m\cost^2_{\tau}(C\cap G,A)/t^2$ and $X_i \leq 2m\cost_{\tau}(C\cap G,A)/t$, and the Bernstein's inequality implies the success probability.

Fix a cluster $C_i$ such that $C_i \cap G_{\far,S} \neq \varnothing$ and let $a_i$ be the center. Let $x_i$ be a point such that $x_i \in C_i\cap G_{\far,S}$, which implies $\dist(x_i,S) \geq 4\dist(x_i,a_i)$. Let $C_{\close} : = \{x \in C_i\mid \cost_{\tau}(x,A)\leq (z/\varepsilon)^z(\cost_{\tau}(C_i,A)/|C_i|) \}$. Due to Markov's inequality, $|C_{\close}|\geq (1-\varepsilon/z)|C_i|$. Note that $G$ is an outer group and for any $x \in G$ it holds that $\cost_{\tau}(x,A) \geq (z/\varepsilon)^{2z}\cdot (\cost_{\tau}(C_i,A)/|C_i|)$. By the definition of $G_{\far,S}$ and \lem{triangle} it can be derived that
\begin{align*}
 \cost(a_i,S)& \geq (\dist(x_i,S) - \dist(x_i,a_i))^z \geq 3^z\cost_{\tau}(x_i,A) \geq 3^z(z/\varepsilon)^{2z}\frac{\cost_{\tau}(C_i,A)}{|C_i|} \\
\cost(a_i,S) & \leq (1+\varepsilon)\cost(x,S) + (1+2z/\varepsilon)^{z-1}\cost(x,a_i) \\
& \leq (1+\varepsilon)\cost(x,S) + (3z/\varepsilon)^{z-1}(z/\varepsilon)^z\frac{\cost_{\tau}(C_i,A)}{|C_i|} \\
& \leq (1+\varepsilon)\cost(x,S) + \varepsilon\cost_{\tau}(a_i,S)
\end{align*}
where $x$ is an arbitrary point in $C_{\close}$. Combining the lower bound of the size of $C_{\close}$, this implies
$$
\cost(C_i,S) \geq \cost(C_{\close,S}) |C_{\close}|\cdot\frac{1-\varepsilon}{1+\varepsilon}\cdot\cost(a_i,S)\geq 3^z(z/\varepsilon)^{2z-1}\cost_{\tau}(C_i,A).
$$
Due to Markov's inequality the size of $G\cap C_i$ is bound by $(\varepsilon/z)^2|C_i|$ since $G$ is an outer group. Combining with the above inequalities, it holds that
\begin{align*}
\cost(G_{\far,S}\cap C_i,S) & = \sum_{x \in G_{\far,S}\cap C_i}\cost(x,S) \\
& \leq \sum_{x \in G_{\far,S}\cap C_i}(1+\varepsilon)\cost(a_i,S) + (1+2z/\varepsilon)^{z-1}\cost_{\tau}(G_{\far,S}\cap C_i,A)\\
& \leq \left(\frac{1+\varepsilon}{1-\varepsilon}\right)^z\left(\frac{\varepsilon}{z}\right)^2\cost(C_i,S) + \left(\frac{3z}{\varepsilon}\right)^{z-1}\left(\frac{1}{3}\right)^z\left(\frac{\varepsilon}{z}\right)^{2z-1}\cost(G_{\far,S}\cap C_i,S).
\end{align*}
By simplifying the inequality and summing over all the clusters $C_i$, it is implied that
$$
\cost(G_{\far,S},S) \leq \frac{\varepsilon}{z\log(z/\varepsilon)}\cost(D,S).
$$
Condition on $\mathcal{E}_{\far}$, calculation shows that
$$
\cost(G_{\far,S}\cap \Omega, S) \leq \frac{\varepsilon}{z\log(z/\varepsilon)}\cost(D,S).
$$
The combination of the results about the close part and the far part tells that
\begin{align*}
& |\cost(G,S) - \cost(G\cap\Omega,S)| \\
\leq & |\cost(G_{\close,S},S)- \cost(G_{\close,S}\cap\Omega,S)|+|\cost(G_{\far,S},S)| + |\cost(G_{\far,S}\cap\Omega,S)| \\
\leq & \frac{2c_{\tau}^z\varepsilon}{z\log(z/\varepsilon)}(\cost(D,S)+\cost(D,A)).
\end{align*}
To make the above inequality holds with probability at least $1/12$, it is sufficient to set
$$
t = \tilde{O}\left(2^{O(z)}\cdot m\cdot (d+\log n)\cdot\frac{1}{\varepsilon^2}\right).
$$
\end{proof}

\section{Proof of Multidimensional Quantum Counting}\label{append:proof_mqcounting}
This section provides a detailed proof of \thm{mqcounting} and \thm{mqsum}.

\thm{mqcounting} is restated as follows:
\begin{theorem}\label{thm:mqcounting_appendix}
Given two integers $1 \leq m \leq n$, two parameters $\varepsilon \in (0,1/3)$, and a partition $\tau\colon [n] \rightarrow [m]$. For each $j \in [m]$, denote $D_j := \{i \in [n]\colon \tau(i)=j\}$ as the $j$-th part and $n_j := |D_j|$ for the size. Assume that we have an oracle $O_{\tau}\colon \ket{i}\ket{0}\rightarrow\ket{i}\ket{\tau(j)}$ $\forall i \in [n]$. For $\varepsilon \in (0,1/3)$, $\delta > 0$, \alg{mqcounting} outputs $\tilde{n}_j$ such that $|\tilde{n}_j-n_j| \leq \varepsilon n_j$ for every $j\in [m]$ with probability at least $1-\delta$, using $\tilde{O}\left(\sqrt{nm/\varepsilon}\log(1/\delta)\right)$ queries to $O_{\tau}$, $\tilde{O}\left((\sqrt{nm/\varepsilon}+m/\varepsilon)\log(n/\delta)\log M \right)$ gate complexity, and additional $O(m\log M)$ classical processing time. The query complexity is optimal up to a logarithm factor.
\end{theorem}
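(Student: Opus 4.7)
The plan is to prove correctness via an invariant-tracking argument over the outer loop, bound the iteration count by showing geometric decay of the residual probability mass, and sum the per-iteration query costs using the resulting geometric series. The main obstacle will be coordinating three constants---the finalization threshold $\tilde n/(9nm)$, the additive precision $2\tilde n\varepsilon/(3nm)$ passed to \lem{amplitudeestimation}, and the doubling slack in $p_{mt}=2\tilde n/n$---so that simultaneously (i) any index that passes the threshold is finalized with the desired $\varepsilon$-relative accuracy, (ii) the surviving indices carry at most a constant fraction of the current residual mass, and (iii) the precondition $p_{mt}\ge \sum_{j\in P}p_j$ of \lem{amplitudeestimation} is preserved after each shrinkage of $P$.

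First I would fix the loop invariant at the start of iteration $t$: $P_t$ is the pending index set, $Q_t=\{i:\tau(i)\in P_t\}$, $\tilde n_t$ is a $1/2$-estimate of $|Q_t|$ obtained via \lem{counting}, and $p_{mt}=2\tilde n_t/n\ge \sum_{j\in P_t}p_j$ holds deterministically under the previous round's success event. The unitary $O_p$ from \eqn{oracle_p} restricted to the membership set $S=P_t$ is precisely what \lem{amplitudeestimation} consumes; invoking it yields estimates $\tilde p_j$ with $|\tilde p_j-p_j|\le 2\tilde n_t\varepsilon/(3nm)$ for every $j\in P_t$, conditioned on its success.

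Next I would analyze the finalization rule $\tilde p_j\ge \tilde n_t/(9nm)$. Combining the threshold with the additive error bound (and $\varepsilon<1/3$) forces $p_j\ge \tilde p_j/2$, so $n_j\ge n\tilde p_j/2$; the assigned value $\tilde n_j=n\tilde p_j$ therefore satisfies $|\tilde n_j-n_j|\le 2\tilde n_t\varepsilon/(3m)\le O(\varepsilon)\,n_j$, and up to a straightforward rescaling of the two constants this becomes exactly $\varepsilon n_j$. For the survivors, the same bound gives $p_j<\tilde n_t/(3nm)$, hence $\sum_{j\in P_{t+1}}n_j<\tilde n_t/3$, so the refreshed $\tilde n_{t+1}$ (again a $1/2$-estimate) is at most a constant fraction of $\tilde n_t$. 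This geometric decay terminates the loop once $\tilde n_t<m/\varepsilon$, at which point $|Q|=O(m/\varepsilon)$ and \lem{getall} enumerates $Q$ in $\tilde O(\sqrt{nm/\varepsilon}\log(1/\delta))$ queries, after which classical summation over each remaining $\tau^{-1}(j)$ produces exact counts and trivially meets the $\varepsilon$-relative guarantee.

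Finally I would sum the costs. Iteration $t$ invokes \lem{amplitudeestimation} with precision $\Theta(\varepsilon\tilde n_t/(nm))$ and maximal total probability $2\tilde n_t/n$, which spends $\tilde O(m\sqrt{n/\tilde n_t}/\varepsilon)$ queries to $O_\tau$ and a matching gate count; because $\tilde n_t$ decays geometrically from $n$ down to $m/\varepsilon$, the sum over iterations is a geometric series dominated by its last term at $\tilde n_t\asymp m/\varepsilon$, giving the claimed $\tilde O(\sqrt{nm/\varepsilon}\log(1/\delta))$ query complexity. Adding the $\tilde O(\sqrt{nm/\varepsilon})$ Grover cleanup and a union bound over the $O(\log n)$ iterations, each subroutine invoked with failure probability $\delta/\polylog(n)$, controls the overall success probability at $1-\delta$, while the classical bookkeeping to assemble the $\tilde n_j$'s is $O(m)$ per round. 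Near-optimality in $n$ and $m$ then follows from \thm{lower-mqc-counting-main}, which is proved separately.
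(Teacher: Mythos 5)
Your proposal follows the same route as the paper's proof: the same loop invariant (a $1/2$-estimate $\tilde n$ of $|Q|$ maintaining $p_{mt}=2\tilde n/n\ge\sum_{j\in P}p_j$), the same analysis of the finalization threshold to convert the additive precision of \lem{amplitudeestimation} into a relative $\varepsilon$-guarantee, the same argument that survivors carry at most $\tilde n_t/3$ residual mass so $\tilde n_{t+1}\le\tilde n_t/2$, and the same Grover cleanup once $\tilde n<m/\varepsilon$. The only stylistic difference is in how you total the iterations: the paper observes that inside the loop $\tilde n_t\ge m/\varepsilon$, so each of the $O(\log n)$ iterations costs $\tilde O(\sqrt{nm/\varepsilon})$ uniformly, whereas you sum the per-iteration cost $\tilde O(m\sqrt{n/\tilde n_t}/\varepsilon)$ as a geometric series dominated by the final term at $\tilde n_t\asymp m/\varepsilon$; both give the same bound. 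Your acknowledgment that the constants in the threshold $\tilde n/(9nm)$ and the precision $2\tilde n\varepsilon/(3nm)$ must be coordinated (and can be rescaled) to get exactly $\varepsilon$-relative accuracy is correct and in fact papers over a small constant inconsistency present in the paper's own write-up, where the proof implicitly uses threshold $\tilde n/(nm)$ while the algorithm states $\tilde n/(9nm)$.
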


\begin{proof}
We establish the correctness and complexity bound separately.
\paragraph{Correctness}
The ``maximal total probability" $p_{mt}$ and precision $\frac{2\tilde{n}\varepsilon}{3nm}$ in Line 6 satisfies \lem{amplitudeestimation}. Denote the exact cardinality of $Q$ in Line 13 as $n'$, $|\tilde{n} - n'| \leq \frac{1}{2} n'$. For the maximal total probability, we have $\sum_{j \in S}p_j = \sum_{j \in S}\frac{n_j}{n} = \frac{n'}{n} \leq \frac{2\tilde{n}}{n} = p_{mt}.$ And for the precision, $\frac{2\tilde{n}\varepsilon}{3nm} \leq \frac{n'}{nm}\varepsilon \leq \varepsilon < 1/3$.

For each $j \in M$, $n_j$ has been estimated. For those $n_j$ estimated after the while loop stops, we find all the members belonging to the corresponding subset and count classically in Line 16, which gives an exact cardinality.

For those $n_j$ estimated in the while loop, $|\tilde{p}_j - \frac{n_j}{n}| = |\tilde{p_j} - p_j| \leq \frac{2\tilde{n}\varepsilon}{3nm}$. Since $\tilde{p_j}\geq \frac{\tilde{n}}{nm}$, $|\tilde{p}_j - \frac{n_j}{n}| \leq\frac{2}{3} \varepsilon \tilde{p_j}$, thus $|\tilde{n}_j - n_j| \leq \frac{2}{3} \varepsilon \tilde{n}_j$. Since $\varepsilon \in (0,1/3)$, $|\tilde{n}_j - n_j| \leq\varepsilon n_j$ as required.

\paragraph{Complexity}
In each iteration, the estimation process for $\{p_j\}_{j\in P}$ in Line 6 uses $O({\sqrt{p_{mt}}}/\frac{2\tilde{n}\varepsilon}{3nm}) = O(\frac{m\sqrt{n}}{\varepsilon\sqrt{\tilde{n}}}) =  O(\frac{\sqrt{nm}}{\varepsilon})$ applications of $U_p$ and membership queries for $P$, and $\tilde{O}(m+p_{mt}/\frac{2\tilde{n}\varepsilon}{3nm}) = \tilde{O}(\frac{m}{\varepsilon})$ gate complexity, since $\tilde{n} < m/\varepsilon$ (Line 15) at this time. The estimation for the cardinality of $Q$ needs $\tilde{O}(\sqrt{n/m}/\varepsilon)$ membership queries to $P$ according to \lem{counting}. The classical process in Line 7-11 and Line 16 needs at most $O(m)$ time. Since we can make all elements in $P$ sorted in $O(m\log(m))$ time to keep an $O(\log(m))$ query complexity for the membership query to $P$, the total query complexity per iteration is $\tilde{O}(\sqrt{\frac{nm}{\varepsilon}})$, with additional $\tilde{O}(\frac{m}{\varepsilon})$ processing time.

The loop (Line 5-15) has at most $O(\log n)$ iterations. Denote $\tilde{n}_0 = n$ and $\tilde{n}$ obtained in Line 13 at the $t$-th iteration as $\tilde{n}_t$. At the $(t+1)$-th iteration, for any $j \in P$ there is $\tilde{p_j} \leq \frac{\tilde{n}_t}{9nm}$, thus $p_j \leq \tilde{p}_j + \frac{2\tilde{n}_t\varepsilon}{3nm} \leq \frac{\tilde{n}_t}{3nm}$. We can calculate that $\tilde{n}_{t+1} \leq \frac{3}{2}n'_{t+1} = \frac{3n}{2}\sum_{j \in P}p_j \leq \frac{3nm}{2}\max_{j \in P}p_j \leq \frac{\tilde{n}_t}{2}$. As a result, after $O(\log n)$ iterations there must be $\tilde{n} \leq m/\varepsilon$.

Finding all the items remaining in $Q$ (Line 16) requires $\tilde{O}(\frac{\sqrt{nm}}{\varepsilon})$ queries to $O_{\tau}$ and membership queries to $P$ since $|P| \leq m/\varepsilon$ here. The overall query complexity is $\tilde{O}(\frac{\sqrt{nm}}{\varepsilon})$.

There are at most $O(\log n)$ iterations and each iteration fails with probability at most $O(\frac{\delta}{\log n})$. Therefore, the success probability is at least $1-\delta$.
\end{proof}

\thm{mqsum} is restated as follows:
\begin{theorem}[Multidimensional Quantum Approximate Summation]
Given two integers $1 \leq m \leq n$, a real parameter $\varepsilon > 0$, a partition $\tau\colon [n] \rightarrow [m]$, and a function $f\colon [n] \rightarrow \mathbb{R}_{\geq 0}$. Assume that there exists access to an oracle $O_{\tau}\colon \ket{i}\ket{0}\ket{0}\rightarrow\ket{i}\ket{\tau(i)}\ket{f(i)}$ $\forall i \in [n]$ and assume that $f$ has an upper bound $M$. For $\varepsilon \in (0,1/3)$, $\delta > 0$, there exists a quantum algorithm that finds $\varepsilon$-estimation for each $s_j := \sum_{\tau(i = j)}f(i)$, $j \in [m]$ with probability at least $1-\delta$, using $\tilde{O}\left(\sqrt{nm/\varepsilon}\log(1/\delta)\log M\right)$ queries to $O_{\tau}$ and additional $\tilde{O}\left((\sqrt{nm/\varepsilon}+m/\varepsilon)\log(n/\delta)\log M \right)$ gate complexity.
\end{theorem}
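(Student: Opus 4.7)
The plan is to reduce the multidimensional approximate summation problem to $L := \lceil \log M \rceil$ instances of the multidimensional counting problem solved by \thm{mqcounting}. Write each value $f(i) \in [0, M]$ in binary as $f(i) = \sum_{\ell=0}^{L-1} 2^\ell b_\ell(i)$ with $b_\ell(i) \in \{0,1\}$, so that $s_j = \sum_{\ell=0}^{L-1} 2^\ell N_{j,\ell}$ where $N_{j,\ell} := |\{i \in [n] : \tau(i) = j \text{ and } b_\ell(i) = 1\}|$. The crucial observation is that a relative $\varepsilon$-estimate of every $N_{j,\ell}$ immediately yields a relative $\varepsilon$-estimate of $s_j$: if $|\tilde{N}_{j,\ell} - N_{j,\ell}| \leq \varepsilon N_{j,\ell}$, then by nonnegativity and the triangle inequality, $|\tilde{s}_j - s_j| \leq \sum_\ell 2^\ell \varepsilon N_{j,\ell} = \varepsilon s_j$, so no precision loss is incurred by the bit decomposition.

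For each bit index $\ell \in \{0, \ldots, L-1\}$, I would construct a new partition $\tau_\ell : [n] \to [m+1]$ with $\tau_\ell(i) := \tau(i)$ if $b_\ell(i) = 1$ and $\tau_\ell(i) := m+1$ (a dummy label to be discarded) otherwise. A single query to the oracle $O_{\tau_\ell} \colon \ket{i}\ket{0} \to \ket{i}\ket{\tau_\ell(i)}$ is implementable with two calls to $O_\tau$ in the usual compute-copy-uncompute pattern: apply $O_\tau$ to obtain $\ket{i}\ket{\tau(i)}\ket{f(i)}$, coherently extract $b_\ell(i)$ and write $\tau(i)$ or $m+1$ into the output register accordingly, then apply $O_\tau^{-1}$ to clean up the ancillas. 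I then invoke \thm{mqcounting} on $O_{\tau_\ell}$ with precision $\varepsilon$ and failure probability $\delta / L$, obtaining $\tilde{N}_{j,\ell}$ with the desired relative error for all $j \in [m]$ (the count for the dummy label is discarded), and output $\tilde{s}_j := \sum_{\ell=0}^{L-1} 2^\ell \tilde{N}_{j,\ell}$.

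A union bound over the $L$ invocations yields overall failure probability at most $\delta$. The per-invocation cost from \thm{mqcounting} is $\tilde{O}(\sqrt{nm/\varepsilon} \log(L/\delta))$ queries to $O_{\tau_\ell}$ (hence a constant factor more to $O_\tau$) and $\tilde{O}((\sqrt{nm/\varepsilon} + m/\varepsilon) \log(nL/\delta))$ gates; summing over the $L = \lceil \log M \rceil$ bits and absorbing $\log L = O(\log \log M)$ into $\tilde{O}$ gives the stated query and gate bounds. The classical recombination $\tilde{s}_j = \sum_\ell 2^\ell \tilde{N}_{j,\ell}$ costs $O(L)$ per index, totaling $O(m \log M)$ additional classical processing time. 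The main potential pitfall is simply verifying that the relative-error condition propagates additively through the weighted sum (which is where nonnegativity of $f$ matters) and that using $m+1$ parts in \thm{mqcounting} rather than $m$ inflates the constant by only a factor absorbed into $\tilde{O}$; beyond this, the argument is essentially bookkeeping on top of the already-established multidimensional counting subroutine.
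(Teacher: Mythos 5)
Your proposal is correct and follows essentially the same route as the paper: decompose $f(i)$ into its $\lceil \log M \rceil$ binary digits, build a per-bit counting oracle from $O_\tau$ via compute-copy-uncompute (the paper tags non-contributing indices with label $0$ rather than your $m+1$, a purely cosmetic difference), invoke the multidimensional counting theorem with per-bit failure probability $\delta/\lceil \log M\rceil$, and recombine via $\tilde{s}_j = \sum_\ell 2^\ell \tilde{N}_{j,\ell}$, using nonnegativity so that relative error passes through the weighted sum without amplification. Your cost and error accounting match the paper's.
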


\begin{proof}
Write $f(i)$ as a binary number $\bar{f_0(i)f_1(i)\ldots f_l(i)}$, $l = \lceil \log M \rceil$. For each $t = 0:l$, let $O_t$ be
$$
O_t \colon \ket{i}\ket{0} \rightarrow \ket{i}\ket{\tau(i)I(f_t(i) = 1)},
$$
where $I(f_t(i) = 1)$ is the indicator for whether $f_t(i) = 1$. $O_t$ can be constructed by constant queries to $O_{\tau}$ and its inverse:
$$
\ket{i}\ket{0}\ket{0}\ket{0} \xmapsto{O_{\tau}} \ket{i}\ket{\tau(i)}\ket{f(i)}\ket{0} \mapsto \ket{i}\ket{\tau(i)}\ket{f(i)}\ket{\tau(i)I(f_t(i))} \xmapsto{O_{\tau}^{-1}} \ket{i}\ket{0}\ket{0}\ket{\tau(i)I(f_t(i) = 1)}.
$$
Applying \thm{mqcounting_appendix} with oracle $O_t$ and $\delta' = \delta/l$, \alg{mqcounting} outputs $\tilde{s}_j^t$ for $j = 1:m$, $\tilde{s}_j^t$ is an $\varepsilon$-estimation for $s_j^t = \sum_{\tau(i) =j} f_t(i)$ using $\tilde{O}\left(\sqrt{nm/\varepsilon}\log(l/\delta)\right)$ calls for $O_t$ and additional $\tilde{O}\left((\sqrt{nm/\varepsilon}+m/\varepsilon)\log(nl/\delta)\right)$ gate complexity. Let $\tilde{s}_j = \sum_{t = 0}^l 2^t \tilde{s}_j^t$.
$$
|\tilde{s}_j - s_j| \leq \sum_{t = 0}^l 2^t|\tilde{s}_j^t - s_j^t| \leq \varepsilon\sum_{t = 0}^l 2^t s_j^t \leq \varepsilon s_j.
$$
Therefore, $\tilde{s}_j$ is an $\varepsilon$-estimation of $s_j$, $\forall j \in [m]$. The total complexity is $\tilde{O}\left(\sqrt{nm/\varepsilon}\log(1/\delta)\log M\right)$ queries to $O_{\tau}$, $\tilde{O}\left((\sqrt{nm/\varepsilon}+m/\varepsilon)\log(n/\delta)\log M \right)$ gate complexity, and additional $O(m\log M)$ classical processing time.
\end{proof}

\section{Proofs of Quantum Lower Bounds}\label{append:lower}
\subsection{Auxiliary Lemmas}
In the proofs of our quantum lower bounds, we use the following tools.

\begin{theorem}[The Perfect Composition Theorem, \citealt{HoyerLS07}, \citealt{LeeMRSS11}, \citealt{Kimmel13}, and \citealt{Reichardt14}]\label{thm:composition_theorem}
    For the alphabet set $\Sigma,\Gamma$, functions $f: D_1 \to K$ and $g : \mathcal D_2 \to \Gamma$ with $\mathcal D_1 \subseteq \Gamma^n,\mathcal D_2 \subseteq \Sigma^m$, let $f \bullet g = f(g^n)$. The bounded-error quantum query complexity $Q$ satisfies
    \[ Q(f \bullet g) = \Theta(Q(f) \cdot Q(g)).\]
\end{theorem}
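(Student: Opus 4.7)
The plan is to reduce the theorem to a tight SDP characterization of bounded-error quantum query complexity, and then prove multiplicativity of that SDP under composition. Specifically, I would first invoke the characterization of Reichardt (together with Lee--Mittal--Reichardt--Spalek--Szegedy) that $Q(h) = \Theta(\mathrm{Adv}^{\pm}(h))$, where $\mathrm{Adv}^\pm$ denotes the general (negative-weight) adversary bound. This is the deep input from the literature; once it is available, the statement reduces to establishing the multiplicativity
\[
\mathrm{Adv}^{\pm}(f \bullet g) \;=\; \mathrm{Adv}^{\pm}(f) \cdot \mathrm{Adv}^{\pm}(g).
\]

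For the upper bound $\mathrm{Adv}^{\pm}(f\bullet g) \le \mathrm{Adv}^{\pm}(f)\cdot \mathrm{Adv}^{\pm}(g)$, I would take optimal primal adversary matrices $\Gamma_f$ and $\Gamma_g$ and stitch them into an adversary matrix $\Gamma$ for $f\bullet g$ whose rows and columns are indexed by tuples in the domain of $g^n$. For inputs $x,x'$ that differ in exactly the $j$-th $\Sigma$-symbol of the $i$-th block, the relevant entry of $\Gamma$ is built from $(\Gamma_f)_{g^n(x),g^n(x')}$ combined (via a tensor-product style assembly) with $(\Gamma_g)_{x_i,x'_i}$. Bounding the spectral norm of $\Gamma \circ D_{(i,j)}$ coordinate-wise by $\|\Gamma_f \circ D_i^f\|\cdot\|\Gamma_g \circ D_j^g\|$ then delivers the multiplicative upper bound with no logarithmic loss.

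The matching lower bound $\mathrm{Adv}^{\pm}(f\bullet g) \ge \mathrm{Adv}^{\pm}(f)\cdot\mathrm{Adv}^{\pm}(g)$ is the more delicate direction, and I would prove it by passing to the dual of the adversary SDP, or equivalently to span programs. The key fact is that span programs compose multiplicatively: replacing each input literal of an optimal span program for $f$ by a copy of an optimal span program for $g$ yields a span program for $f\bullet g$ whose witness size equals the product of the witness sizes. Since witness size equals $\mathrm{Adv}^{\pm}$ up to constants (by the Reichardt equivalence), any lower adversary bound on $f\bullet g$ must pick up the full product. Combining this with the upper bound above closes the multiplicativity, and plugging back into $Q(\cdot) = \Theta(\mathrm{Adv}^{\pm}(\cdot))$ yields $Q(f\bullet g) = \Theta(Q(f)\cdot Q(g))$.

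The main obstacle is the lower-bound half of the multiplicativity: a naive primal-matrix argument gives only the upper bound and typically loses a logarithmic factor when one amplifies the inner algorithm's success probability, which is exactly the loss that the word \emph{perfect} in the theorem name forbids. Avoiding this loss is what forces the argument to pass through the span program / SDP duality machinery rather than through a direct algorithmic composition of $A_f$ and $A_g$ with error reduction.
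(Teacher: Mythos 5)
The paper does not prove \thm{composition_theorem}; it cites it as a black box from \citet{HoyerLS07}, \citet{LeeMRSS11}, \citet{Kimmel13}, and \citet{Reichardt14}. So there is no ``paper proof'' to compare against, and your task was effectively to reconstruct the proof sketch from the literature. Your high-level route --- pass to $Q(\cdot)=\Theta(\mathrm{Adv}^{\pm}(\cdot))$, then prove multiplicativity of $\mathrm{Adv}^{\pm}$ under composition using primal adversary-matrix composition on one side and span-program (dual SDP) composition on the other --- is exactly the machinery those papers use, and you are also right that the word ``perfect'' is what forces the dual/span-program argument instead of naive algorithmic composition with error amplification.

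However, you have the two directions of the multiplicativity swapped, and this is not a purely cosmetic slip. The adversary SDP is a maximization; a feasible primal matrix $\Gamma$ for $f\bullet g$ assembled from $\Gamma_f$ and $\Gamma_g$ with objective value $\mathrm{Adv}^{\pm}(f)\cdot\mathrm{Adv}^{\pm}(g)$ certifies $\mathrm{Adv}^{\pm}(f\bullet g)\ge\mathrm{Adv}^{\pm}(f)\cdot\mathrm{Adv}^{\pm}(g)$, i.e.\ the \emph{lower} bound on query complexity --- not the upper bound as you wrote. Symmetrically, composing span programs (the dual side) yields a witness-size certificate of at most $\mathrm{Adv}^{\pm}(f)\cdot\mathrm{Adv}^{\pm}(g)$, hence $\mathrm{Adv}^{\pm}(f\bullet g)\le\mathrm{Adv}^{\pm}(f)\cdot\mathrm{Adv}^{\pm}(g)$ and an $O(Q(f)\cdot Q(g))$-query algorithm; this is the \emph{upper}-bound half, and it is the half where avoiding the $\log$ loss is the real issue. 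Your closing paragraph also attributes the ``naive argument loses a log'' problem to the primal-matrix side, but the primal side is a lower-bound argument and never involves error amplification; the log loss is a phenomenon of direct algorithmic composition, which is exactly what the span-program upper bound replaces. There is also a minor scope caveat you should flag: the clean span-program equivalence is stated for Boolean-output functions, while the theorem as stated allows $f$ to take values in an arbitrary set $K$; handling general-output $f$ is precisely what the state-conversion framework of \citet{LeeMRSS11} and the analysis of \citet{Kimmel13} add on top of \citet{Reichardt14}.
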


 \begin{corollary}[A Direct Sum Theorem]\label{cor:dirsum_theorem}
    For the alphabet set $\Sigma$, functions $g : \mathcal D \to \{0,1\}$ with $\mathcal D \subseteq \Sigma^m$, the bounded-error quantum query complexity $Q$ satisfies
    \[ Q(g^n) = \Theta(nQ(g)).\]
\end{corollary}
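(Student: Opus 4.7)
The approach is to deduce the corollary from the perfect composition theorem (\thm{composition_theorem}) by picking a suitable outer function $f$ with $Q(f) = \Theta(n)$. The upper bound $Q(g^n) = O(n Q(g))$ is the easy direction: one runs an optimal bounded-error quantum algorithm for $g$ sequentially on each of the $n$ independent input blocks, boosting the per-copy success probability to $1 - 1/(3n)$ via $O(\log n)$-fold repetition and majority voting, then takes a union bound over the $n$ copies.

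For the matching lower bound $Q(g^n) = \Omega(n Q(g))$, I would instantiate \thm{composition_theorem} with the outer function $f = \mathrm{PARITY}_n \colon \{0,1\}^n \to \{0,1\}$, whose bounded-error quantum query complexity satisfies $Q(\mathrm{PARITY}_n) = \Theta(n)$ by \citet{BealsBCMW01}. Then $f \bullet g$ is the Boolean function that outputs the parity of $g$ evaluated on $n$ independent input blocks, and the composition theorem gives $Q(f \bullet g) = \Theta(Q(\mathrm{PARITY}_n) \cdot Q(g)) = \Theta(n Q(g))$. On the other hand, any algorithm that correctly computes $g^n$ also computes $f \bullet g$ without any further queries by XORing the $n$ output bits classically, hence $Q(g^n) \geq Q(f \bullet g) = \Omega(n Q(g))$. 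Combining the two bounds yields the claim.

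The main obstacle I anticipate is not in the lower bound, which falls out of the composition theorem immediately, but rather a bookkeeping subtlety in the upper bound: a na\"ive sequential simulation actually yields $O(n Q(g) \log n)$ because of the per-copy amplification overhead, so to match $\Theta(n Q(g))$ exactly one either absorbs the log factor into the stated $\Theta$ (which is standard for this form of the direct sum theorem) or invokes a more careful variable-success-probability / strong direct product argument. Once that subtlety is handled, every remaining step is an immediate consequence of \thm{composition_theorem}.
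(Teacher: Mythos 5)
Your lower-bound argument is correct and closely related to the paper's. The paper instead plugs $f = \mathrm{id}$ (the identity on $\{0,1\}^n$) into \thm{composition_theorem}, so that $f \bullet g = g^n$ and the theorem yields $Q(g^n) = \Theta(Q(\mathrm{id}) \cdot Q(g))$ directly, giving \emph{both} the upper and lower bound in one stroke; it then establishes $Q(\mathrm{id}) = \Theta(n)$ (upper bound trivial, lower bound by the same reduction-to-PARITY observation you use). You use $f = \mathrm{PARITY}_n$, which gives the lower bound cleanly via $Q(g^n) \ge Q(\mathrm{PARITY}_n \bullet g) = \Theta(n Q(g))$, but then you must supply the upper bound separately.

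This is where your proposal has a genuine gap. Sequential simulation with per-copy amplification gives only $O(n\, Q(g) \log n)$, and you note this but offer two fallbacks, neither of which closes it: you cannot ``absorb the log factor into the stated $\Theta$'' (the asymptotic notation is exactly what forbids this), and the ``more careful variable-success-probability / strong direct product argument'' you gesture at is precisely the content of the composition theorem when applied with $f=\mathrm{id}$. In other words, the log-free upper bound $Q(g^n) = O(n\,Q(g))$ is itself a nontrivial theorem, not a routine simulation fact; the paper obtains it essentially for free from the composition theorem's $\Theta$, whereas your route leaves it unproved. If you want to keep the $\mathrm{PARITY}$ outer function for the lower bound, the simplest fix is to also invoke the composition theorem with $f=\mathrm{id}$ (or cite the adversary-bound direct sum result directly) for the upper bound, rather than the naive repetition argument.
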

\begin{proof}
    Plug $f = \mathrm{id}$ in \thm{composition_theorem}. Then, we only need to prove that $Q(f) = \Omega(n)$. This can be seen, e.g., by reducing to PARITY, which is defined and proved to have $Q(\mathrm{PARITY}) = n/2$ in~\citet{BealsBCMW01}.
\end{proof}

\begin{definition}
We define the following problems:
\begin{itemize}
    \item Decisional Quantum Counting: $f_{n',l,l'} \colon S \to \{0,1\}$ where $S \subset \{0,1\}^{n'},l \ne l'$ is a partial Boolean function defined as
    \begin{equation}
        f_{n',l,l'}(x_0,x_1,\ldots,x_{n'-1})=\left\{\begin{array}{cc}
            0 & \text{if } |X| = l \\
            1 & \text{if } |X| = l' \\
            \text{not defined} & \text{otherwise}
        \end{array}\right.
    \end{equation}
    where $|X| = \sum_{i=0}^{n'-1} x_i$. The notation $f_{n',l,l'}^{(k)} \colon S^k \to \{0,1\}^k$ represents the repeated direct product of $f$, i.e. $f_{n',l,l'}^{(k)}\left(x^{(1)},x^{(2)},\ldots,x^{(k)}\right) = \left(f_{n',l,l'}\left(x^{(1)}\right),f_{n',l,l'}\left(x^{(2)}\right),\ldots,f_{n',l,l'}\left(x^{(k)}\right)\right)$.
    \item Approximate Bits Finding: $\Approx_{k} \subseteq \{0,1\}^k \times \{0,1\}^k$ is a relation problem where for input bits $x_1,\ldots,x_k$, output bits $c_1,\ldots,c_k$ are correct iff the hamming distance between $x$ and $c$ is less than $\varepsilon_0 \cdot k$ for some absolute constant $0<\varepsilon_0<1$ to be determined in the proof of \thm{k-means-lower}.
    \item The operator $\circ$ composites a relation and a function in the natural way, resulting in a relation problem on $S^k \times \{0,1\}^k$.
\end{itemize}
\end{definition}

\begin{theorem}[Theorem 1.3 of \citealt{NayakW99}] \label{thm:lower-quantum-counting}
    Any bounded-error quantum algorithm that computes $f_{n',l,l'}$, given the input as an oracle, must make $\Omega\left(\sqrt{n' / \Delta} + \sqrt{n'(n' - m) / \Delta}\right)$, where $\Delta := |l - l'|$ and $m \in \{l,l'\}$ s.t. $|m - n'/2|$ is maximized. 
\end{theorem}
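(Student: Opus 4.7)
The plan is to prove this via the polynomial method of Beals--Buhrman--Cleve--Mosca--de Wolf, reducing the quantum query lower bound to a question about the minimum degree of a bounded univariate polynomial with prescribed gap behavior. Concretely, suppose a $T$-query quantum algorithm $\mathcal{A}$ solves $f_{n',l,l'}$ with bounded error. By the standard BBCMW argument, the acceptance probability $P(x_0,\ldots,x_{n'-1})$ of $\mathcal{A}$ is a multilinear polynomial of degree at most $2T$ in the input bits, and $P$ takes values in $[0,1]$ on all of $\{0,1\}^{n'}$.

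Next I would symmetrize $P$ over the action of $S_{n'}$ on the input coordinates: since $f_{n',l,l'}$ depends only on the Hamming weight, the symmetrized polynomial $q(t) := \mathbb{E}_{\pi}[P(\pi \cdot x)]$ (where $x$ is any input of weight $t$) is well-defined on integer weights and extends to a univariate real polynomial of degree at most $2T$ with $q(k) \in [0,1]$ for every integer $k \in [0,n']$. Moreover, by the bounded-error condition, $q(l) \leq 1/3$ and $q(l') \geq 2/3$ (or vice versa), so $|q(l)-q(l')| \geq 1/3$ over an interval of length $\Delta = |l - l'|$.

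The core analytic step is to lower bound the degree of such a $q$. I would invoke Paturi's theorem (or Ehlich--Zeller / Rivlin--Cheney-type bounds via Chebyshev polynomials) which asserts that a real polynomial bounded in absolute value by $1$ on the integers $\{0,1,\ldots,n'\}$ and exhibiting a constant jump over a window of width $\Delta$ around position $m$ must have degree
\[
\deg(q) = \Omega\!\left(\sqrt{\tfrac{n'}{\Delta}} + \sqrt{\tfrac{n'(n'-m)}{\Delta}}\right),
\]
where $m \in \{l, l'\}$ is chosen as the point farther from $n'/2$. Combining with $\deg(q) \leq 2T$ yields the claimed lower bound on $T$. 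The additive second term comes from an appropriate Markov/Bernstein-type inequality for polynomials on a grid near the boundary, while the first term is the generic jump bound; the ``farthest from $n'/2$'' choice of $m$ is what lets us apply the edge-of-grid estimate optimally.

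The main technical obstacle is the grid-Markov inequality underlying Paturi's bound: a polynomial that is bounded on a discrete set of integer points can nevertheless grow on the continuous interval, so one must carefully convert the discrete boundedness of $q$ at $\{0,\ldots,n'\}$ into a continuous $L^\infty$ bound (up to a constant blow-up) via Chebyshev-node extremal polynomial estimates before applying Markov's inequality to the slope-over-$\Delta$ constraint. Once that conversion is in place, the derivative lower bound translates cleanly into the stated degree lower bound; the two summands correspond to the two regimes where either the jump itself or the proximity to the boundary of the grid is the binding constraint.
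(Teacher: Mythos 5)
The paper does not prove this statement at all; it is quoted directly as Theorem~1.3 of Nayak and Wu, so there is no in-paper argument to compare against. Your sketch is a faithful outline of the route Nayak and Wu actually take: the BBCMW polynomial method to get a degree-$\le 2T$ acceptance polynomial, Minsky--Papert symmetrization to a univariate $q$ bounded in $[0,1]$ on $\{0,1,\dots,n'\}$ with an $\Omega(1)$ jump over a window of width $\Delta$, and then a degree lower bound for such grid-bounded polynomials proved via Markov/Bernstein and Chebyshev-extremal (Ehlich--Zeller, Rivlin--Cheney) estimates --- what you attribute to Paturi is essentially the lemma Nayak and Wu establish from those ingredients, with the two summands coming exactly from the generic-jump regime and the near-boundary regime as you describe --- so your proposal matches the cited proof in substance.
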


\subsection{Proof of \thm{lower-mqc-counting-main}}\label{append:proof-lower-mqc-counting-main}
Now, we give the proof of \thm{lower-mqc-counting-main}, which is restated below:

\begin{theorem}[Quantum Lower Bound for  Multidimensional Counting]\label{thm:lower-mqc-counting-append}
    Every quantum algorithm that solves the multidimensional counting problem (Definition \ref{def:mqcounting}) w.p. at least $\frac 2 3$ uses at least $\Omega\left(\sqrt{nk}\varepsilon^{-1/2}\right)$ queries to $O_\tau$.
\end{theorem}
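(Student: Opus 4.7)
The plan is to reduce from the direct sum of $k$ independent decisional quantum counting instances and combine \cor{dirsum_theorem} with \thm{lower-quantum-counting}. Set $n' := n/k$ and view $[n]$ as $k$ disjoint blocks each of size $n'$. Given inputs $x^{(1)}, \ldots, x^{(k)} \in \{0,1\}^{n'}$ for $k$ copies of a hard decisional counting instance, I would define $\tau \colon [n] \to [k+1]$ by sending an index $i$ lying in block $j$ to label $j$ whenever $x^{(j)}_{i \bmod n'} = 1$, and to a single ``dump'' label $k+1$ otherwise. Then $n_j = |x^{(j)}|$ for $j \in [k]$, and one query to $O_\tau$ can be simulated by one query to the appropriate $O_{x^{(j)}}$ together with $O(1)$ classical arithmetic to determine the block number and write out the correct label, so the reduction preserves quantum query complexity up to a constant.

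The counting parameters I would pick are $l := \lfloor n'/3 \rfloor$ and $l' := l + \Delta$ with $\Delta := \lceil 4 \varepsilon l \rceil$. Two properties need verification. First, for $\varepsilon < 1/3$ the intervals $[(1-\varepsilon)l,(1+\varepsilon)l]$ and $[(1-\varepsilon)l',(1+\varepsilon)l']$ are disjoint, so thresholding each $\varepsilon$-estimate $\tilde n_j$ at $(l+l')/2$ correctly decides $f_{n',l,l'}(x^{(j)})$ for every $j$. Second, the element $m \in \{l,l'\}$ maximizing $|m - n'/2|$ is $m = l$, which gives $n' - m = \Theta(n')$; plugging into \thm{lower-quantum-counting} then yields
\[
Q\bigl(f_{n',l,l'}\bigr) \;=\; \Omega\!\left(\sqrt{\tfrac{n'(n'-m)}{\Delta}}\right) \;=\; \Omega\!\left(\tfrac{n'}{\sqrt{\varepsilon n'}}\right) \;=\; \Omega\!\left(\sqrt{\tfrac{n'}{\varepsilon}}\right).
\]

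Finally, any quantum algorithm $A$ that solves multidimensional counting with $Q$ queries to $O_\tau$ and success probability at least $2/3$ becomes, via the reduction, a bounded-error quantum algorithm for the direct product $f_{n',l,l'}^{(k)}$ using $O(Q)$ queries. Applying \cor{dirsum_theorem} gives
\[
O(Q) \;\geq\; Q\bigl(f_{n',l,l'}^{(k)}\bigr) \;=\; \Omega\!\bigl(k \cdot Q(f_{n',l,l'})\bigr) \;=\; \Omega\!\left(\sqrt{\tfrac{nk}{\varepsilon}}\right),
\]
yielding the claimed $\Omega\bigl(\sqrt{nk}\,\varepsilon^{-1/2}\bigr)$ bound (the use of $k+1$ parts instead of $k$ only affects hidden constants). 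The main obstacle is picking $l, l', \Delta$ so that both desiderata hold simultaneously: $l$ must sit a constant fraction of $n'$ away from $n'/2$, so that the stronger $\sqrt{n'(n'-m)/\Delta}$ summand of \thm{lower-quantum-counting} is active at scale $\Omega(\sqrt{n'/\varepsilon})$; at the same time $\Delta$ must be $\Theta(\varepsilon n')$, which is precisely the largest scale at which an $\varepsilon$-relative estimate of $n_j$ still fails to distinguish $l$ from $l'$. Balancing these two constraints is what forces the choice $l \approx n'/3$ with $\Delta \approx 4\varepsilon l$; everything else is routine bookkeeping.
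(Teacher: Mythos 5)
Your reduction scaffolding---split $[n]$ into $\Theta(m)$ blocks, embed one decisional-counting instance per block, invoke the direct sum theorem (\cor{dirsum_theorem}) together with \thm{lower-quantum-counting}---matches the paper's proof in structure. The fatal gap is in the choice of counting parameters. You set $l \approx n'/3$ and $\Delta \approx \varepsilon n'$, which makes both thresholds a constant fraction of $n'$. The correct form of the Nayak--Wu bound for distinguishing $|X|=l$ from $|X|=l'$ is $\Omega\bigl(\sqrt{n'/\Delta} + \sqrt{m(n'-m)}/\Delta\bigr)$ (note $m(n'-m)$, not $n'(n'-m)$; the paper's statement of \thm{lower-quantum-counting} contains a typo, which you took literally). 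Plugging your parameters into the correct formula gives $\sqrt{m(n'-m)}/\Delta \approx \Theta(n')/\Theta(\varepsilon n') = \Theta(1/\varepsilon)$, not $\Theta(\sqrt{n'/\varepsilon})$. This is not merely an artifact of the formula: when the marked set occupies a constant fraction of the block, quantum amplitude estimation actually estimates its relative size to precision $\varepsilon$ in $O(1/\varepsilon)$ queries, so $O(k/\varepsilon)$ total queries suffice for your hard instance---strictly below $\sqrt{nk/\varepsilon}$ whenever $n' = n/k > 1/\varepsilon$, the regime of interest. Your instance is simply not hard enough.

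To extract $\sqrt{n'}$ from the counting lower bound you must take $l$ \emph{small}, not a constant fraction of $n'$. The two constraints are: (i) an $\varepsilon$-relative estimate must separate $l$ from $l'$, forcing $\Delta \gtrsim \varepsilon l$; and (ii) $\sqrt{l'(n'-l')}/\Delta \gtrsim \sqrt{n'/\varepsilon}$, which (for $l' \ll n'$) forces $\Delta \lesssim \sqrt{l'\varepsilon}$. Together these require $l \lesssim 1/\varepsilon$, and the bound is maximized by taking $l = \Theta(1/\varepsilon)$ with $\Delta = 1$. That is precisely the paper's choice $T = \lfloor 0.1\varepsilon^{-1}\rfloor$, $l'=T+1$: then $\sqrt{(T+1)(n'-T-1)}/1 = \Theta(\sqrt{n'/\varepsilon})$, and an $\varepsilon$-estimate $\tilde n_j$ of $n_j \in \{T,T+1\}$ has additive error at most $\varepsilon(T+1) < 1/2$, so rounding recovers the exact count. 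Your ``balancing'' intuition led you to a regime where the dependence on the block size $n'$ vanishes from the Nayak--Wu bound; the right regime is the one where the marked set is tiny and the threshold gap is a single element.
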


\begin{proof}
    Let $T = \lfloor 0.1 \varepsilon^{-1} \rfloor$. Assume $M$ is even. We reduce from the problem $f_{n/m,T,T+1}^{m/2}$. By \thm{lower-quantum-counting} and \cor{dirsum_theorem}, $Q\left(f_{n/m,T,T+1}^m\right) = \Omega\left(\sqrt{nm}\varepsilon^{-1/2}\right)$. The reduction applies by defining $\tau_i = 2a + x_i + 1$ for $i = am / 2 + b$ where $1 \le b \le m/2$ , calling the quantum multidimensional counter to get $n_1,n_2,\ldots,n_m$, and outputting $n_2 - T, n_4 - T, \ldots, n_m - T$.
\end{proof}

\subsection{Proof of \thm{lower-clustering-main}}\label{append:proof-lower-clustering-main}

Now, we prove \thm{lower-clustering-main}, which is restated below:
\begin{theorem}[Quantum Lower Bounds for $k$-means and k-median]\label{thm:lower-clustering-append}
Assume that $\varepsilon$ is sufficiently small. Consider the Euclidean $k$-means/median problem on data set $D = \{x_1,\ldots,x_n\} \subset \mathbb R^d$. Assume a quantum oracle $O_x \ket{i, b} := \ket{i, b \oplus x_i}$. Then, every quantum algorithm outputs the followings with probability $2/3$ must have quantum query complexity lower bounds for the following problems:
\begin{itemize}[leftmargin=*]
     \item An $\varepsilon$-coreset: $\Omega\left(\sqrt{nk}\varepsilon^{-1/2}\right)$ for $k$-means and $k$-median (\thm{k-means-coreset-lower});
     
     \item An $\varepsilon$-estimation to the value of the objective function: $\Omega\left(\sqrt{nk}+\sqrt n \varepsilon^{-1/2}\right)$ for $k$-means and $k$-median (\thm{k-means-cost-lower});
    
    \item A center set $C$ such that $\cost(C) \le (1 + \varepsilon)\cost\left(C^*\right)$ where $C^*$ is the optimal solution: $\Omega\left(\sqrt{nk}\varepsilon^{-1/6}\right)$ for $k$-means; $\Omega\left(\sqrt{nk}\varepsilon^{-1/3}\right)$ for $k$-median (\thm{k-means-lower}). 
\end{itemize}
\end{theorem}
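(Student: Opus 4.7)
I would prove each of the three lower bounds in two stages. First, I would establish a sharp bound for the base case $k=1$ by reducing from the decisional quantum counting problem (\thm{lower-quantum-counting}). Then I would amplify by a $\sqrt{k}$ factor by planting $k$ well-separated copies of the base gadget in $\mathbb R^d$ and invoking either the direct-sum corollary (\cor{dirsum_theorem}) or, for the relational approximate-center bound, the perfect composition theorem (\thm{composition_theorem}) applied to the approximate-bits-finding relation $\Approx_k$.

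For the $k=1$ coreset and cost-estimation bounds, I would encode an instance $x\in\{0,1\}^n$ of $f_{n,l,l'}$ as the one-dimensional data set $D_x=\{x_1,\ldots,x_n\}\subset\mathbb R$, simulating the data oracle $O_{D}$ from the given counting oracle. The cost at $c=0$ is $|x|$ for $k$-median (respectively $|x|^2$ for $k$-means), and the optimum cost depends only on $|x|$. Choosing $l=n/2$ and $l'=n/2+\Theta(\varepsilon n)$, the two costs differ by $\Theta(\varepsilon n)$, so any $\varepsilon$-coreset or $\varepsilon$-cost estimate distinguishes them; substituting $\Delta=\Theta(\varepsilon n)$ and $m=l'$ into \thm{lower-quantum-counting} yields the $\Omega(\sqrt{n/\varepsilon})$ bound. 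For the approximate-center variant I would instead exploit the \emph{local curvature} of the cost around its optimum: for $1$-means the cost is locally quadratic with Hessian $\Theta(n)$, so a $(1+\varepsilon)$-approximate center must lie within $O(\sqrt{\varepsilon})$ of the true mean and reveals the hidden count only to additive precision $\Theta(\sqrt{\varepsilon}\,n)$; for $1$-median the cost is locally (piecewise) linear, yielding a proportionally stronger precision. Plugging these precisions into \thm{lower-quantum-counting} gives the corresponding $k=1$ bounds.

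For the $k\ge 2$ amplification I would partition the $n$ inputs into $k$ groups of size $n/k$ and place each group's base gadget inside one of $k$ balls in $\mathbb R^d$ at distances much larger than any gadget's diameter; the separation is tuned so that every near-optimal $k$-center solution must place exactly one center inside each ball, and the global cost decomposes as a sum of $k$ independent $1$-center costs. An $\varepsilon$-coreset on the union thus induces an $\varepsilon$-coreset per gadget, and an $\varepsilon$-cost estimate determines the sum of per-gadget costs. \cor{dirsum_theorem} then converts the $k$ independent $k=1$ bounds into the global bounds $\Omega(\sqrt{nk/\varepsilon})$ for coreset and $\Omega(\sqrt{nk}+\sqrt{n}\,\varepsilon^{-1/2})$ for cost estimation, where the $\sqrt{nk}$ summand comes from the $\varepsilon=\Theta(1)$ specialization of the same construction (reducing to Grover-style detection across the $k$ gadgets). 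For the approximate-center bound the per-gadget error budgets need not be uniform, so I would compose with $\Approx_k$: by a Markov-type averaging argument on the per-gadget squared displacements, a $(1+\varepsilon)$-approximate $k$-center solution is forced to be $(1+O(\varepsilon))$-approximate on at least a $(1-\varepsilon_0)$ fraction of gadgets, which is precisely the guarantee of the relation $\Approx_k$. \thm{composition_theorem} then multiplies $Q(\Approx_k)$ by the per-gadget $k=1$ query complexity to produce the $\Omega(\sqrt{nk}\varepsilon^{-1/6})$ bound for $k$-means and $\Omega(\sqrt{nk}\varepsilon^{-1/3})$ bound for $k$-median.

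The main obstacle is the approximate-center amplification. Beyond verifying that a near-optimal solution cannot ``cheat'' by migrating or merging centers across balls, one must carefully balance the per-gadget displacements $\delta_g$ under the global budget $\sum_g n_g\delta_g^2\le\varepsilon\cdot\opt$ against the fraction of gadgets that $\Approx_k$ is allowed to misanswer; too tight a fraction wastes the composition while too loose a fraction leaves no leverage on the individual counting subproblems. Tuning the Nayak--Wu parameters $(\Delta,m)$ and the relational tolerance $\varepsilon_0$ so that the composed exponent lands exactly at $1/6$ (for $k$-means, whose quadratic local cost gives slack $\Theta(\sqrt{\varepsilon})$ in the center) and at $1/3$ (for $k$-median, whose linear local cost gives a tighter $\Theta(\varepsilon)$ slack) is the technical heart of the argument, and is where the intrinsic asymmetry between the two clustering objectives appears.
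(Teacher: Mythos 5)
The high-level two-stage structure you describe (base case via decisional quantum counting, then $\sqrt k$ amplification by planting $k$ separated gadgets and invoking direct sum or the composition theorem over $\Approx_k$) does match the paper's approach. However, there are three concrete gaps in the details.

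\paragraph{Choice of counting parameters.}
You propose reducing from $f_{n,l,l'}$ with $l=n/2$, $l'=n/2+\Theta(\varepsilon n)$. This choice fails. For $k$-means the optimum $1$-mean cost is $|x|(n-|x|)/n$, whose derivative vanishes at $|x|=n/2$, so the cost gap between $l$ and $l'$ is only $\Theta(\varepsilon^2 n)$, not $\Theta(\varepsilon n)$: an $\varepsilon$-estimate cannot distinguish the two cases at all. Even for $k$-median, where the gap is $\Theta(\varepsilon n)$, the Nayak--Wu bound with $\Delta=\Theta(\varepsilon n)$ and $m\approx n/2$ gives only $\Omega(\sqrt{m(n-m)}/\Delta)=\Omega(1/\varepsilon)$, which is weaker than the target $\Omega(\sqrt{n/\varepsilon})$ whenever $n>1/\varepsilon$. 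The paper instead takes a sparse count, $l=T$, $l'=T+1$ with $T=\Theta(1/\varepsilon)$, giving $\Delta=1$ and hence $\Omega(\sqrt{nT})=\Omega(\sqrt{n/\varepsilon})$, and verifies that an $\varepsilon$-estimate (error $\varepsilon T<1$) still distinguishes $T$ from $T+1$. This parameterization is essential.

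\paragraph{The high-dimensional gadget for the center-set bound.}
Your ``local curvature'' argument is implicitly one-dimensional (the hidden count determines the optimal center linearly, $c^*=a/n$), and it does not reproduce the $\varepsilon^{-1/6}$ exponent. The paper's gadget places $T$ or $T+1$ points as \emph{orthogonal unit vectors} $e_{t_i}$ (plus origins) inside each ball, so the optimal $1$-mean center has norm $1/\sqrt{|T_a|}$. The gap between $1/\sqrt T$ and $1/\sqrt{T+1}$ is $\Theta(T^{-3/2})$, while a $(1+\varepsilon)$-approximate center may deviate by $\Theta(\sqrt{\varepsilon})$; balancing these forces $T=\Theta(\varepsilon^{-1/3})$, which via Nayak--Wu and the direct sum gives exactly $\Omega(\sqrt{nk}\,T^{1/2})=\Omega(\sqrt{nk}\,\varepsilon^{-1/6})$. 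The nonlinearity $|T_a|\mapsto |T_a|^{-1/2}$ is the source of the $1/6$; without this gadget your calculation lands on different (and weaker) exponents.

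\paragraph{The $\sqrt{nk}$ term for cost estimation.}
For the objective-value bound the output is a single number, so the direct-sum theorem (\cor{dirsum_theorem}) does not apply — it only lower-bounds computing the entire vector of $k$ answers. Your hand-wave ``Grover-style detection across the $k$ gadgets'' does not recover $\Omega(\sqrt{nk})$; composing an OR over $k$ gadgets would give $\sqrt k\cdot\sqrt{n/k}=\sqrt n$. The paper instead reduces from a single threshold problem $f_{n,k,k+1}$ on all $n$ inputs (arranging that $|X|=k$ ones yield zero clustering cost and $|X|=k+1$ yield positive cost), which Nayak--Wu directly gives as $\Omega(\sqrt{nk})$.

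The relational composition via $\Approx_k$ and the Markov-type averaging you invoke for the center-set amplification do correspond to what the paper does (the paper proves $Q(\Approx_k\circ f^{(k)})=\Omega(Q(f^{(k)}))$ via a Grover-search correction argument, tuning $\varepsilon_0$), so that part of your plan is on target. But without the correct base-case parameterization and the high-dimensional gadget, the exponents cannot be made to match the statement.
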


We prove these different settings separately as follows.

\subsubsection{Coreset Output}
\begin{theorem}\label{thm:k-means-coreset-lower}
    Assume that $\varepsilon$ is sufficiently small. Consider the Euclidean $(k,z)$-clustering problem on data set $D = \{x_1,\ldots,x_n\} \subset \mathbb R^d$. An oracle $O_x \ket{i, b} := \ket{i, b \oplus x_i}$ is accessible. Then, every quantum algorithm that outputs an $\varepsilon$-coreset w.p. at least $\frac 2 3$ uses at least $\Omega\left(\sqrt{nk}\varepsilon^{-1/2}\right)$ queries to $O_x$.
\end{theorem}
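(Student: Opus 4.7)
The plan is to reduce from the direct-product counting problem $f^{k/2}_{n/k,\,T,\,T+1}$ with $T=\lfloor 0.1\varepsilon^{-1}\rfloor$, following the blueprint of the proof of \thm{lower-mqc-counting-main}. Theorem~\ref{thm:lower-quantum-counting} combined with \cor{dirsum_theorem} lower-bounds this product problem by $\Omega(\sqrt{nk}\varepsilon^{-1/2})$, so it suffices to show that any algorithm that outputs an $\varepsilon$-coreset with success probability $\ge 2/3$ can be compiled, with only a constant factor more queries, into one that solves $f^{k/2}_{n/k,\,T,\,T+1}$.

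For the encoding, choose $k/2$ base points $v_1,\ldots,v_{k/2}\in\mathbb{R}^d$ that are pairwise separated by a distance much larger than $1$, and fix a unit vector $e$. Given the counting input $(x^{(1)},\ldots,x^{(k/2)})$, place the $i$-th point of block $j$ at $v_j$ if $x^{(j)}_i=0$ and at $v_j+e$ if $x^{(j)}_i=1$, for $i\in[n/k]$ and $j\in[k/2]$ (adding infinitesimal generic perturbations to meet the minimum inter-point distance assumption; the analysis is unaffected for sufficiently small perturbations). A single query to the induced data oracle $O_x$ is simulated by one query to the counting oracle, since the block index and the local offset within the block are classical functions of $i\in[n]$ and the location is then determined by a single bit.

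The decoding step is the crux: from one $\varepsilon$-coreset $(S,w)$ we extract every $|X^{(j^*)}|$ by evaluating $\cost(S,\cdot)$ on $k/2$ carefully designed centre sets. For each $j^*\in[k/2]$ define the size-$k$ centre set
\[
C_{j^*} \;:=\; \{v_j : j\in[k/2]\}\ \cup\ \{v_j+e : j\in[k/2]\setminus\{j^*\}\}\ \cup\ \{\text{dummy}\},
\]
where the dummy is placed so far from the data that it is the nearest centre to no point. Because the bases are well separated, every point of a block $j\neq j^*$ coincides with some centre in $C_{j^*}$ and contributes $0$ to the cost, while the ``$1$''-points of block $j^*$ have their unique nearest centre at $v_{j^*}$, at distance exactly $1$. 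Hence $\cost(D,C_{j^*})=|X^{(j^*)}|$ for every $z\ge 1$, and the $(1\pm\varepsilon)$-coreset guarantee yields $|X^{(j^*)}|$ to additive error $\varepsilon T\le 0.1<1/2$, which cleanly separates $T$ from $T+1$.

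Since the coreset guarantee is uniform over all size-$k$ centre sets, a single successful coreset output deterministically decodes all $k/2$ counting instances, so the $2/3$ success probability transfers directly without any union bound loss. The main thing to verify carefully is the centre-budget arithmetic: zeroing out each of the $k/2-1$ non-target blocks costs two centres (one at $v_j$, one at $v_j+e$), block $j^*$ consumes one more at $v_{j^*}$, and the last centre goes to the dummy, for a total of exactly $2(k/2-1)+1+1=k$. This tight accounting forces the reduction to use exactly $k/2$ blocks, and it is what produces the $\sqrt{k}$ factor in the final bound after \cor{dirsum_theorem} is applied.
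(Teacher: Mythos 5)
Your proof is correct and takes essentially the same approach as the paper: the paper's proof reduces to the multidimensional counting problem (whose lower bound is itself derived from $f^{m/2}_{n/m,T,T+1}$), places data points at $\tau_i\cdot B$ on a line, and recovers each $|D_i|$ from the coreset by evaluating it on the size-$k$ center set $\{B,2B,\ldots,iB+1,\ldots,mB\}$; you simply compose these two reductions into one direct reduction from $f^{k/2}_{n/k,T,T+1}$, and use a ``dummy'' far-away center rather than a shifted one to fill the center budget. Both are correct and yield the same $\Omega(\sqrt{nk}\varepsilon^{-1/2})$ bound.
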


\begin{proof}
We reduce from the multidimensional counting problem. The instance is 1-dimensional. Let $B=n^{100}$. The reduction is simply defining $x_i = \tau_i \cdot B$. After getting an $\varepsilon$-coreset from the $(m,z)$-clustering solver, we are able to query an $\varepsilon$-estimate of $\cost_z(D, C_i)$ where $C_i = \{B, 2B, \ldots, i \cdot B + 1, mB\}$, which equals to $|D_i|$, completing the proof.
\end{proof}

\subsubsection{Objective Function Estimation}
\begin{theorem}\label{thm:k-means-cost-lower}
    Assume that $\varepsilon$ is sufficiently small. Consider the Euclidean $(k,z)$-clustering problem on data set $D = \{x_1,\ldots,x_n\} \subset \mathbb R^d$. An oracle $O_x \ket{i, b} := \ket{i, b \oplus x_i}$ is accessible. Then, every quantum algorithm that outputs a real number $\tilde A \in (1 \pm \varepsilon) \min_{C^*} \cost(C^*)$ w.p. at least $\frac 2 3$ uses at least $\Omega\left(\sqrt{nk}+\sqrt n \varepsilon^{-1/2}\right)$ queries to $O_x$.
\end{theorem}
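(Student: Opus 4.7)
The bound $\Omega(\sqrt{nk}+\sqrt n\,\varepsilon^{-1/2})$ is the maximum of two separate lower bounds, and I would prove each independently, then combine.

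For the $\Omega(\sqrt n\,\varepsilon^{-1/2})$ term, I reduce from the decisional quantum-counting problem $f_{n,T,T+1}$ with $T:=\lfloor 0.1\,\varepsilon^{-1}\rfloor$, in the same spirit as the proof of \thm{k-means-coreset-lower} but using the scalar cost output instead of coreset access. Given an input $x\in\{0,1\}^n$, I place point $i$ at position $x_i\in\mathbb R$ and add $k-1$ faraway ``pinned'' clusters at positions $10n,20n,\ldots,10(k-1)n$ (with enough coincident copies at each anchor that the optimal solution necessarily assigns one center per anchor, reducing the remaining problem to a $1$-clustering on the $0/1$ points). A direct computation gives
\begin{equation*}
    \cost^*=|X|\cdot(1-|X|/n)^z+(n-|X|)(|X|/n)^z=\Theta(|X|)
\end{equation*}
when $|X|\le T\ll n$, and the incremental change as $|X|$ moves from $T$ to $T+1$ is $\Theta(1)$, hence the relative gap is $\Theta(1/T)=\Theta(\varepsilon)$, so a $(1\pm\varepsilon)$-estimator (with a sufficiently small implicit constant) resolves $f_{n,T,T+1}$. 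Invoking \thm{lower-quantum-counting} with $n'=n$, $\Delta=1$, $m=T$ then produces the $\Omega(\sqrt{nT})=\Omega(\sqrt n\,\varepsilon^{-1/2})$ lower bound.

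For the $\Omega(\sqrt{nk})$ term, I use a block-structured direct-sum reduction. Set $d=k$ and carve the $n$ points into $k$ disjoint blocks of size $n/k$, with block $i$ living in a small neighbourhood of the anchor $Be_i\in\mathbb R^k$ for a large $B$; the anchor separation forces the optimal $k$-clustering to dedicate exactly one center per block and decouples the optimization into $k$ independent $1$-clustering subproblems. I seed block $i$ with an instance of a constant-gap decisional counting problem $g=f_{n/k,l,l'}$ with $|l-l'|=\Theta(1)$, encoded so that block $i$'s contribution to the total cost takes one of two values on two different per-block scale factors $\gamma_i$. Choosing $\gamma_i$ to grow geometrically with a ratio calibrated to $\Theta(1+\varepsilon)$ makes the total $\cost^*$ uniquely decode the tuple $(g_1,\ldots,g_k)$ from any $(1\pm\varepsilon)$-estimate in the regime $\varepsilon\lesssim 1/k$ (the regime where this term is not already subsumed by the other one). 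Since each $g_i$ has single-instance lower bound $\Omega(\sqrt{n/k})$ by \thm{lower-quantum-counting}, \cor{dirsum_theorem} delivers $\Omega(k\cdot\sqrt{n/k})=\Omega(\sqrt{nk})$ for recovering all $k$ block answers, and therefore for $(1\pm\varepsilon)$-estimating $\cost^*$.

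The main obstacle is the decoding in the second part: a single scalar estimate must jointly reveal $k$ independent bits, which for large $\varepsilon$ is information-theoretically impossible. The saving grace is that the $\sqrt n\,\varepsilon^{-1/2}$ term already dominates $\sqrt{nk}$ whenever $\varepsilon\gtrsim 1/k$, so I only need the block-decoding argument to work in the complementary regime $\varepsilon\lesssim 1/k$, where the geometric spacing $\gamma_i=(1+\Theta(\varepsilon))^i$ is sufficient to separate all $k$ bit-patterns under a $(1\pm\varepsilon)$-multiplicative approximation. The remaining verification (that optimal clustering truly decouples, that $z$-th-power costs behave as predicted for $z\ge 1$, and that the success probability of each sub-oracle call can be boosted by the standard $O(\log k)$ repetition trick without affecting the asymptotic bound) is routine.
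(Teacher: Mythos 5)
Your $\Omega(\sqrt{n}\,\varepsilon^{-1/2})$ argument is essentially the paper's: a reduction from $f_{n,T,T+1}$ with $T=\lfloor 0.1\varepsilon^{-1}\rfloor$ to a $1$-clustering instance (the paper proves it directly for $k=1$; your faraway pins handle general $k$, which works too), and the relative gap between $|X|=T$ and $|X|=T+1$ is $\Theta(1/T)=\Theta(\varepsilon)$.

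Your $\Omega(\sqrt{nk})$ argument has a genuine gap, on two counts. First, the regime analysis is inverted: $\sqrt{n}\varepsilon^{-1/2}\ge\sqrt{nk}$ iff $\varepsilon\le 1/k$, so the $\sqrt{nk}$ term is the binding one precisely when $\varepsilon\gtrsim 1/k$, i.e.\ exactly the regime where you concede the block-decoding fails. You therefore do not get to restrict attention to $\varepsilon\lesssim 1/k$; that is the regime where $\Omega(\sqrt{nk})$ is already subsumed by the other term. Second, even in the regime $\varepsilon\lesssim 1/k$ the geometric encoding does not deliver $k$ bits. A $(1\pm\varepsilon)$-multiplicative estimate of a quantity ranging over $[1,\mathrm{poly}(n)]$ can only resolve $O\bigl(\log\log n+\log(1/\varepsilon)\bigr)$ bits of information, so joint decoding of $k$ independent bits requires $\varepsilon\lesssim 2^{-k}$, not $\varepsilon\lesssim 1/k$. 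Concretely, with ratio $R=1+\Theta(\varepsilon)$ the $2^k$ possible costs all lie within a range whose max-to-min ratio is $R^{\Theta(k)}=e^{O(\varepsilon k)}=O(1)$, and you cannot pack $2^k$ intervals of multiplicative width $1+\Theta(\varepsilon)$ into a bounded multiplicative range unless $\varepsilon\lesssim 2^{-k}$. This leaves a large window $2^{-k}\lesssim\varepsilon\lesssim 1$ where your reduction simply cannot decode.

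The paper avoids all of this with a much lighter construction for the $\Omega(\sqrt{nk})$ term: reduce from a \emph{single} decisional counting instance $f_{n,k,k+1}$ (no direct sum, no block decomposition), mapping $x_i=0\mapsto 0$ and $x_i=1\mapsto i$ on the real line. When $|X|$ is at its smaller threshold the instance has at most $k$ distinct locations (after the obvious off-by-one correction to $f_{n,k-1,k}$) and the optimal $k$-clustering cost is exactly $0$; when $|X|$ is one larger it is strictly positive. A $(1\pm\varepsilon)$-estimate of a quantity that is exactly $0$ must output $0$, so this distinguishes the two cases for any $\varepsilon<1$, and Nayak--Wu already gives $\Omega(\sqrt{nk})$ for this single instance. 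No composition theorem is needed here. You should adopt this zero-versus-positive reduction for the second term.
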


\begin{proof}
Our proof has two parts: $\Omega(\sqrt{nk})$ and $\Omega(\sqrt n \varepsilon^{-1/2})$.

First, we prove that the complexity is $\Omega(\sqrt{nk})$. We can assume that $k = o(n)$. We reduce from the problem $f_{n, k, k + 1}$, which has lower bound $\Omega(\sqrt{nk})$ by \thm{lower-quantum-counting}. The instance is one-dimensional. We set $x'_i = 0$ if $x_i = 0$ and $x'_i = i$ if $x_i=1$. Then, we estimate the $(k,z)$-clustering objective function. In the 0-case, the objective function value must be 0 (we have $k$ centers for $k$ points s.t. $x_i=1$); in the 1-case, the objective function value is greater than 0. Thus, an $\varepsilon$-approximation is able to distinguish two cases, completing this part.

Second, we prove that the complexity is $\Omega(\sqrt{n}\varepsilon^{-1/2})$, even for $k=1$. Let $T = \lfloor 0.1 \varepsilon^{-1} \rfloor$. We may assume $\varepsilon^{-1} = o(n)$. We reduce from the $f_{n,T,T+1}$. The instance is one-dimensional. The reduction is simply setting $x'_i = x_i$. Let $a = \sum_{i=1}^n x_i$. We need to distinguish two cases: $a = T$ and $a = T + 1$. The objective function is $f(x) = a|x|^z + (n - a)|1 - x|^z$. Assume $z > 1$. By calculus, we can see $\min_x f(x) = \frac{a(n-a)}{\left((n-a)^{1/(z-1)} + a^{1/(z-1)}\right)^{z-1}}$. Thus, to prove that an $\varepsilon$-estimation can distinguish two cases, we must prove that, 
$$\frac{T(n-T)}{\left((n-T)^{1/(z-1)} + T^{1/(z-1)}\right)^{z-1}} < (1 - \varepsilon)\frac{(T+1)(n-T-1)}{\left((n-T-1)^{1/(z-1)} + (T+1)^{1/(z-1)}\right)^{z-1}}.$$ 
Because $T = o(n)$, $\frac{\mathrm{RHS}}{\mathrm{LHS}} \sim (1-\varepsilon) \frac{T+1}{T} > 1$ for sufficiently small $\varepsilon$. As for $z = 1$, $f(x) = a$ so the above argument is also valid.
\end{proof}

\subsubsection{Center Set Output}
\begin{theorem}\label{thm:k-means-lower}
    Assume that $\varepsilon$ is sufficiently small. Consider the Euclidean $(k,z)$-clustering problem on data set $D = \{x_1,\ldots,x_n\} \subset \mathbb R^d$. An oracle $O_x \ket{i, b} := \ket{i, b \oplus x_i}$ is accessible where the second register saves the binary representation of a real number and can have any polynomial number of qubits. Then, every quantum algorithm that outputs the optimal centers $C = \{c_1^,\ldots,c_k\}$ such that $\cost(C) \le (1 + \varepsilon) \min_{C^*} \cost(C^*)$ w.p. at least $\frac 2 3$ uses at least $\Omega\left(\min\left(n, \sqrt{nk}\varepsilon^{-1/3z}\right)\right)$ queries to $O_x$.
\end{theorem}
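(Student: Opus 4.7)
The plan is to reduce from the composed relational problem $\Approx_k \circ f_{n', T, T+1}^{(k)}$ with $n' = n/k$ and $T = \Theta(\varepsilon^{-2/(3z)})$. \thm{lower-quantum-counting} gives $Q(f_{n', T, T+1}) = \Omega(\sqrt{n' T})$, and a standard quantum adversary argument gives $Q(\Approx_k) = \Omega(k)$ (outputting a bit-string within Hamming radius $\varepsilon_0 k$ requires reading $\Omega(k)$ bits of information). Combining these via the perfect composition theorem (\thm{composition_theorem}) in its relational form --- or equivalently via a random-position embedding that turns any algorithm for the composed problem into one for a single counting instance, using Markov on the $\varepsilon_0 k$ Hamming-distance promise --- yields
\[
Q\bigl(\Approx_k \circ f_{n', T, T+1}^{(k)}\bigr) = \Omega\bigl(k\sqrt{n'T}\bigr) = \Omega\bigl(\sqrt{nkT}\bigr) = \Omega\bigl(\sqrt{nk}\,\varepsilon^{-1/(3z)}\bigr).
\]

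The reduction itself constructs a $(k,z)$-clustering instance $D \subset \mathbb R^d$ partitioned into $k$ ``gadgets'' that are well separated along an auxiliary coordinate, so that any sufficiently-good solution assigns exactly one center per gadget and both the cost and the center placements decouple across gadgets. In gadget $j$ I place $n' - a_j$ ``bulk'' points at the gadget's origin and $a_j \in \{T, T+1\}$ ``informative'' points at distance $L$, with $L$ chosen so that the true local optimum $c^*_j$ sits at distinguishable positions depending on whether $a_j = T$ or $a_j = T+1$. A $(1+\varepsilon)$-approximate solution then constrains each output center to lie within some radius $\delta_j$ of $c^*_j$; expanding the $z$-th-power cost $f_j(c^*_j + \delta) - f_j(c^*_j)$ around the optimum gives a lower bound $\gtrsim \delta^r$ for some exponent $r \in \{2, z\}$, and summing the local slacks with Markov forces $|\delta_j|$ below the $T$-vs-$(T+1)$ resolution threshold on a $(1 - \varepsilon_0)k$ subset of indices. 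The resulting bit-vector is an $\Approx_k$-valid output for the composed instance, completing the reduction. The $\min$ with $n$ in the stated bound accounts for the regime $\varepsilon < (k/n)^{3z/2}$, in which the required $T$ would exceed the trivial upper bound $n/(2k)$ and the construction no longer applies.

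The main obstacle is the delicate parameter tuning that produces the exponent $1/(3z)$. For $z \geq 2$ one must correctly identify which regime (quadratic curvature vs.\ pure $z$-th-power) governs the cost penalty at the scale where the $T$-vs-$(T+1)$ discriminator sits, and match $L$ and $n'$ so that the threshold shift for confusing $a_j$ equals the per-gadget slack exactly at $T = \Theta(\varepsilon^{-2/(3z)})$. The case $z = 1$ (median) is piecewise linear rather than strictly convex and needs a tailored near-balanced gadget --- for instance, one with a single tie-breaking outlier that makes the median jump by a controlled amount when $a_j$ changes --- to recover the $\varepsilon^{-1/3}$ exponent via a slope-gap argument.
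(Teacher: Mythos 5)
Your high-level plan — reduce from the relational problem $\Approx_k \circ f_{n/k,T,T+1}^{(k)}$ with $T = \Theta(\varepsilon^{-2/(3z)})$, and use \thm{lower-quantum-counting} plus a direct-sum argument to get $\Omega\bigl(\sqrt{nk}\,\varepsilon^{-1/(3z)}\bigr)$ — matches the paper's. However, two load-bearing pieces of your proposal are missing or would fail as described.

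\paragraph{The gadget construction is underspecified and, as described, does not work.}
You place $n'-a_j$ bulk points at each gadget's origin and $a_j \in \{T,T+1\}$ informative points ``at distance $L$'' in the \emph{same} cluster. With one center per gadget (so $k$ total), the center is pinned near the bulk; its shift when $a_j$ increases by one is of order $L/n'$ while the per-gadget cost is of order $TL^2$, so the signal-to-noise scaling gives $T = O(1/(\varepsilon n'))$ rather than the needed $T = \Theta(\varepsilon^{-2/(3z)})$ --- the final bound collapses. The paper's construction contains two ingredients you would need. First, it is a $2k$-means instance: the bulk points sit at a \emph{separate}, well-separated location $v_{2a}$ that gets its own free center, so the bulk contributes zero cost and does not dilute the signal. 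Second, the $a_j$ informative points are placed at \emph{distinct orthonormal basis vectors} $v_{2a+1} + e_{t_i}$ (with $t_i$ random in $\{2,\dots,d\}$, $d = 100\log k/\varepsilon^2$, distinct w.h.p.). This is the key trick: the 1-mean optimum of $T$ distinct unit vectors has norm exactly $1/\sqrt{T}$, so the signal (norm gap $\approx T^{-3/2}$), the per-cluster cost ($\approx T$), and the curvature penalty $T\|\delta\|^2$ all scale in $T$ in a way that pins down $T^3 \lesssim \varepsilon_0/\varepsilon$, hence $T = \Theta(\varepsilon^{-1/3})$ for $z=2$. Your vague ``distance-$L$'' placement and ``expand around the optimum'' sketch does not recover this; the orthonormal placement is essential and not a tuning detail.

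\paragraph{Your lower bound for the composed relational problem is not established.}
You invoke a ``relational form'' of the perfect composition theorem, which is not what \thm{composition_theorem} gives (that theorem is stated for functions), and you offer as an alternative a ``random-position embedding'' that produces an algorithm for a \emph{single} counting instance — but that only yields $\Omega(\sqrt{n'T})$, not the required $\Omega(k\sqrt{n'T})$. The paper's actual argument is different and you would need it: it first lower-bounds the \emph{exact} direct product $\mathcal P' = f_{n/k,T,T+1}^{(k)}$ by $\Theta(\sqrt{nk}\,\varepsilon^{-1/6})$ via \cor{dirsum_theorem}, then shows any algorithm for $\mathcal P = \Approx_k \circ f^{(k)}$ can be converted into one for $\mathcal P'$ by Grover-searching and fixing the at most $\varepsilon_0 k$ erroneous coordinates, costing an additional $O\bigl(\varepsilon_0^{5/6}\sqrt{nk}\,\varepsilon^{-1/6}\bigr)$ queries; taking $\varepsilon_0$ a small constant makes this cleanup negligible, forcing $Q(\mathcal P) = \Omega(\sqrt{nk}\,\varepsilon^{-1/6})$. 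Without this (or some other concrete lower-bound argument for the approximate relational problem), your reduction chain does not close.
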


\begin{proof}
For convenience, we assume that $n$ is a multiple of $k$ and focus on the $z = 2$ case (the proof for the $z \ne 2$ case is similar). We only need to prove when $\frac{1}{\varepsilon^{-1/3}} = o(k)$. Let $T = \left\lfloor (16\varepsilon_0\varepsilon)^{-1/3} \right\rfloor-1$ where $\varepsilon_0$ is to be defined. We will reduce an instance of the $2k$-means problem from the problem $\mathcal P = \Approx_{k} \circ f_{n/k,T,T+1}^{(k)}$.

Intuitively, solving the problem $\mathcal P$ need to solve $k$ independent cases of a quantum counting problem (distinguishing $l$ 1s from $l'$ 1s), but only need to be correct on a constant fraction of instances. We prove that the quantum algorithm must cost $k$ times the queries of the $k=1$ case, which can be lower bounded by \thm{lower-quantum-counting}.

Now we describe the reduction. The hard instance is consist of $k$ unit balls far apart and each unit ball has $T$ or $T+1$ unit vectors and $n-T$ or $n-T+1$ origins on it. Let $\mathcal A$ be an optimal algorithm for the $2k$-means problem. The input of the problem $\mathcal P$ is $x \in S^k$ where $S = \{x \in \{0,1\}^{n/k} : |X| = T \text{ or } T+1\}$. Let $x = x_1x_2\ldots x_n$. We map each $x_i$ to a point in $\mathbb R^d$ for $d = 100\log k / \varepsilon^2$. Let $B = n^{100}$ and $i = ak + b$ for $0 \le a < k,1 \le b \le n/k$. Define $v_i = (i \cdot B, 0, \ldots, 0)$. If $x_i = 0$, we map it to the point $v_{2a}$; if $x_i = 1$, we draw $t_i \gets \{2, 3, \ldots, d\}$ uniformly and randomly. Then we set the point as $v_{2a+1} + e_{t_i}$ where $\{e_1,e_2,\ldots,e_d\}$ is the standard basis of $\mathbb R^d$. Note that by the union bound and the Birthday Paradox, $\{t_i\}$ are distinct in each group of size $n/k$ w.h.p. We condition on this from now on. Since the reduction is classical, simple, and local, we can implement the oracle for the $2k$-means problem directly. Finally, we call $\mathcal A$ on the above instance and output $(\round(t(\| c_2 - v_1 \|_2)), \ldots, \round(t(\| c_{2i} - v_{2i-1} \|_2)), \ldots, \round(t(\| c_{2n} - v_{2n-1} \|_2)))$, where:

\begin{itemize}
    \item $t(x) = \frac{1}{\frac{1}{\sqrt{T}} - \frac{1}{\sqrt{T+1}}} \left[\frac{1}{\sqrt{T}}-\min(\max(x, \frac{1}{\sqrt{T+1}}), \frac{1}{\sqrt{T}})\right]$ is a normalizing mapping;\\
    \item $\round(x) = \left\{\begin{array}{cc}
        0 & x \le \frac 1 2 \\
        1 & x > \frac 1 2
    \end{array}\right.$. 
\end{itemize}

By easy adjustments (we can assume that there is a point in each of $2k$ balls centering at $v_i$ with radius $B/5$ because otherwise the cost is larger than $B/10$, which is very large; then, we can move each center to its nearest point on the unit ball centered at $v_i$), we can assume without loss of generality that the solutions outputted by the $2k$-means solver have the following properties:
\begin{enumerate}
    \item $c_{2a-1} = v_{2a-2}$ for $1 \le a \le k$;
    \item $\|c_{2a}, v_{2a-1}\|_2 \le 1$.
\end{enumerate}
Then, the cost of the clustering can be seen as the sum of costs of $k$ independent $1$-mean problems and each of the problem has size $T$ or $T+1$. It is well known that, in the $1$-mean problem of size $n$, $\cost(c) = \cost(c^*) + n\| c - c^* \|^2$ where $c^* = \frac{x_1+x_2+\cdots+x_n}{n}$ is the optimal center. We decompose $c_{2a} = v_{2a-1} + g_{a}$. Let $T_a = \{e_{t_i} : x_i = 1, (a-1)k < i \le ak \}$. (Recall that $|T_a| = T$ or $T+1$.) Define $g_a^* = \frac{1}{|T_a|}\sum_{e \in T_a} e$. Then, the clustering is $\varepsilon$-optimal if and only if
\begin{align}
    &\sum_{a=1}^k |T_a| \lVert g_a - g_a^* \rVert^2 < \varepsilon \cdot \sum_{a=1}^k \frac{1}{2|T_a|}\sum_{x \ne y \in T_a} \dist(x,y)^2\\
    \Longrightarrow\quad  &   \sum_{a=1}^k |T_a| \left(\| g_a \| - \|g_a^*\|\right)^2 < \varepsilon \cdot \sum_{a=1}^k (|T_a| - 1)\\
    \Longrightarrow\quad & T \cdot \sum_{a=1}^k \left(\| g_a \| - \frac{1}{\sqrt{|T_a|}}\right)^2 < \varepsilon \cdot k \cdot (T - 1)\\
    \Longrightarrow\quad & \sum_{a=1}^k \left(\| g_a \| - \frac{1}{\sqrt{|T_a|}}\right)^2 < \varepsilon \cdot k.
\end{align}

Note that $t\left(\frac{1}{\sqrt{|T_a|}}\right)$ gives $f_{n/k,T,T+1}$ in the $i$-th block of size $n/k$, justifying the inner function of the problem $\mathcal P$.

\begin{align}
    \Longrightarrow\quad & \left(\frac{1}{\sqrt{T}} - \frac{1}{\sqrt{T+1}}\right)^2\sum_{a=1}^k \left(t(\| g_a \|) - t\left(\frac{1}{\sqrt{|T_a|}}\right)\right)^2 < \varepsilon \cdot k \\
    \Longrightarrow\quad & \frac{1}{4T \cdot (T + 1)^2} \sum_{a=1}^k \left(t(\| g_a \|) - t\left(\frac{1}{\sqrt{|T_a|}}\right)\right)^2 < \frac{\varepsilon_0}{16(T+1)^3} \cdot k \\
    \Longrightarrow\quad & \sum_{a=1}^k \left(t(\| g_a \|) - t\left(\frac{1}{\sqrt{|T_a|}}\right)\right)^2 < \frac{\varepsilon_0}{4} \cdot k. 
\end{align}

Intuitively, the $k$-means solver needs to solve $k$ independent cases of $f_{n/k,T,T+1}$ where the right answers are $t\left(|T_a|^{-1/2}\right)$. $t(\| g_a \|) \in [0,1]$ are a fractional guess in $[0,1]^k$ of $\{0,1\}^k$. For convenience, we round the output. A simple lemma is required to bound the error of rounding:

\begin{lemma}
  Given $x_1, x_2, \ldots, x_k \in [0,1]$ and $y_1, y_2, \ldots, y_k \in \{0,1\}$, we have that
  \[ \sum_{i=1}^k \bm 1_{\round(x_i) \ne y_i} \le 4\sum_{i=1}^k (x_i - y_i)^2 \].
\end{lemma}
\begin{proof}
    One has:
    \begin{align*}
        & \sum_{i=1}^k (x_i - y_i)^2 \\
        = & \sum_{\round(x_i) \ne y_i} (x_i - y_i)^2 + \sum_{\round(x_i) = y_i} (x_i - y_i)^2\\
        \ge & \frac 1 4 \sum_{t(x_i) \ne y_i} 1. \qedhere
    \end{align*}
\end{proof}

Back to the proof of \thm{k-means-lower}. Plugging $x_i = t(\| g_a \|)$ and $y_i = t\left(\frac{1}{\sqrt{|T_a|}}\right)$ in the above lemma, we have that the hamming distance between the output of our solver for $\mathcal P$ and $f_{n/k,T,T+1}^{(k)}$ is less than $\varepsilon_0 k$, so it is indeed a solver for the relation problem $\mathcal P$.

Now it is sufficiently to prove the lower bound for the problem $\mathcal P$. Consider another problem defined simply as $\mathcal P' = f_{n/k,T,T+1}^{(k)}$. Applying \cor{dirsum_theorem}, we have that $Q\left(\mathcal P'\right) = k Q(f_{n/k,T,T+1})$. By \thm{lower-quantum-counting}, $Q(f_{n/k,T,T+1}) = \Theta\left(\sqrt{n/k T}\right)$. Hence, $Q\left(\mathcal P'\right) = \Theta\left(\sqrt{nk}\epsilon^{-1/6}\right)$.

Thus, there exists a constant $C_0 > 0$ such that every quantum algorithm that solves $\mathcal P'$ w.p. at least $\frac 2 3$ uses at least $C_0 \sqrt{nk}\varepsilon^{-1/6}$ queries. Let $\mathcal A$ query the oracle for $t$ times. We construct an algorithm $\mathcal A'(x)$ for $\mathcal P'$ from $\mathcal A$: $\mathcal A'$ calls $c \leftarrow \mathcal A(x)$ and then uses the Grover search (\lem{getall}) to find the set $S = \{i \in [k] : f_{n/k,T,T+1}^{(k)}(x)_i \ne c_i \}$ and then flips the bits of $c$ in $S$ and output $c$. By the definition of $\mathcal P'$, $|S| \le \varepsilon_0 \cdot k$. And $f_{n/k,T,T+1}^{(k)}(x)_i$ can be computed by $\frac{\pi}{2}\sqrt{nT/k} + n/k \le 2\sqrt{nT/k}$ queries by the Grover search too. Thus, finding $S$ costs $2(\frac{\pi}{2}\sqrt{\varepsilon_0 \cdot k \cdot k} + \varepsilon_0 \cdot k)\sqrt{nT/k} \le 100 \varepsilon_0^{5/6} \sqrt{nk }\varepsilon^{-1/6}$ queries. We then have $t + 100 \varepsilon_0^{5/6} \sqrt{nk }\varepsilon^{-1/6} \ge C_0 \sqrt{nk}\varepsilon^{-1/6}$. Now set $\varepsilon_0 = \left(\frac{C_0}{1000}\right)^{6/5}$ and solve the inequality, we get $t \ge 0.9 C_0 \sqrt{nk}\varepsilon^{-1/6}$, completing the proof.
\end{proof}


\end{document}